\documentclass[11pt]{article}
\usepackage{amsmath,amsthm,amssymb}
\usepackage[numbers]{natbib}
\usepackage[margin=1in]{geometry}
\usepackage{cleveref}
\usepackage{thmtools}
\usepackage{subcaption}
\usepackage{adjustbox}
\usepackage{mathtools}

\usepackage{algorithm}
\crefname{algocf}{alg.}{algs.}
\Crefname{algocf}{Algorithm}{Algorithms}
\usepackage{algpseudocode}
\usepackage[mode=buildnew]{standalone}

\usepackage{tikz}
\usetikzlibrary{graphs,graphs.standard}
\usetikzlibrary{positioning,shapes.geometric}
\usetikzlibrary{arrows.meta}
\usepackage{enumitem}


\usepackage{xcolor}
\usepackage{todonotes}


\newtheorem{thm}{Theorem}
\newtheorem{lem}[thm]{Lemma}
\newtheorem{cor}[thm]{Corollary}

\newtheorem{observation}{Observation}

\newtheorem*{defn}{Definition}

\usepackage{tabularx,lipsum,environ}
\makeatletter
\newcommand{\problemtitle}[1]{\gdef\@problemtitle{#1}}
\newcommand{\probleminput}[1]{\gdef\@probleminput{#1}}
\newcommand{\problemquestion}[1]{\gdef\@problemquestion{#1}}
\NewEnviron{problem}{
  \problemtitle{}\probleminput{}\problemquestion{}
  \BODY
  \par\addvspace{.5\baselineskip}
  \noindent
  \begin{tabularx}{\textwidth}{@{\hspace{\parindent}} l X c}
    \multicolumn{2}{@{\hspace{\parindent}}l}{\@problemtitle} \\
    \textbf{Input:} & \@probleminput \\
    \textbf{Question:} & \@problemquestion
  \end{tabularx}
  \par\addvspace{.5\baselineskip}
}
\makeatother

\title{Eliminating Majority Illusion is Easy}

\author{Jack Dippel\thanks{McGill University: {\tt jack.dippel@mail.mcgill.ca}}
\and
Max Dupré la Tour\thanks{McGill University:
{\tt max.duprelatour@mail.mcgill.ca}}
\and
April Niu\thanks{McGill University: {\tt yuexing.niu@mail.mcgill.ca}}
 \and
Sanjukta Roy\thanks{University of Leeds: {\tt s.roy@leeds.ac.uk}}
\and
Adrian Vetta\thanks{McGill University: {\tt adrian.vetta@mcgill.ca}}
}

\date{}
\begin{document}

\bibliographystyle{plainnat}

\renewcommand{\thefootnote}{\arabic{footnote}}

\maketitle

\begin{abstract}
Majority Illusion is a phenomenon in social networks wherein the decision by the majority of the network is not the same as one's personal social circle's majority, leading to an incorrect perception of the majority in a large network. In this paper, we present polynomial-time algorithms which can eliminate majority illusion in a network by altering as few connections as possible. Additionally, we prove that the more general problem of ensuring all neighbourhoods in the network are at least a $p$-fraction of the majority is NP-hard for most values of $p$.
\end{abstract}

\section{Introduction}\label{sec:intro}

Perceptions are liable to become distorted on a social network.
An exemplar of this is the {\em friendship paradox}~\cite{Fel91}: on average, a person has fewer
friends than their friends do.\footnote{This is a simple consequence of the fact that a higher degree node appears in more neighbourhoods than a lower degree node.}
Similarly, on average, a researcher is less productive than their co-authors~\cite{EJ14, BLA16}.
This network characteristic, whereby local observations can lead to erroneous conclusions regarding a global state,  
was dubbed {\em majority illusion} by Lerman et al.~\cite{Lerman2016}. This nomenclature relates to situations
where a minority viewpoint is perceived to be the majority viewpoint~\cite{Stewart2019}.



One of the simplest manifestation of majority illusion concerns 
election. Let's model a two-party election by a graph $G=(V,E)$
where each node represents a voter and each edge connects two voters that are neighbours (or friends).
Furthermore, imagine party preference is given by a colouring function $f:V\rightarrow \{B, R\}$. If there are more blue nodes than red nodes then we say blue is the {\em majority}.
A node $v\in V$ is then under {\em (majority) illusion} if it has strictly more red neighbours than blue neighbours.
It is easy to construct instances where some nodes are under majority illusion; see Figure~\ref{fig:examples}(a).
Much more surprising is the existence of instances where {\em every} node is under illusion; see Figure~\ref{fig:examples}(b). 

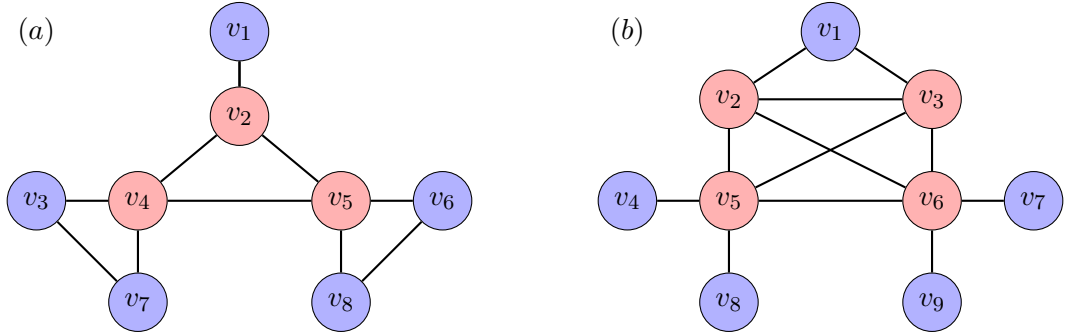
\begin{figure}[ht!]
    \centering
\tikzstyle{node}=[ draw=black,
 shape=circle, minimum size=10pt]
 \tikzstyle{edge}=[thick, -] 
\begin{tikzpicture}[scale=0.45]
\node at (-6,5) {$(a)$};
\node [style=node,fill=blue!30] (V8) at (3,-3) {$v_8$};
\node [style=node,fill=red!30] (V4) at (-3, 0) {$v_4$};
 \node [style=node,fill=red!30] (V2) at (0, 2.5) {$v_2$};
 \node [style=node,fill=blue!30] (V1) at (0, 5) {$v_1$};
\node [style=node,fill=red!30] (V5) at (3, 0) {$v_5$};
\node [style=node,fill=blue!30] (V6) at (6, 0) {$v_6$};
\node [style=node,fill=blue!30] (V3) at (-6, 0) {$v_3$};
\node [style=node,fill=blue!30] (V7) at (-3, -3) {$v_7$};
    \draw [style=edge] (V4) -- (V3);
    \draw [style=edge] (V4) -- (V7);
    \draw [style=edge] (V5) -- (V8);
    \draw [style=edge] (V5) -- (V6);
    \draw [style=edge] (V4) -- (V2);
    \draw [style=edge] (V2) -- (V5);
    \draw [style=edge] (V2) -- (V1);
    \draw [style=edge] (V8) -- (V6);
    \draw [style=edge] (V7) -- (V3);
    \draw [style=edge] (V1) -- (V2);
    \draw [style=edge] (V4) -- (V5);
\end{tikzpicture}
\qquad\qquad
\begin{tikzpicture}[scale=0.45]
\node at (-6,5) {$(b)$};
\node [style=node,fill=blue!30] (V9) at (3,-3) {$v_9$};
\node [style=node,fill=red!30] (V5) at (-3, 0) {$v_5$};
 \node [style=node,fill=red!30] (V3) at (3, 3) {$v_3$};
 \node [style=node,fill=blue!30] (V1) at (0, 5) {$v_1$};
 \node [style=node,fill=red!30] (V2) at (-3, 3) {$v_2$};
\node [style=node,fill=red!30] (V6) at (3, 0) {$v_6$};
\node [style=node,fill=blue!30] (V7) at (6, 0) {$v_7$};
\node [style=node,fill=blue!30] (V4) at (-6, 0) {$v_4$};
\node [style=node,fill=blue!30] (V8) at (-3, -3) {$v_8$};
    \draw [style=edge] (V5) -- (V4);
    \draw [style=edge] (V5) -- (V8);
    \draw [style=edge] (V6) -- (V9);
    \draw [style=edge] (V6) -- (V7);
    \draw [style=edge] (V5) -- (V3);
    \draw [style=edge] (V3) -- (V6);
    \draw [style=edge] (V5) -- (V2);
    \draw [style=edge] (V2) -- (V6);
    \draw [style=edge] (V3) -- (V1);
    \draw [style=edge] (V2) -- (V1);
      \draw [style=edge] (V2) -- (V3);
        \draw [style=edge] (V5) -- (V6);
\end{tikzpicture}
    \caption{In network (a) there are two nodes, $v_1$ and $v_2$, under majority illusion; in network (b) every node suffers illusion!}
    \label{fig:examples}
\end{figure}

Network topology can have a big impact on misinformation. 
Indeed, electoral networks are very susceptible to majority illusion 
because their underlying network topology has a social determinant.
For example, {\em homophily}, the tendency for humans to connect with similar people, contributes to the creation of information bubbles and polarization~\cite{McPherson2001}.

Another topical example of majority illusion concerns vaccinations~\cite{Johnson2020}.
Here node colours may indicate vaccinated or unvaccinated.
If a personal decision to vaccinate oneself is influenced by 
the actions of people in your local social network
then misinformation here may have significant consequences.
Moreover, a smaller number of influential high degree nodes
may suffice to cause major distortions.
An illusion can occur even when there are more than two alternatives and each node selects one alternative. Then, a node under illusion has strictly less blue neighbors (i.e., the neighbors select the winning alternative) compared to red neighbors (denotes the set of all non-winning alternatives).
Indeed, this observation indicates the relevance of majority illusion to advertising, campaigning, and influence maximization~\cite{becker2023improving,zhou2021maximizing}.
Moreover, the effects of illusion have been studied in many disparate areas, including voting theory~\cite{bara2021predicting,castiglioni2020election,doucette2019inferring,wilder2018controlling}, participatory budgeting and the allocation of public goods~\cite{kempe2020inducing,yu2021altruism}. 



These applications highlighting the importance of creating social networks that give everyone fair and unbiased information~\cite{Johnson2020}.
That is the focus of this paper.
Specifically, to determine whether or not majority illusion can be eliminated in a network? If so, can we do this optimally in polynomial time?

\subsection{The Illusion Elimination Problem}
 To answer these questions we first study a class of {\em majority illusion elimination problems}. 
 We are given a social network $G=(V,E)$, a non-negative integer $k$, and
 a voting function $f:V\rightarrow \{B,R\}$, where blue is the strict \emph{majority}. Can we eliminate (majority) illusion, for every node, by altering (adding and/or deleting) at most $k$ edges?
That is, after the alteration, no node can have a strict majority of
red nodes in its neighbourhood.
 
 Observe that, since we may remove edges of $G$ or add non-edges of $G$,
there are three variants of the majority illusion elimination problems
where we can only add edges, only remove edges, or both add and remove edges.
Formally, this gives the following three decision problems, respectively:

\begin{problem}
  \problemtitle{\sc Majority Illusion Addition Elimination (MIAE)}
  \probleminput{A graph $G = (V,E)$, a function $f : V \mapsto \{B,R\}$, and a non-negative integer $k$.}
  \problemquestion{Is there an edge set $E'$ on $V$ with $E \subset E', |E'\setminus E| \leq k$ such that no node is under illusion in $G' = (V,E')$ under $f$?}
\end{problem}

\begin{problem}
  \problemtitle{\sc Majority Illusion Removal Elimination (MIRE)}
  \probleminput{A graph $G = (V,E)$, a function $f : V \mapsto \{B,R\}$, and a non-negative integer $k$.}
  \problemquestion{Is there an edge set $E'$ on $V$ with $E' \subseteq E, |E\setminus E'| \leq k$ such that no node is under illusion in $G' = (V,E')$ under $f$?}
\end{problem}

\begin{problem}
  \problemtitle{\sc Majority Illusion Elimination (MIE)}
  \probleminput{A graph $G = (V,E)$, a function $f : V \mapsto \{B,R\}$, and a non-negative integer $k$.}
  \problemquestion{Is there an edge set $E'$ on $V$ with $|E\setminus E'|+|E'\setminus E| \leq k$ such that no node is under illusion in $G' = (V,E')$ under $f$?}
\end{problem}

It may be helpful to illustrate these three problems using the two examples we encountered in Figure~\ref{fig:examples}.
Figure~\ref{fig:semi-illusion} considers the example from Figure~\ref{fig:examples}(a) where only
two nodes, $v_1$ and $v_2$, are under majority illusion. It presents
optimal solutions that require the minimum number of edge alterations
for the three problems (that is, solutions to 
optimization versions of the above decision problems).
Similarly, Figure~\ref{fig:complete-illusion} considers the example Figure~\ref{fig:examples}(b) where every node is under illusion and gives the three corresponding optimal 
solutions to eliminate illusion.

\begin{figure}[ht!]
    \centering
\tikzstyle{node}=[ draw=black,
 shape=circle, minimum size=10pt]
 \tikzstyle{edge}=[thick, -]
\begin{tikzpicture}[scale=0.45]
\node at (-6,5) {$(a)$};
\node [style=node,fill=blue!30] (V8) at (3,-3) {$v_8$};
\node [style=node,fill=red!30] (V4) at (-3, 0) {$v_4$};
 \node [style=node,fill=red!30] (V2) at (0, 2.5) {$v_2$};
 \node [style=node,fill=blue!30] (V1) at (0, 5) {$v_1$};
\node [style=node,fill=red!30] (V5) at (3, 0) {$v_5$};
\node [style=node,fill=blue!30] (V6) at (6, 0) {$v_6$};
\node [style=node,fill=blue!30] (V3) at (-6, 0) {$v_3$};
\node [style=node,fill=blue!30] (V7) at (-3, -3) {$v_7$};
    \draw [style=edge] (V4) -- (V3);
    \draw [style=edge] (V4) -- (V7);
    \draw [style=edge] (V5) -- (V8);
    \draw [style=edge] (V5) -- (V6);
    \draw [style=edge] (V4) -- (V2);
    \draw [style=edge] (V2) -- (V5);
    \draw [style=edge] (V2) -- (V1);
    \draw [style=edge] (V8) -- (V6);
    \draw [style=edge] (V7) -- (V3);
    \draw [style=edge] (V1) -- (V2);
    \draw [style=edge] (V4) -- (V5);
\end{tikzpicture}
\hspace{10mm}
\begin{tikzpicture}[scale=0.45]
\node at (-6,5) {$(b)$};
\node [style=node,fill=blue!30] (V8) at (3,-3) {$v_8$};
\node [style=node,fill=red!30] (V4) at (-3, 0) {$v_4$};
 \node [style=node,fill=red!30] (V2) at (0, 2.5) {$v_2$};
 \node [style=node,fill=blue!30] (V1) at (0, 5) {$v_1$};
\node [style=node,fill=red!30] (V5) at (3, 0) {$v_5$};
\node [style=node,fill=blue!30] (V6) at (6, 0) {$v_6$};
\node [style=node,fill=blue!30] (V3) at (-6, 0) {$v_3$};
\node [style=node,fill=blue!30] (V7) at (-3, -3) {$v_7$};
    \draw [style=edge] (V4) -- (V3);
    \draw [style=edge] (V4) -- (V7);
    \draw [style=edge] (V5) -- (V8);
    \draw [style=edge] (V5) -- (V6);
    \draw [style=edge] (V4) -- (V2);
    \draw [style=edge] (V2) -- (V5);
    \draw [style=edge] (V2) -- (V1);
    \draw [style=edge] (V8) -- (V6);
    \draw [style=edge] (V7) -- (V3);
    \draw [style=edge] (V1) -- (V2);
    \draw [style=edge] (V4) -- (V5);
    \draw [style=edge] (V3) -- (V2);
    \draw [style=edge] (V3) -- (V1);
\end{tikzpicture}\\
\vspace{10mm}
\begin{tikzpicture}[scale=0.45]
\node at (-6,5) {$(c)$};
\node [style=node,fill=blue!30] (V8) at (3,-3) {$v_8$};
\node [style=node,fill=red!30] (V4) at (-3, 0) {$v_4$};
 \node [style=node,fill=red!30] (V2) at (0, 2.5) {$v_2$};
 \node [style=node,fill=blue!30] (V1) at (0, 5) {$v_1$};
\node [style=node,fill=red!30] (V5) at (3, 0) {$v_5$};
\node [style=node,fill=blue!30] (V6) at (6, 0) {$v_6$};
\node [style=node,fill=blue!30] (V3) at (-6, 0) {$v_3$};
\node [style=node,fill=blue!30] (V7) at (-3, -3) {$v_7$};
    \draw [style=edge] (V4) -- (V3);
    \draw [style=edge] (V4) -- (V7);
    \draw [style=edge] (V5) -- (V8);
    \draw [style=edge] (V5) -- (V6);
    \draw [style=edge] (V8) -- (V6);
    \draw [style=edge] (V7) -- (V3);
    \draw [style=edge] (V4) -- (V5);
\end{tikzpicture}
\hspace{10mm}
\begin{tikzpicture}[scale=0.45]
\node at (-6,5) {$(d)$};
\node [style=node,fill=blue!30] (V8) at (3,-3) {$v_8$};
\node [style=node,fill=red!30] (V4) at (-3, 0) {$v_4$};
 \node [style=node,fill=red!30] (V2) at (0, 2.5) {$v_2$};
 \node [style=node,fill=blue!30] (V1) at (0, 5) {$v_1$};
\node [style=node,fill=red!30] (V5) at (3, 0) {$v_5$};
\node [style=node,fill=blue!30] (V6) at (6, 0) {$v_6$};
\node [style=node,fill=blue!30] (V3) at (-6, 0) {$v_3$};
\node [style=node,fill=blue!30] (V7) at (-3, -3) {$v_7$};
    \draw [style=edge] (V4) -- (V3);
    \draw [style=edge] (V4) -- (V7);
    \draw [style=edge] (V5) -- (V8);
    \draw [style=edge] (V5) -- (V6);
    \draw [style=edge] (V2) -- (V5);
    \draw [style=edge] (V2) -- (V1);
    \draw [style=edge] (V8) -- (V6);
    \draw [style=edge] (V7) -- (V3);
    \draw [style=edge] (V1) -- (V2);
    \draw [style=edge] (V4) -- (V5);
    \draw [style=edge] (V1) -- (V3);
\end{tikzpicture}
    \caption{In (a) only $v_1$ and $v_2$ are under majority illusion. An optimal solution to the MIAE problem, adding just $2$ edges, is shown in (b); an optimal solution to the MIRE problem, removing $5$ edges, is shown in (c); an optimal solution to the MIE problem, adding $1$ edge and removing $1$ edge, is shown in (d).}
    \label{fig:semi-illusion}
\end{figure}
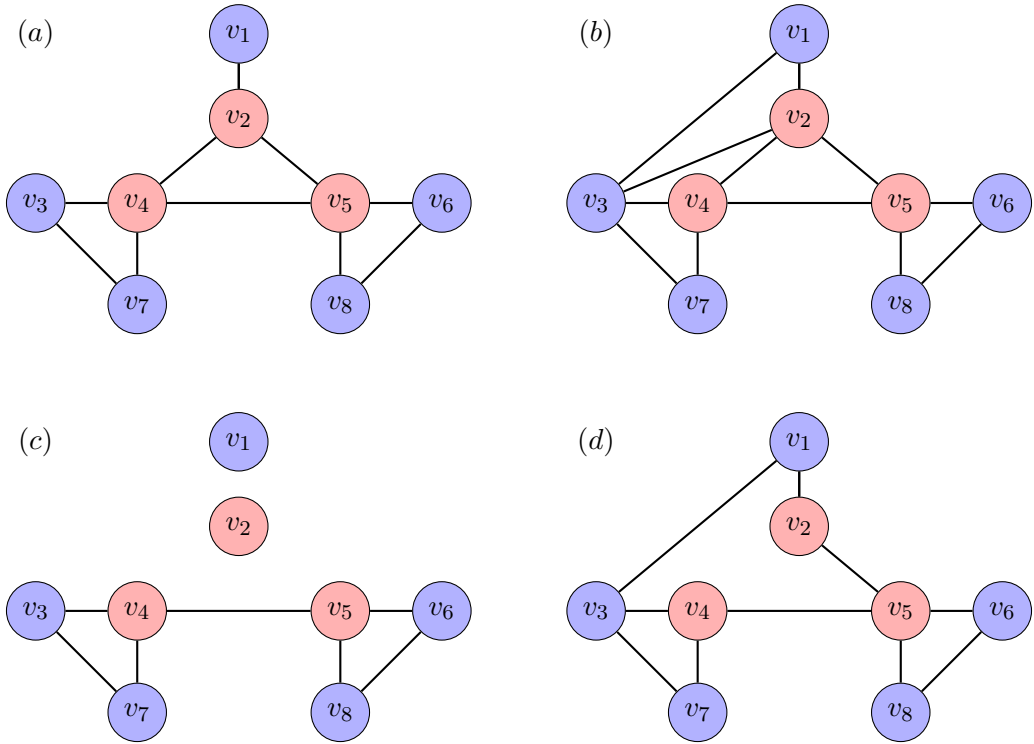

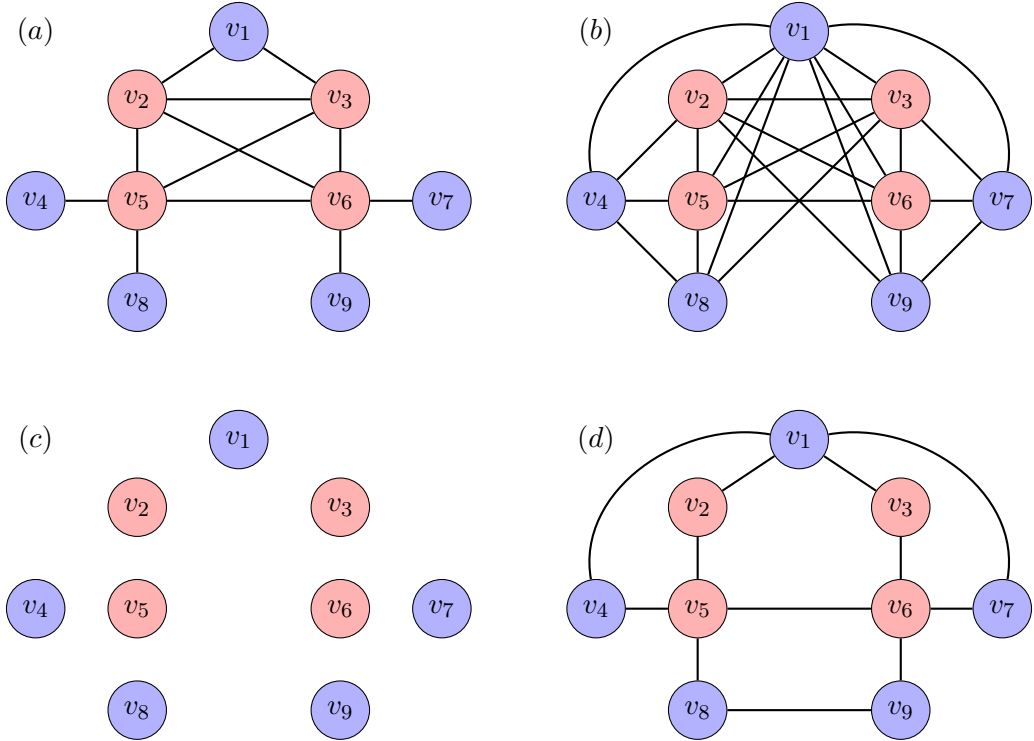
\begin{figure}[ht!]
    \centering
\tikzstyle{node}=[ draw=black,
 shape=circle, minimum size=10pt]
 \tikzstyle{edge}=[thick, -]
\begin{tikzpicture}[scale=0.45]
\node at (-6,5) {$(a)$};
\node [style=node,fill=blue!30] (V9) at (3,-3) {$v_9$};
\node [style=node,fill=red!30] (V5) at (-3, 0) {$v_5$};
 \node [style=node,fill=red!30] (V3) at (3, 3) {$v_3$};
 \node [style=node,fill=blue!30] (V1) at (0, 5) {$v_1$};
 \node [style=node,fill=red!30] (V2) at (-3, 3) {$v_2$};
\node [style=node,fill=red!30] (V6) at (3, 0) {$v_6$};
\node [style=node,fill=blue!30] (V7) at (6, 0) {$v_7$};
\node [style=node,fill=blue!30] (V4) at (-6, 0) {$v_4$};
\node [style=node,fill=blue!30] (V8) at (-3, -3) {$v_8$};
    \draw [style=edge] (V5) -- (V4);
    \draw [style=edge] (V5) -- (V8);
    \draw [style=edge] (V6) -- (V9);
    \draw [style=edge] (V6) -- (V7);
    \draw [style=edge] (V5) -- (V3);
    \draw [style=edge] (V3) -- (V6);
    \draw [style=edge] (V5) -- (V2);
    \draw [style=edge] (V2) -- (V6);
    \draw [style=edge] (V3) -- (V1);
    \draw [style=edge] (V2) -- (V1);
      \draw [style=edge] (V2) -- (V3);
        \draw [style=edge] (V5) -- (V6);
\end{tikzpicture}
\hspace{10mm}
\begin{tikzpicture}[scale=0.45]
\node at (-6,5) {$(b)$};
\node [style=node,fill=blue!30] (V9) at (3,-3) {$v_9$};
\node [style=node,fill=red!30] (V5) at (-3, 0) {$v_5$};
 \node [style=node,fill=red!30] (V3) at (3, 3) {$v_3$};
 \node [style=node,fill=blue!30] (V1) at (0, 5) {$v_1$};
 \node [style=node,fill=red!30] (V2) at (-3, 3) {$v_2$};
\node [style=node,fill=red!30] (V6) at (3, 0) {$v_6$};
\node [style=node,fill=blue!30] (V7) at (6, 0) {$v_7$};
\node [style=node,fill=blue!30] (V4) at (-6, 0) {$v_4$};
\node [style=node,fill=blue!30] (V8) at (-3, -3) {$v_8$};
    \draw [style=edge] (V5) -- (V4);
    \draw [style=edge] (V5) -- (V8);
    \draw [style=edge] (V6) -- (V9);
    \draw [style=edge] (V6) -- (V7);
    \draw [style=edge] (V5) -- (V3);
    \draw [style=edge] (V3) -- (V6);
    \draw [style=edge] (V5) -- (V2);
    \draw [style=edge] (V2) -- (V6);
    \draw [style=edge] (V3) -- (V1);
    \draw [style=edge] (V2) -- (V1);
    \draw [style=edge] (V2) -- (V3);
    \draw [style=edge] (V5) -- (V6);
    \draw [style=edge] (V5) -- (V1);
    \draw [style=edge] (V6) -- (V1);
    \draw [style=edge] (V8) -- (V1);
    \draw [style=edge] (V9) -- (V1);
    \draw [style=edge] (V4) to[out=100,in=170] (V1);
    \draw [style=edge] (V1) to[out=10,in=80] (V7);
    \draw [style=edge] (V4) -- (V2);
    \draw [style=edge] (V7) -- (V3);
    \draw [style=edge] (V9) -- (V2);
    \draw [style=edge] (V8) -- (V3);
    \draw [style=edge] (V9) -- (V7);
    \draw [style=edge] (V8) -- (V4);
\end{tikzpicture}\\
\vspace{10mm}
\begin{tikzpicture}[scale=0.45]
\node at (-6,5) {$(c)$};
\node [style=node,fill=blue!30] (V9) at (3,-3) {$v_9$};
\node [style=node,fill=red!30] (V5) at (-3, 0) {$v_5$};
 \node [style=node,fill=red!30] (V3) at (3, 3) {$v_3$};
 \node [style=node,fill=blue!30] (V1) at (0, 5) {$v_1$};
 \node [style=node,fill=red!30] (V2) at (-3, 3) {$v_2$};
\node [style=node,fill=red!30] (V6) at (3, 0) {$v_6$};
\node [style=node,fill=blue!30] (V7) at (6, 0) {$v_7$};
\node [style=node,fill=blue!30] (V4) at (-6, 0) {$v_4$};
\node [style=node,fill=blue!30] (V8) at (-3, -3) {$v_8$};
\end{tikzpicture}
\hspace{10mm}
\begin{tikzpicture}[scale=0.45]
\node at (-6,5) {$(d)$};
\node [style=node,fill=blue!30] (V9) at (3,-3) {$v_9$};
\node [style=node,fill=red!30] (V5) at (-3, 0) {$v_5$};
 \node [style=node,fill=red!30] (V3) at (3, 3) {$v_3$};
 \node [style=node,fill=blue!30] (V1) at (0, 5) {$v_1$};
 \node [style=node,fill=red!30] (V2) at (-3, 3) {$v_2$};
\node [style=node,fill=red!30] (V6) at (3, 0) {$v_6$};
\node [style=node,fill=blue!30] (V7) at (6, 0) {$v_7$};
\node [style=node,fill=blue!30] (V4) at (-6, 0) {$v_4$};
\node [style=node,fill=blue!30] (V8) at (-3, -3) {$v_8$};
    \draw [style=edge] (V5) -- (V4);
    \draw [style=edge] (V5) -- (V8);
    \draw [style=edge] (V6) -- (V9);
    \draw [style=edge] (V6) -- (V7);
    \draw [style=edge] (V3) -- (V6);
    \draw [style=edge] (V5) -- (V2);
    \draw [style=edge] (V3) -- (V1);
    \draw [style=edge] (V2) -- (V1);
    \draw [style=edge] (V4) to[out=100,in=170] (V1);
    \draw [style=edge] (V1) to[out=10,in=80] (V7);
    \draw [style=edge] (V5) -- (V6);
    \draw [style=edge] (V8) -- (V9);
\end{tikzpicture}
    \caption{In (a) every node is under majority illusion. An optimal solution to the MIAE problem, adding $11$ edges, is shown in (b); an optimal solution to the MIRE problem, requiring the removal of all $12$ edges, is shown in (c); an optimal solution to the MIE problem, adding $3$ edges and removing $3$ edges, is shown in (d).}
    \label{fig:complete-illusion}
\end{figure}

We remark that the assumption that blue party is strict majority is not necessary. However the assumption of strictness ensures that, for all three problems, it is always feasible to alter the graph to eliminate majority illusion. 

Grandi et al.~\cite{Grandi2023} left unresolved the computational complexity of these problems. Instead, rather than {\em completely} eliminating illusion at every node, they showed it was NP-complete to {\em partially} eliminate illusion for a fraction of the nodes.
In particular, they studied the following {\em $q$-partial (majority) illusion elimination problem}: can 
{\bf at most} a $q$-fraction of the nodes be left under illusion 
after altering (adding and/or deleting) at most $k$ edges in $G$.

They proved this problem is NP-complete for $q \in (0,1)$.   
Observe that the MIAE, MIRE, and MIE problems correspond to the
case $q=0$, that is, where no nodes are allowed to be left under illusion.
Therefore, the computational complexity of the main problems of interest remained open and motivated this work.


\subsection{Our Contributions.}

We make two contributions regarding illusion elimination in social networks. First, we prove that the majority illusion elimination problem is
solvable in polynomial time for all three variants (MIAE, MIRE and MIE).
Thus completely eliminating majority illusion in a social network is easy. The result is very surprising given, as described above, that the problem of partially eliminating illusion is hard~\cite{Grandi2023}.

In our second contribution, we generalize the problem
of eliminating majority illusion and consider the
problem of $p$-illusion.
The majority illusion elimination problems assume the blue party has at least $\frac12$ the votes. 
Suppose, instead, we do not assume the votes for the blue party and ask if it is possible to ensure that every voter believes that the blue party has at least a $p$-fraction of the ballots, where $p\in [0,1]$ is a rational number. Thus, we define the \emph{$p$-Illusion problem}: Given a graph $G= (V,E)$, a colouring function $f:V\rightarrow \{B, R\}$, and a rational number $p 
\in [0,1]$, can we modify at most $k$ edges such that {\em every} node has at least
$p$-fraction of blue nodes in its neighbourhood?
This question induces three problems for $p$-Illusion (analogous to majority illusion elimination), namely, $p$-IA, $p$-IR, and $p$-I,\footnote{We emphasize the distinction between the $p$-I problem and the
$q$-partial MIE problem or the $q$-majority illusion problem of~\cite{Grandi2023}. In $p$-I (and $p$-IA/$p$-IR)
{\bf every} node must have at least a $p$-fraction of blue neighbors.
In contrast, $q$-partial MIE eliminates the illusion of any {\bf $(1-q)$-fraction} of the nodes; and in the $q$-majority illusion problem asks if at least $q$ fraction of the nodes be under illusion.}
which as we prove are all polynomial time solvable for $p=\frac12$.
However, the majority case $p=\frac12$ is very special.
We prove that $p$-IA, $p$-IR, and $p$-I 
are all NP-complete for any rational $p\in [0,1]$ except for
$p\in \{0,\frac13, \frac12, \frac23, 1\}$.

An overview of the rest of the paper is as follows.
In Section~\ref{sec:related} we discuss related works.
In Section~\ref{sec:pre}, we present preliminary results and
observations concerning illusion elimination.
Our three polynomial time algorithms for MIAE, MIRE, and MIE
are then given in Sections~\ref{sec:AD}, \ref{sec:remove}, and  \ref{sec:both}, respectively.
Finally, in Section~\ref{sec:hardness} we prove the NP-completeness
of the $p$-IA, $p$-IR and $p$-I problems, for $p\notin \{0,\frac13, \frac12, \frac23, 1\}$.

\subsection{Related Work.}\label{sec:related}

Several studies have delved into the dynamics of decision-making within social networks and the manipulation of these networks, offering insights into various related phenomena.
The article 
by Grandi et al.~\cite{Grandi2023} is the most relevant, as it directly addresses the phenomenon of majority illusion in social networks, where individuals perceive a majority opinion that does not reflect the actual distribution of opinions. Through analysis and experimentation, the authors propose methods for identifying and mitigating majority illusion, offering valuable insights into the dynamics of opinion formation and social influence in networked environments. The results in this paper are based upon their model.

Liu et al. \cite{infogerry} explore the concept of ``information gerrymandering" in elections, where the strategic distribution of individuals within social networks can influence voting outcomes. They introduces a metric called {\em influence assortment} to quantify this phenomenon and demonstrates its effectiveness in predicting voting outcomes in various network structures.

The effect of adding and/or deleting edges in a social network
has been investigated in several settings. For example,
Castiglioni et al.~\cite{castiglioni2020election} study election control in social networks through the addition or removal of connections between individuals. Their study proposes methods for strategically manipulating network connections to influence election outcomes, highlighting the potential for network topology to shape democratic processes.

By analyzing the impact of link additions on information diffusion and exposure diversity, Becker et al.~\cite{becker2023improving} propose methods to mitigate bias and improve the accessibility of diverse viewpoints. Through empirical analysis and simulations, their study provides insights into the potential of structural interventions to promote fairness and inclusivity in online information dissemination.

Kempe et al.~\cite{kempe2020inducing} explore methods for inducing equilibria in networked public goods games by modifying the structure of the network. Through mathematical modeling and simulations, their research investigates strategies for promoting cooperation and achieving desirable outcomes in public goods dilemmas. Their work also proffers insights into the role of network topology in shaping collective behavior.

Wilder and Vorobeychik~\cite{wilder2018controlling} investigate opinion diffusion and campaigning strategies on society graphs. Their research studies how opinions spread through networks and the effectiveness of various campaigning approaches. They focus on the dynamics of opinion formation and diffusion in social networks, with the aim of understanding influence dynamics and designing effective communication strategies.

Faliszewski et al.~\cite{Faliszewski} examine techniques for controlling elections through social influence. By manipulating the spread of information and opinions within social networks, they consider strategies for influencing voter behavior and election outcomes and provides insights into the potential for social influence to shape electoral processes, highlighting the need for understanding and addressing such dynamics in democratic systems.

In a recent work \citet{fioravantes2024eliminating} study the problem of eliminating majority illusion by altering the color of at most $k$ vertices and show that, unlike our edge editing problem, it is NP-hard and W[2]-hard with respect to $k$. They present FPT algorithm parameterized by treewidth of the input graph and PTAS for planar graphs.

\section{Preliminaries}\label{sec:pre}
We begin by defining the following notations.
Given a graph $G=(V,E)$, let $B$ be the set of blue nodes and $R$ the set of red nodes. For any node $v \in V$, let $\Gamma(v)$ denote the neighbourhood of $v$, $r(v) = |\{ u \in R : u \in \Gamma(v) \}|$ be the number of red nodes in the neighbourhood of $v$, and $b(v) = |\{ u \in B : u \in \Gamma(v) \}|$ be the number of blue nodes in its neighbourhood.

\begin{defn}[Illusion]
    Let $G=(V,E)$ be a graph such that $\frac{|B|}{|V|} > p$. Then, a node $v 
 \in V$ is under illusion if it has strictly less than $p$-fraction of blue neighbors, i.e., $\frac{b(v)}{|\Gamma(v)|} <  p$. When $p=1/2$, it is known as majority illusion.
\end{defn} 

Equivalently, if $\frac{|B|}{|V|} < p$, then a node $v 
 \in V$ is under illusion if it has at least $p$-fraction of blue neighbors, i.e., $\frac{b(v)}{|\Gamma(v)|} \geq  p$.

Next, we describe basic properties that optimal solutions to our three \textit{majority} illusion elimination problems must possess.
For a node $v \in V$, if $r(v)> b(v)$ then we say $v$ is in {\em deficit}; if $r(v)< b(v)$ then we say
$v$ is in {\em surplus}. 
So the task is to add/remove as few edges as possible so that no node is in deficit.

We begin with the trivial observation that it is never of benefit to add an edge between two red nodes.
\begin{observation}\label{noRRedges}
    In an optimal solution $E'$ to an instance $(V,E,f,k)$ of MIE or MIAE, there are no edges of $E'\setminus E$ between two red nodes.
\end{observation}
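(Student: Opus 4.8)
The plan is a routine exchange argument: any added red--red edge can be deleted for free. Suppose, towards a contradiction, that $E'$ is an optimal solution to an instance $(V,E,f,k)$ of MIAE (or MIE) and that $e=uw\in E'\setminus E$ with $u,w\in R$. I would pass to the edge set $E''=E'\setminus\{e\}$ and show that $E''$ is still feasible while using one fewer alteration, contradicting the optimality of $E'$.

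\emph{Feasibility of $E''$.} Deleting $e$ changes only the neighbourhoods of $u$ and $w$, and in each case it removes a \emph{red} neighbour: for $v\in\{u,w\}$ we have $b_{E''}(v)=b_{E'}(v)$ and $r_{E''}(v)=r_{E'}(v)-1$. Since $v$ is not under illusion in $E'$, i.e.\ $b_{E'}(v)\ge r_{E'}(v)$, we get $b_{E''}(v)=b_{E'}(v)\ge r_{E'}(v)=r_{E''}(v)+1>r_{E''}(v)$, so $v$ is not under illusion (indeed, in surplus) in $E''$. One minor point to dispose of is the empty-neighbourhood case in the ratio definition: the deleted edge witnesses $r_{E'}(v)\ge 1$, hence $b_{E'}(v)\ge r_{E'}(v)\ge 1$ and $\Gamma_{E''}(v)\neq\emptyset$. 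Every other node keeps its neighbourhood, hence its status. Thus no node is under illusion in $E''$.

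\emph{Cost and conclusion.} In MIAE, $|E''\setminus E|=|E'\setminus E|-1\le k-1<k$; in MIE, $|E\setminus E''|+|E''\setminus E|=|E\setminus E'|+|E'\setminus E|-1$. In either case $E''$ is a feasible solution strictly cheaper than $E'$, contradicting optimality. (The statement is vacuous for MIRE, since there $E'\setminus E=\emptyset$, which is why the observation is stated only for MIE and MIAE.) I expect no real obstacle in this argument; the only subtlety is the empty-neighbourhood corner case, handled above.
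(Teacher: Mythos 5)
Your proof is correct and follows essentially the same exchange argument as the paper: deleting an added red--red edge strictly reduces the alteration count while only removing red neighbours, so feasibility is preserved. You are a bit more careful than the paper (explicit inequality chain, empty-neighbourhood corner case, the remark that the claim is vacuous for MIRE), but the underlying idea is identical.
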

\begin{proof}
   Assume there exist two red nodes $u,v$ with $uv \in E'\setminus E$. Then $E'\setminus uv$ is also be a solution to the instance $(V,E,f,k)$, because removing $uv$ can only decrease the ratio of red nodes to blue nodes in both affected neighbourhoods. Thus $uv$ cannot be added in an optimal solution.
\end{proof}
Analogously, it is never of benefit to remove an edge between two blue nodes.
\begin{observation}\label{noBBedges}
    In an optimal solution $E'$ to an instance $(V,E,f,k)$ of MIE or MIRE, there are no edges of $E\setminus E'$ between two blue nodes.
\end{observation}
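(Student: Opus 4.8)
The proof plan is a near-mirror of the argument for Observation~\ref{noRRedges}, using the symmetry between adding a red--red edge and removing a blue--blue edge. Suppose $E'$ is an optimal solution to an instance $(V,E,f,k)$ of MIE or MIRE, and suppose for contradiction that there is an edge $uv \in E \setminus E'$ with both $u,v \in B$; that is, the optimal solution deletes a blue--blue edge. I claim that $E'' = E' \cup \{uv\}$ is also a feasible solution with no larger edit cost, contradicting either optimality of $E'$ (if the cost strictly drops) or at least showing we may assume no such edge is deleted.

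The key step is to check that adding the blue--blue edge $uv$ back does not create any new node in deficit. The only neighbourhoods affected are those of $u$ and of $v$. Consider $u$: in $G''=(V,E'')$ its neighbourhood gains the blue node $v$, so $b''(u) = b'(u)+1$ and $r''(u) = r'(u)$. Since $u$ was not in deficit in $G'$, we have $r'(u) \le b'(u) < b'(u)+1 = b''(u)$, so $u$ is not in deficit in $G''$; symmetrically for $v$. Every other node has an unchanged neighbourhood, so no node is in deficit in $G''$, and $G''$ is feasible. For the edit-cost bookkeeping: in MIRE we only delete edges, and $E'' = E' \cup \{uv\} \subseteq E$ with $|E \setminus E''| = |E \setminus E'| - 1 < |E\setminus E'| \le k$, so $E''$ is feasible and strictly cheaper, contradicting optimality. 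In MIE, the symmetric difference satisfies $|E \triangle E''| = |E \triangle E'| - 1 \le k$ (re-adding $uv$ removes it from the ``deleted'' part of the symmetric difference and does not add it to the ``added'' part, since $uv \in E$), so again $E''$ is feasible with strictly smaller cost, contradicting optimality. Hence no optimal solution deletes a blue--blue edge.

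The only mild subtlety, and the one place to be careful, is the cost accounting in the MIE case: one must note that $uv \in E$, so putting $uv$ back into $E'$ strictly decreases $|E \setminus E'|$ by one and leaves $|E' \setminus E|$ unchanged, hence strictly decreases the total number of alterations. This is exactly parallel to the MIAE case in Observation~\ref{noRRedges}, where removing a red--red edge from $E'$ decreases $|E' \setminus E|$. I do not anticipate any real obstacle here; the argument is a routine exchange argument, and the statement follows immediately.

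\begin{proof}
    Assume there exist two blue nodes $u,v$ with $uv \in E \setminus E'$, i.e., the edge $uv$ is deleted by the solution. Consider $E'' = E' \cup \{uv\}$. Since $uv \in E$, this strictly decreases $|E \setminus E'|$ by one and does not change $|E' \setminus E|$, so the total number of alterations strictly decreases (and remains at most $k$); in particular $E'' \subseteq E$ in the MIRE case. Moreover $E''$ is still a valid solution: the only neighbourhoods that change are those of $u$ and $v$, and each gains one blue neighbour. Thus for $u$ we have $b''(u) = b'(u) + 1 > b'(u) \ge r'(u) = r''(u)$, since $u$ was not in deficit under $E'$; hence $u$ is not in deficit under $E''$, and symmetrically neither is $v$. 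Therefore no node is in deficit in $G'' = (V, E'')$, contradicting the optimality of $E'$. Thus $uv$ cannot be removed in an optimal solution.
\end{proof}
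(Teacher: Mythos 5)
Your proof is correct and takes essentially the same approach as the paper: re-add the deleted blue--blue edge, observe that the only affected neighbourhoods ($u$ and $v$) each gain a blue neighbour so feasibility is preserved, and note the edit cost strictly drops, contradicting optimality. Your write-up is merely more explicit about the cost accounting and the deficit inequality than the paper's terse version.
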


\begin{proof}
   Assume there exist two blue nodes $u,v$ with $uv \in E \setminus E'$. Then $E'\cup uv$ is also a solution to the instance $(V,E,f,k)$, because adding $uv$ can only decrease the ratio of red nodes to blue nodes in both affected neighbourhoods. Thus $uv$ cannot be removed in an optimal solution.
\end{proof}

We conclude this section with three more observations, one for each of our three problems. The first, for the addition variant, essentially states that for any red node, it is never beneficial to connect it to more new blue neighbours than is absolutely necessary. 

\begin{lem}\label{capRBedges}
    In an optimal solution $E'$ to $(V,E,f,k)$ of MIAE, there are exactly $\max(r(v)-b(v),0)$ edges of $E'\setminus E$ incident to each red node $v$.
\end{lem}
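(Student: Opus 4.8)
The plan is to prove the statement by pinning down $d_v$, the number of edges of $E'\setminus E$ incident to a fixed red node $v$, from both sides: $d_v \ge \max(r(v)-b(v),0)$ because a solution must repair $v$'s deficit, and $d_v \le \max(r(v)-b(v),0)$ because an optimal solution never adds a superfluous edge. Throughout I would use Observation~\ref{noRRedges}, which already tells us that in an optimal solution to MIAE every edge of $E'\setminus E$ incident to $v$ has its other endpoint in $B$.

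For the lower bound, first note that since MIAE only permits additions and (by Observation~\ref{noRRedges}) no red–red edge is added, the red-neighbour count of $v$ is exactly $r(v)$ in $G'$, while its blue-neighbour count is $b(v)+d_v$. As $E'$ is a solution, $v$ is not in deficit in $G'$, so $b(v)+d_v \ge r(v)$; combining with $d_v \ge 0$ gives $d_v \ge \max(r(v)-b(v),0)$.

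For the upper bound I would use a single-edge exchange argument. Suppose, for contradiction, that $d_v > \max(r(v)-b(v),0)$ for some red node $v$, and pick any edge $vu \in E'\setminus E$; by Observation~\ref{noRRedges} we have $u \in B$. Deleting $vu$ from $E'$ leaves $v$ with $b(v)+d_v-1 \ge r(v)$ blue neighbours (using $d_v - 1 \ge r(v)-b(v)$), so $v$ is still not in deficit; and it removes a red neighbour from the blue node $u$, which only decreases the red-to-blue ratio in $u$'s neighbourhood, so $u$ is not in deficit either. No other neighbourhood changes. Hence $E'\setminus\{vu\}$ is a valid solution with fewer added edges, contradicting the optimality of $E'$. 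Together with the lower bound this forces $d_v = \max(r(v)-b(v),0)$.

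I do not expect a genuine obstacle here: the only thing to be careful about is the bookkeeping when deleting $vu$, namely checking that neither endpoint's neighbourhood is pushed into deficit, and this is immediate because removing a blue–red edge can only help both endpoints. The lemma is essentially a corollary of Observation~\ref{noRRedges} plus this one-edge exchange.
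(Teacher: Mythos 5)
Your proposal matches the paper's proof: both establish the lower bound from feasibility (a red node with too few added blue neighbours remains in deficit, using Observation~\ref{noRRedges} to rule out added red–red edges) and the upper bound by a single-edge deletion exchange showing that a surplus added edge can be removed without creating a deficit at either endpoint, contradicting optimality. Your write-up is, if anything, slightly more careful in making the bookkeeping for the deletion step explicit, but it is the same argument.
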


\begin{proof}
    Suppose there are fewer than $\max(r(v)-b(v),0)$ edges of $E'\setminus E$ incident to some red node~$v$. Then $N_{E'}[v]$ is majority red, contradicting the feasibility of $E'$. 

    On the other hand, suppose there are more than $\max(r(v)-b(v),0)$ edges of $E'\setminus E$ incident to some red node $v$. Then, by Lemma \ref{noRRedges}, there are more than $\max(r(v)-b(v),0)$ edges between $v$ and the set of blue nodes. For any blue node $u$ with $uv \in E'\setminus E$, observe that $E'\setminus uv$ must also be a solution to the instance $(V,E,f,k)$. This is because removing $uv$ does not cause there to be more red nodes than blue nodes in the neighbourhood of $v$, and it decreases the ratio of red nodes to blue nodes in the  neighbourhood of $u$. Thus $uv$ cannot be added in an optimal solution, a contradiction.
\end{proof}

Similarly, for the removal variant, for any blue node, it is never beneficial to disconnect it from more red neighbours than is absolutely necessary. 

\begin{lem}\label{capBRedges}
    In an optimal solution $E'$ to $(V,E,f,k)$ of MIRE, there are exactly $\max(r(v)-b(v),0)$ edges of $E\setminus E'$ incident to each blue node $v$.
\end{lem}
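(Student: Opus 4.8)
The plan is to argue by contradiction, mirroring the proof of Lemma~\ref{capRBedges} but with deletions in place of additions. The first step is to observe that, by Observation~\ref{noBBedges}, every edge of $E\setminus E'$ incident to a blue node $v$ must go to a red neighbour of $v$. Consequently, deleting edges at $v$ never removes any of its blue neighbours, so if $t$ denotes the number of edges of $E\setminus E'$ incident to $v$, then in $G'=(V,E')$ node $v$ has exactly $b(v)$ blue neighbours and exactly $r(v)-t$ red neighbours.

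For the lower bound, suppose $t<\max(r(v)-b(v),0)$ for some blue node $v$. Then in particular $r(v)>b(v)$, and $r(v)-t>b(v)$, so $v$ is in deficit in $G'$, contradicting the feasibility of $E'$. For the upper bound, suppose instead that $t>\max(r(v)-b(v),0)$, i.e.\ $t\ge\max(r(v)-b(v),0)+1$. Pick a red neighbour $u$ of $v$ with $uv\in E\setminus E'$, and consider $E''=E'\cup\{uv\}$. Since $E''\subseteq E$, it is a valid candidate for MIRE, and it uses one fewer deletion than $E'$. Re-adding $uv$ gives $v$ exactly $r(v)-t+1\le r(v)-\max(r(v)-b(v),0)\le b(v)$ red neighbours and still $b(v)$ blue neighbours, so $v$ is not in deficit in $G''$. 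At the red node $u$, re-adding the edge to the blue node $v$ only increases its blue-neighbour count (equivalently, decreases the red-to-blue ratio in its neighbourhood), so $u$ is still not in deficit. No other neighbourhood is affected. Hence $E''$ is a feasible solution with $|E\setminus E''|<|E\setminus E'|$, contradicting the optimality of $E'$. Combining the two bounds gives that exactly $\max(r(v)-b(v),0)$ edges of $E\setminus E'$ are incident to each blue node $v$.

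The only point requiring any care is checking that re-adding a single red edge at $v$ does not reintroduce a deficit there; this is precisely where the strict inequality $t>\max(r(v)-b(v),0)$ is used, through $t\ge\max(r(v)-b(v),0)+1$. Everything else is the direct deletion-analogue of the arithmetic already done for MIAE in Lemma~\ref{capRBedges}, so I do not expect any genuine obstacle.
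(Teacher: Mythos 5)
Your proof is correct and follows essentially the same exchange argument as the paper: using Observation~\ref{noBBedges} to restrict attention to removed blue--red edges, establishing the lower bound by a feasibility count at $v$, and the upper bound by re-inserting one removed edge $uv$ and checking feasibility at both endpoints. The only cosmetic difference is that you spell out the arithmetic $r(v)-t+1\le b(v)$ explicitly, whereas the paper states the same conclusion verbally.
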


\begin{proof}
    Suppose there are fewer than $\max(r(v)-b(v),0)$ edges of $E\setminus E'$ incident to some blue node $v$. Then $N_{E'}[v]$ is majority red, contradicting the feasibility of $E'$. 
    
    On the other hand, suppose 
    there are more than $\max(r(v)-b(v),0)$ edges of $E\setminus E'$ incident to some blue node $v$. Then, by Lemma \ref{noBBedges}, there are more than $\max(r(v)-b(v),0)$ edges between $v$ and the set of red nodes. For some red node $u$ with $uv \in E\setminus E'$, observe $E'\cup uv$ must also be a solution to the instance $(V,E,f,k)$. This is because adding $uv$ does not cause there to be more red nodes than blue nodes in the neighbourhood of $v$, and it decreases the ratio of red nodes to blue nodes in the  neighbourhood of $u$. Thus $uv$ cannot be removed in an optimal solution.
\end{proof}

Finally, for the general majority illusion elimination problem, we have the following more substantial observation. 
\begin{lem}\label{FormofMIE}
    There exists an optimal solution $E'$ to an instance $(V,E,f,k)$ of MIE with no edges of $E'\setminus E$ incident to any red node, and no edges of $E\setminus E'$ incident to any blue node. 
\end{lem}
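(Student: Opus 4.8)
The plan is to take an arbitrary optimal solution $E'$ and transform it, without increasing the number of edge alterations and without reintroducing any deficit, into one of the desired form. By Observation~\ref{noRRedges} we already know $E'\setminus E$ contains no red--red edge, and by Observation~\ref{noBBedges} that $E\setminus E'$ contains no blue--blue edge. So the only ``bad'' modifications that can remain are: (i) added edges between a red node and a blue node, and (ii) removed edges between a red node and a blue node. I want to argue both types can be eliminated. The key structural idea is a swap/exchange argument: a red--blue addition incident to a red node $v$ is only useful to $v$ (it hurts the blue endpoint, or at best is neutral there), and likewise a red--blue removal is only useful to the blue endpoint; so if both kinds of moves are present, we can often cancel them against each other, and if only one kind is present we can show it was never needed.

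The main steps, in order:

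\emph{Step 1 (eliminate red--blue removals incident to blue nodes).} Suppose $uv \in E\setminus E'$ with $u$ red, $v$ blue. Removing $uv$ strictly helps $v$ and strictly hurts $u$ (it raises the red-fraction in $u$'s neighbourhood). Since $E'$ is feasible, $u$ is not in deficit in $G'$; I will argue we can instead ``pay'' for $v$'s deficit using an edge addition rather than this removal. Concretely, consider the multiset of all red--blue edges in $E\setminus E'$. I will show that swapping each such removal for a suitable addition of a blue--blue edge at $v$ (or, if no such non-edge is available, a more careful global rerouting) yields a feasible solution of no greater cost with strictly fewer red--blue removals; iterating drives the count to zero. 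The cleanest route is probably: first, transform $E'$ into another optimal solution in which \emph{no edge is removed at all} except where forced, then handle additions. In fact, Observation~\ref{noBBedges} plus a direct argument should show that any removed red--blue edge $uv$ can simply be \emph{put back} and compensated by adding one edge from $v$ to some node that keeps $v$ non-deficit and does not create deficit elsewhere — because the only reason $uv$ was removed was to fix $v$, and adding an edge to $v$ from a blue node (one must exist, else we reroute) fixes $v$ at least as well. This is the step I expect to be the main obstacle, because one must ensure the replacement edge exists (enough blue non-neighbours of $v$) and does not cascade; I anticipate needing the strict-majority hypothesis and possibly a counting/Hall-type argument over all the blue deficit nodes simultaneously.

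\emph{Step 2 (eliminate red--blue additions incident to red nodes).} By Lemma~\ref{capRBedges}, in an optimal MIAE solution a red node $v$ receives exactly $\max(r(v)-b(v),0)$ new edges; the analogue here is that a red node never benefits from receiving a new edge unless it needs one. But in MIE we also have removals available, and a red node $v$'s deficit can equally be repaired by \emph{removing} a red neighbour of $v$ rather than \emph{adding} a blue neighbour — this is exactly a unit-for-unit trade in $v$'s balance and in cost. So for each added red--blue edge $uv$ ($v$ red), replace it by deleting some red--red edge $vw$ incident to $v$ (one exists whenever $v$ is in deficit, since $r(v)>b(v)\ge 0$); this removes a red neighbour from $v$ (same effect on $v$'s balance) and strictly helps $w$, so feasibility is preserved and cost is unchanged, while the number of red--blue additions drops. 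Iterating eliminates all additions at red nodes. Combined with Step~1 and the two earlier observations, the resulting $E'$ has no edge of $E'\setminus E$ incident to any red node and no edge of $E\setminus E'$ incident to any blue node, as required. I would present Step~2 first (it is the self-contained swap), then Step~1, and finally note that neither transformation undoes the other because Step~2 only adds red--red deletions and Step~1 only adds blue--blue additions / blue-incident changes — so a single pass of each suffices.
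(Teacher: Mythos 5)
Your decomposition is exactly the paper's: at a red node $v$ with an added blue edge $uv$, trade it for a red--red deletion at $v$; at a blue node $v$ with a removed red--blue edge $uv$, trade it for a blue--blue addition at $v$. In both cases the key fact, which you invoke only implicitly, is that optimality of $E'$ forces the offending node's neighbourhood in $E'$ to be \emph{exactly} balanced, since dropping the offending modification would lower the cost and can only harm that node. Two small points. First, in your Step~2 the justification for the existence of a red--red edge at $v$ to delete (``$v$ is in deficit, since $r(v)>b(v)\ge 0$'') looks at the original graph rather than at $E'$; the clean argument is that $N_{E'}(v)$ has equally many red and blue nodes by optimality, $u$ contributes one blue, so $v$ has at least one red neighbour in $E'$, and that edge lies in $E$ by Observation~\ref{noRRedges}. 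Second, the obstacle you flag in Step~1 dissolves with a one-line count: there is no cascading (adding a blue--blue edge only helps both endpoints), and no Hall-type argument is needed. Once Step~2 is done, a blue node $v$ incident to $s$ removed red--blue edges has, by the same balance argument, equally many red and blue neighbours in $E'$, namely $r(v)-s$ of each; hence the number of blue nodes not already adjacent to $v$ in $E'$ (excluding $v$ itself) is $|B|-1-\bigl(r(v)-s\bigr)\ge s$, since $r(v)\le |R|\le |B|-1$ by strict majority. So the $s$ replacement blue--blue edges always exist. The paper also takes this existence for granted, but your instinct that the strict-majority hypothesis is precisely what makes it work is correct.
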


\begin{proof}
We provide a method of transforming any optimal solution $E'$ to an instance $(V,E,f,k)$ of MIE into an optimal solution of the desired form.

Suppose there is an edge of $E'\setminus E$ incident to some red node $v$. By \Cref{noRRedges}, this edge connects $v$ to some node $u \in B$. $E'$ is an optimal solution, meaning $N_{E'}(v)$ contains at least as many blue nodes as red nodes. Removing $uv$ from $E'$ would reduce the size of $|E\setminus E'|+|E'\setminus E|$, and result in one fewer blue node in $N_{E'}(v)$ and one fewer red node in $N_{E'}(u)$, while leaving all other neighbourhoods unchanged. By the optimality of $E'$, $E'\setminus \{uv\}$ is not a solution, therefore it must be true that removing a blue node from $N_{E'}(v)$ creates a majority red neighbourhood, meaning there are exactly as many red nodes in $N_{E'}(v)$ as blue nodes. Therefore, we will alter $E'$ by removing all edges of $E'\setminus E$ incident to $v$, and removing an equal number of edges of $E'$ between $v$ and red nodes. This operation, maintains the size of $|E\setminus E'|+|E'\setminus E|$, decreases the number of edges of $E'\setminus E$ incident to a red node, and increases the number of edges between red nodes in $E\setminus E'$ by the same amount. 

Now, suppose there is an edge of $E\setminus E'$ incident to a blue node $u$. Then, by \Cref{noBBedges}, this edge connects $u$ to some node $v \in R$. $E'$ is an optimal solution, meaning $N_{E'}(v)$ contains at least as many blue nodes as red nodes. Adding $uv$ from $E'$ would reduce the size of $|E\setminus E'|+|E'\setminus E|$, and result in one more blue node in $N_{E'}(v)$ and one more red node in $N_{E'}(u)$, while leaving all other neighbourhoods unchanged. By the optimality of $E'$, $E'\cup \{uv\}$ is not a solution, therefore it must be true that adding a red node to $N_{E'}(u)$ creates a majority red neighbourhood, meaning there are exactly as many red nodes in $N_{E'}(u)$ as blue nodes. Therefore, we will alter $E'$ by adding all edges of $E\setminus E'$ incident to $u$, and adding an equal number of edges of $E$ between $u$ and blue nodes. This operation, maintains the size of $|E\setminus E'|+|E'\setminus E|$, decreases the number of edges of $E\setminus E'$ incident to a blue node, and increases the number of edges between blue nodes in $E'\setminus E$ by the same amount. 

Repeated application of these two operations returns an optimal solution $E'$ with no edges of $E'\setminus E$ incident to any red node, and no edges of $E\setminus E'$ incident to any blue node.
\end{proof}

\section{Majority Illusion Addition Elimination (MIAE)}\label{sec:AD}

The Majority Illusion Addition Elimination (MIAE) Problem is the Majority Illusion Elimination Problem (MIE) restricted to adding edges.

\subsection{A First Attempt}

Given $G=(B\cup R, E)$ we define an auxiliary graph $G^*=(B\cup R, E^*)$, called the {\em availability graph}. 
Motivated by Lemma~\ref{noRRedges}, its edges are the edges in the complement of $G$ excluding those between two red nodes. That is, $E^*$ = $\overline{E}(R,B) \cup \overline{E}(G[B])$. Now let 
      $$
   b^*(v) =
    \begin{cases}
\max(r(v)-b(v),0) &\mathrm{if}\ v \in R \\
b(v) - r(v) &\mathrm{if}\ v \in B
    \end{cases}
    $$

Now consider the following integer program.
\begin{align}
{\mathrm ({\tt IP-1a})}\qquad \qquad \max \quad \sum\limits_{e\in E^*(G^*)} -y_e  \label{bad:obj}\\
\text{s.t.} \ \hspace{3cm}\  \displaystyle\sum\limits_{
  u:u \in N_{E^*}(v)\cap B}   
  y_{uv}  &= b^*(v)  \qquad  v\in R(G^*) \label{bad:red}\\
  \displaystyle\sum\limits_{
  u:u \in N_{E^*}(v)\cap B}   
  -y_{uv} +
  \displaystyle\sum\limits_{
  u:u\in N_{E^*}(v) \cap R}    
  y_{uv} &\leq b^*(v)  \qquad v\in B(G^*) \label{bad:blue}\\
    y_e &\in \{0,1\} \qquad e\in E^*(G^*) \label{bad:01}
\end{align}

\begin{thm}\label{thm:badIP}
    The integer program~{\em ({\tt IP-1a})} is the MIAE problem.
\end{thm}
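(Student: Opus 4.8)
Since the statement is informal, I first pin down what ``is'' should mean: I will show that $(V,E,f,k)$ is a yes-instance of MIAE if and only if $({\tt IP-1a})$ admits a feasible point of objective value at least $-k$; equivalently, that the optimum of $({\tt IP-1a})$ equals the negative of the minimum number of edge additions needed to eliminate majority illusion. The plan is to exhibit a cost-preserving correspondence between the $0/1$ solutions of the IP and the edge sets solving MIAE that have the canonical structure provided by \Cref{noRRedges} and \Cref{capRBedges}.

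For the correspondence, associate to a feasible vector $y=(y_e)_{e\in E^*}$ the edge set $E'_y=E\cup\{e\in E^*: y_e=1\}$ and the graph $G'_y=(V,E'_y)$; conversely, associate to an edge set $E'\supseteq E$ whose new edges $E'\setminus E$ contain no red--red edge the vector $y_e=1$ iff $e\in E'\setminus E$. Note $E'\setminus E\subseteq E^*$ by the definition $E^*=\overline{E}(R,B)\cup\overline{E}(G[B])$, and $\sum_e(-y_e)=-|E'_y\setminus E|$, so the objective \eqref{bad:obj} is the negative of the number of added edges. Now I read off the two families of constraints. For a red node $v$, every availability-graph neighbour of $v$ is blue, so $\sum_{u\in N_{E^*}(v)\cap B}y_{uv}$ is exactly the number of new edges at $v$ in $G'_y$, all of them to blue nodes; since $v$ gains no new red neighbour, \eqref{bad:red} says that $v$ receives exactly $\max(r(v)-b(v),0)$ new blue neighbours, which makes $v$ non-deficient, $b'(v)=\max(b(v),r(v))\ge r(v)=r'(v)$, and matches the count forced by \Cref{capRBedges}. (This is always satisfiable: $v$ has $|B|-b(v)\ge r(v)-b(v)$ blue non-neighbours, since $r(v)<|B|$ as blue is a strict majority.) For a blue node $v$, writing $b^*(v)=b(v)-r(v)$ and noting the two sums in \eqref{bad:blue} count the new blue and the new red neighbours of $v$ respectively, the inequality \eqref{bad:blue} rearranges to $r'(v)\le b'(v)$ in $G'_y$ --- exactly the statement that blue $v$ is not under illusion.

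Combining the two, $y$ is feasible for $({\tt IP-1a})$ if and only if $G'_y$ has no node under illusion and every red node is incident to exactly $\max(r(v)-b(v),0)$ edges of $E'_y\setminus E$. Thus every feasible $y$ yields a feasible MIAE solution with $\sum_e y_e$ added edges; and conversely, by \Cref{noRRedges} and \Cref{capRBedges} an optimal MIAE solution has precisely this canonical form, hence yields a feasible $y$ with $\sum_e y_e$ equal to its number of added edges. So the two optima coincide and the associated decision versions agree. The only delicate point is the use of an equality (rather than an inequality) in \eqref{bad:red}: one must invoke \Cref{capRBedges} to know that insisting on the minimum number of new edges at each red node discards no optimal MIAE solution, and \Cref{noRRedges} to know those new edges all land on blue nodes; everything else is expanding the sums and the satisfiability check above.
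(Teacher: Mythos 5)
Your proof is correct and follows essentially the same approach as the paper's: you invoke \Cref{noRRedges} to restrict to $E^*$, \Cref{capRBedges} to justify the equality constraint at red nodes, and rearrange the blue-node constraint to the non-illusion condition, then match the objective with the number of added edges. Your write-up is somewhat more explicit about the two-way correspondence between feasible $y$'s and canonical MIAE solutions, but the key lemmas and their roles are identical to the paper's argument.
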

\begin{proof}
Observe, by (\ref{bad:01}),  that an optimal solution ${\bf y}$ to the integer program~{\em ({\tt IP-1a})} corresponds to a set of edges $Y$.
We will demonstrate that $Y$ is an optimal solution to the MIAE problem. 

First, by \Cref{noRRedges}, 
an optimal solution to the MIAE problem is a subset of $E^*$.
Thus constraint (\ref{bad:01}) is valid.

Second, by \Cref{capRBedges}, an optimal solution to the MIAE problem has the property that there are exactly 
$\max(r(v)-b(v),0)$ edges of incident to each red node $v$. Thus the equality constraints (\ref{bad:red}) are valid.

Third, any feasible solution to the MIAE problem must satisfy the property that the number of red nodes does not exceed the number of blue nodes in the neighbourhood of any blue node. That is, 
\begin{align}
  r(v) + \displaystyle\sum\limits_{
  u:u\in N_{E^*}(v) \cap R}    
  y_{uv} &\leq b(v) + \displaystyle\sum\limits_{
  u:u \in N_{E^*}(v)\cap B}   
  y_{uv}  \qquad v\in B(G^*) \label{bad:blue2}
\end{align}
But, for any blue node $v$, by definition $b^*(v)=b(v)-r(v)$. Thus rearranging (\ref{bad:blue2}) we obtain 
(\ref{bad:blue}).

Finally, an optimal solution to the MIAE problem contains as few edges of $E^*$ as possible.
However, minimizing $\sum_{e\in E^*(G^*)} y_e$ 
is equivalent to maximizing the objective function (\ref{bad:obj}).

It follows that an optimal solution ${\bf y}$ to integer program (IP-1a) is an optimal solution to the MIAE problem, as desired. 
\end{proof}

Therefore to solve the MIAE problem, we can simply solve the integer program ({\tt IP-1a}). But running an integer programming solver
is not polynomial time in the worst case, so we need a more efficient
method. Observe that the ({\tt IP-1a}) has size polynomial in the cardinality of the graph. Thus, its linear program relaxation can be solved in polynomial time. So, if the optimal solution to the LP relaxation is always integral then,
by Theorem~\ref{thm:badIP}, we would have an efficient algorithm for the MIAE problem. Unfortunately, this property is not guaranteed. The LP relaxation is not integral.

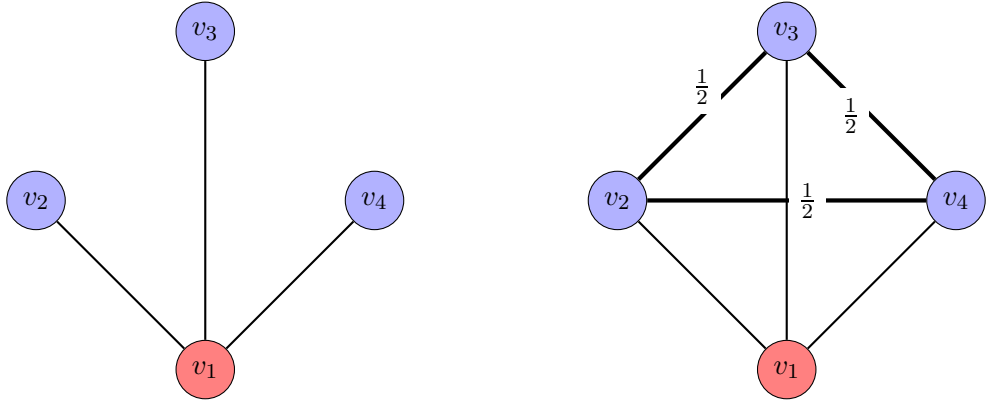
\begin{figure}[!ht]
    \centering
\tikzstyle{node}=[ draw=black,
 shape=circle, minimum size=10pt]
 \tikzstyle{edge}=[thick, -]


\begin{tikzpicture}[scale=0.75]
	\node [style=node,fill=red!50] (v1) at (0,-3) {$v_1$};
	\node [style=node,fill=blue!30] (v2) at (-3, 0) {$v_2$};
	\node [style=node,fill=blue!30] (v3) at (0, 3) {$v_3$};
    \node [style=node,fill=blue!30] (v4) at (3, 0) {$v_4$};
    \draw [style=edge] (v1) -- (v2);
    \draw [style=edge] (v1) -- (v3);
    \draw [style=edge] (v1) -- (v4);
\end{tikzpicture}
\qquad \qquad \qquad
\begin{tikzpicture}[scale=0.75]
	\node [style=node,fill=red!50] (v1) at (0,-3) {$v_1$};
	\node [style=node,fill=blue!30] (v2) at (-3, 0) {$v_2$};
	\node [style=node,fill=blue!30] (v3) at (0, 3) {$v_3$};
    \node [style=node,fill=blue!30] (v4) at (3, 0) {$v_4$};
    \draw [style=edge] (v1) -- (v2);
    \draw [style=edge] (v1) -- (v3);
    \draw [style=edge] (v1) -- (v4);
    \draw [style=edge, ultra thick] (v3) -- (v4) node [midway,above, left, fill=white] {$\frac12$};
    \draw [style=edge, ultra thick] (v2) -- (v4) node [midway,above,right, fill=white] {$\frac12$};
    \draw [style=edge, ultra thick] (v2) -- (v3) node [midway,above, fill=white] {$\frac12$};
\end{tikzpicture}
    \caption{A non-integral optimal solution to the linear program relaxation of ({\tt IP-1a}).}
    \label{fig:non-integral}
\end{figure}

To see this, consider the graph shown in the LHS of Figure~\ref{fig:non-integral}. Each of the three blue nodes suffers from majority illusion as it has only a red node
in its neighbour.
So they need an additional blue neighbour.
To rectify this the optimal solution ({\tt IP-1a}) requires the addition of two edges.
However, the linear program relaxation has optimal
value $\frac32$. This is illustrated in the RHS of Figure~\ref{fig:non-integral}; simply add a cycle on the
three blue nodes, shown in bold, and assign these new edges weight $\frac12$. 

\subsection{A Second Attempt}

So we desire an integer program that (i) exactly models the MIAE problem, (ii) has an integral LP relaxation, and (iii) the LP relaxation is solvable in polynomial time.

Towards this goal we make two alterations to the integer program ({\tt IP-1a}). The first is we alter the constraint (\ref{bad:blue}) for each blue node $v$ by adding $|N_{E^*}(v) \cap B|$ to both sides. This transformation is evidently innocuous. The new altered constraints are then
\begin{align}
    \displaystyle\sum\limits_{
  u:u \in N_{E^*}(v)\cap B}   
  (1-y_{uv}) +
  \displaystyle\sum\limits_{
  u:u\in N_{E^*}(v) \cap R}    
  y_{uv} &\leq b^*(v) + |N_{E^*}(v) \cap B| \qquad v\in B(G^*) 
\end{align}
For ease of exposition, we define a new function $b'(v)$ to account for this modification. Namely, let
$$
   b'(v) =
    \begin{cases}
\ \max(r(v)-b(v),0) &\mathrm{if}\ v \in R \\
\ |N_{E^*}(v) \cap B| + b(v) - r(v) &\mathrm{if}\ v \in B
    \end{cases}
    $$ 

Critically, we also add the following new set of constraints to the integer program.
\begin{equation}
\sum\limits_{
  \substack{
  e:e \in E^*(G^*[B])\\
  e \in E^*(G^*[X])\cup F}}
  (1-y_e)+ 
  \displaystyle\sum\limits_{
  \substack{
  e:e \in E^*(R,B)\\
  e \in E^*(G^*[X])\cup F}}
  y_e \ \leq \ 
  \Bigl\lfloor \frac12\Bigl(
\sum\limits_{
  v:v \in X}
  b'(v)+ |F| 
  \Bigr) \Bigr\rfloor  
\ \begin{array}{cc}
     & \forall X\subseteq B\cup R \\
     & \forall F \subseteq \delta_{E^*}(X)
\end{array}\label{new}
\end{equation}
We will show in Theorem~\ref{thm:IP} that the constraints are satisMIEd by any feasible solution to the MIAE problem.
For the motivation underlying these constraints consider
again the example in Figure~\ref{fig:non-integral}.

Consider the case of $X=B$ and $F=\emptyset$.
This gives a violated constraint for the non-integral solution ${\bf y}$ illustrated in Figure~\ref{fig:violated-constraint}.

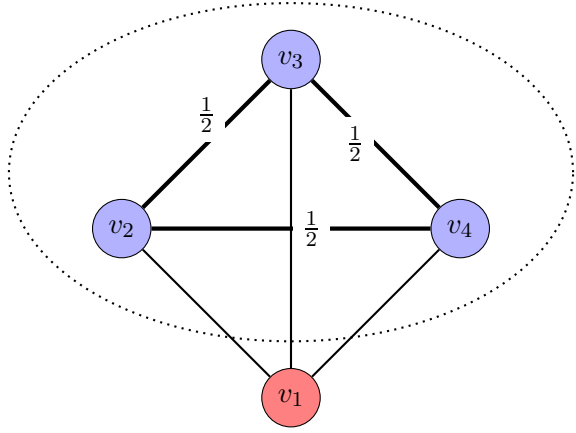
\begin{figure}[!ht]
    \centering
\tikzstyle{node}=[ draw=black,
 shape=circle, minimum size=10pt]
 \tikzstyle{edge}=[thick, -]
\begin{tikzpicture}[scale=0.75]
\draw [thick, dotted] (0,1) ellipse (5cm and 3cm);
	\node [style=node,fill=red!50] (v1) at (0,-3) {$v_1$};
	\node [style=node,fill=blue!30] (v2) at (-3, 0) {$v_2$};
	\node [style=node,fill=blue!30] (v3) at (0, 3) {$v_3$};
    \node [style=node,fill=blue!30] (v4) at (3, 0) {$v_4$};
    \draw [style=edge] (v1) -- (v2);
    \draw [style=edge] (v1) -- (v3);
    \draw [style=edge] (v1) -- (v4);
    \draw [style=edge, ultra thick] (v3) -- (v4) node [midway,above, left, fill=white] {$\frac12$};
    \draw [style=edge, ultra thick] (v2) -- (v4) node [midway,above,right, fill=white] {$\frac12$};
    \draw [style=edge, ultra thick] (v2) -- (v3) node [midway,above, fill=white] {$\frac12$};
\end{tikzpicture}
    \caption{A violated constraint: $1-y_{v_3v_4} + 1-y_{v_3v_2} + 1-y_{v_2v_4} \leq \lfloor \frac32 \rfloor$}
    \label{fig:violated-constraint}
\end{figure}
Observe, for this example, that (\ref{new}) becomes
\begin{equation}\label{new:B=x}
\sum\limits_{
  e:e \in E^*(G^*[B])}
  (1-y_e)
 \ \leq \ 
  \Bigl\lfloor \frac12\Bigl(
\sum\limits_{
  v:v \in B}
  b'(v) 
  \Bigr) \Bigr\rfloor  
\end{equation}
Now
\begin{align*}
\sum_{
  e:e \in E^*(G^*[B])}
  (1-y_e)
 &=  
  (1-y_{v_2v_3}) + (1-y_{v_2v_4}) + (1-y_{v_3v_4}) \\
  &= 3 - (y_{v_2v_3} + y_{v_2v_4} + y_{v_3v_4})\\
  &= 3- 3\cdot \frac12 \\
  &= \frac32
\end{align*}
On the other hand 
\begin{align*}
\Bigl\lfloor \frac12\Bigl(
\sum\limits_{
  v:v \in B}
  b'(v) 
  \Bigr) \Bigr\rfloor 
  &=
\Bigl\lfloor \frac12\Bigl(
\sum\limits_{
  v:v \in B}
\left( |N_{E^*}(v) \cap B| + b(v) - r(v) \right)
  \Bigr) \Bigr\rfloor \\
  &=
\Bigl\lfloor \frac12\Bigl(
\sum\limits_{
  v:v \in B}
\left( 2 + 0 - 1 \right)
  \Bigr) \Bigr\rfloor \\
  &=
\Bigl\lfloor \frac12
\sum\limits_{
  v:v \in B}
1
  \Bigr\rfloor \\
  &=\Bigl\lfloor \frac32 \Bigr\rfloor \\
  &= 1
\end{align*}
Ergo, the constraint (\ref{new:B=x}) is violated for the non-integral solution ${\bf y}$.
In fact, we will prove adding this set of new constraints will eliminate all non-integral solutions.

Putting all this together our new integer program is
\begin{align}
{\mathrm ({\tt IP-1b})}\qquad \max \quad \sum\limits_{e\in E^*(G^*)} -y_e  \nonumber\\
\text{s.t.} \ \hspace{2cm}\  \displaystyle\sum\limits_{
  u:u \in N_{E^*}(v)\cap B}   
  y_{uv}  &= b'(v)  \qquad  v\in R(G^*) \nonumber\\
  \displaystyle\sum\limits_{
  u:u \in N_{E^*}(v)\cap B}   
  (1-y_{uv}) +
  \displaystyle\sum\limits_{
  u:u\in N_{E^*}(v) \cap R}    
  y_{uv} &\leq b'(v)  \qquad v\in B(G^*) \nonumber\\
\sum\limits_{
  \substack{
  e:e \in E^*(G^*[B])\\
  e \in E^*(G^*[X])\cup F}}
  (1-y_e)+ 
  \displaystyle\sum\limits_{
  \substack{
  e:e \in E^*(R,B)\\
  e \in E^*(G^*[X])\cup F}}
  y_e &\leq 
  \Bigl\lfloor \frac12\Bigl(
\sum\limits_{
  v:v \in X}
  b'(v)+ |F| 
  \Bigr) \Bigr\rfloor  
\ \begin{array}{cc}
     & \forall X\subseteq B\cup R \\
     & \forall F \subseteq \delta_{E^*}(X)
\end{array} \nonumber\\  
    y_e &\in \{0,1\} \qquad e\in E^*(G^*) \nonumber
\end{align}

Our first step then is to verify that (i) still holds; that is, ({\tt IP-1b}) solves the MIAE problem.

\begin{thm}\label{thm:IP}
    The integer program~{\em ({\tt IP-1b})} is the MIAE problem.
\end{thm}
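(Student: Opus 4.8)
The plan is to show that the feasible integer solutions of $({\tt IP\text{-}1b})$ are exactly the feasible solutions of the MIAE problem (restricted to $E^*$), and that the objective correctly captures minimizing the number of added edges; then optimality follows as in \Cref{thm:badIP}. Since $({\tt IP\text{-}1b})$ differs from $({\tt IP\text{-}1a})$ only by the harmless rewriting of the blue constraints (adding $|N_{E^*}(v)\cap B|$ to both sides, which is precisely the passage from $b^*$ to $b'$) and by the new family of constraints \eqref{new}, and since \Cref{thm:badIP} already establishes that $({\tt IP\text{-}1a})$ models MIAE, the only thing that genuinely needs proof is that the new constraints \eqref{new} are \emph{valid}, i.e.\ every edge set $Y\subseteq E^*$ that solves MIAE satisfies them (so no feasible MIAE solution is lost), and conversely that adding them does not let in any integer vector that fails to be an MIAE solution (this direction is immediate, since \eqref{new} only \emph{removes} integer points).

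So the core of the argument is the following validity claim: fix $X\subseteq B\cup R$ and $F\subseteq \delta_{E^*}(X)$, and let $Y$ be any feasible MIAE solution viewed as a $0/1$ vector $\mathbf y$ on $E^*$. I would first interpret the left-hand side of \eqref{new} combinatorially. For a blue--blue edge $e\in E^*(G^*[B])$, the term $1-y_e$ is the indicator that $e$ is \emph{not} added; for a red--blue edge $e\in E^*(R,B)$, the term $y_e$ is the indicator that $e$ \emph{is} added. Restricting to edges lying in $E^*(G^*[X])\cup F$, the left-hand side counts a certain set $S$ of edges inside $X$ (plus the chosen boundary edges $F$) that are "unused in the direction that helps''. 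The idea is that each such edge, by \Cref{capRBedges} and the feasibility/tightness structure, contributes to a deficit that must be paid for within $X$, and summing the per-vertex budgets $b'(v)$ over $v\in X$ (together with $|F|$ to account for boundary edges) gives an upper bound of the form $\tfrac12(\sum_{v\in X} b'(v)+|F|)$ on $2|S|$, hence $|S|\le \lfloor \tfrac12(\sum_{v\in X}b'(v)+|F|)\rfloor$ after taking the floor since $|S|$ is an integer. Concretely I would argue by double counting: sum the blue constraints (in their $b'$ form) over $v\in X\cap B$ and the red equality constraints over $v\in X\cap R$, and observe that each edge counted on the left of \eqref{new} is counted on the left of these per-vertex inequalities with total coefficient $2$ (once at each endpoint, when both endpoints lie in $X$) or coefficient $1$ together with a contribution absorbed by $|F|$ (when the edge is a boundary edge in $F$); edges with an endpoint outside $X$ and not in $F$ only help the inequality. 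This yields $2\cdot(\text{LHS of }\eqref{new}) \le \sum_{v\in X} b'(v) + |F|$, and the floor on the right is then justified by integrality of the LHS.

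The main obstacle I anticipate is getting the bookkeeping exactly right for the boundary set $F$ and for the mixed nature of the sum (blue--blue edges enter as $1-y_e$, red--blue edges as $y_e$), so that the coefficient-$2$ / coefficient-$1$ accounting is airtight and the $+|F|$ term on the right is exactly what is needed — neither too weak nor too strong. In particular one must be careful that a blue--blue boundary edge and a red--blue boundary edge are handled consistently (the "$1-y_e$'' vs "$y_e$'' asymmetry interacts with which endpoint lies in $X$), and that red--red edges play no role because they are absent from $E^*$ by construction. A secondary point to check is the $v\in R$ case: the red equality constraints force exactly $\max(r(v)-b(v),0)$ added blue edges at each red node, so a red vertex in $X$ contributes its full $b'(v)$ to the budget and this matches the count of red--blue added edges incident to it. Once the double-counting identity is established, the theorem follows: \eqref{new} holds for every MIAE solution, so $({\tt IP\text{-}1b})$ has the same feasible integer points as $({\tt IP\text{-}1a})$ restricted appropriately, and since the objective $\sum_{e\in E^*} -y_e$ is unchanged, an optimal integer solution of $({\tt IP\text{-}1b})$ is an optimal MIAE solution, exactly as argued in \Cref{thm:badIP}.
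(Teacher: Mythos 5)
Your proposal is correct and follows essentially the same route as the paper's proof: reduce to showing validity of constraint~\eqref{new} for every integer solution of \texttt{(IP-1a)}, by summing the per-vertex constraints over $v\in X$ (so interior edges receive coefficient $2$ and boundary edges in $F$ receive coefficient $1$), then adding the slack bound $|F|\ge\sum_{e\in F}(\text{indicator term})$ to bring boundary edges up to coefficient $2$, and finally dividing by $2$ and taking the floor via integrality of the left-hand side. The ``obstacle'' you flag is exactly the bookkeeping the paper carries out, and your accounting is consistent with it.
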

\begin{proof}
By Theorem~\ref{thm:badIP}, it suffices to prove that
the integer program~{\em ({\tt IP-1b})} is the integer program~{\em ({\tt IP-1a})}.
As mentioned, the alterations to the previous inequality constraints are trivial. What remains is to show the newly added constraints (\ref{new}) do not eliminate any solutions. So take any solution $\bf{y}$ to the integer program~{\em ({\tt IP-1a})}, any collection $X\subseteq B\cup R$ and any subset of edges $F\subseteq \delta_{E^*}(X)$ leaving $X$ in $E^*(G^*)$. For each node $v \in B$, the constraint on $v$ gives  
\begin{equation}
\sum\limits_{\substack{
  e: v \in e\\
  e \in E^*(G^*[B])\\
  e \in E^*(G^*[X])}} (1 -y_e) + \displaystyle\sum\limits_{\substack{
  e: v \in e\\
  e \in E^*(R,B)\\
  e \in E^*(G^*[X])}} y_e+
  \displaystyle\sum\limits_{\substack{
  e: v\in e\\
  e \in E^*(G^*[B])\\
  e \in F}} (1- y_e) + 
\displaystyle\sum\limits_{\substack{
  e: v \in e\\
  e \in E^*(R,B)\\
  e \in  F}} y_e 
  \ \leq\  b'(v)
\end{equation}
Also, for each node $v \in R$, the constraint on $v$ gives  
\begin{equation}
 \displaystyle\sum\limits_{\substack{
  e: v \in e\\
  e \in E^*(R,B)\\
  e \in E^*(G^*[X])}} y_e+
\displaystyle\sum\limits_{\substack{
  e: v \in e\\
  e \in E^*(R,B)\\
  e \in  F}} y_e 
  \ \leq\  b'(v)
\end{equation}
Therefore, when we sum these constraints for every node in $X$, we get
\begin{equation}\label{eq:sup-dem}
2\cdot\sum\limits_{\substack{
  e:e \in E^*(G^*[B])\\
  e \in E^*(G^*[X])}} (1 -y_e) + 2\cdot\displaystyle\sum\limits_{\substack{
  e:e \in E^*(R,B)\\
  e \in E^*(G^*[X])}} y_e+
  \displaystyle\sum\limits_{\substack{
  e:e \in E^*(G^*[B])\\
  e \in F}} (1- y_e) + 
\displaystyle\sum\limits_{\substack{
  e:e \in E^*(R,B)\\
  e \in  F}} y_e 
  \ \leq\ \displaystyle\sum\limits_{v:v \in X} b'(v)
\end{equation}

Also, trivially, we have
\begin{align}\label{eq:triv}
|F| &= 
\sum\limits_{
e:e \in E^*(G^*[B])\cap F} 1 + \sum\limits_{
e:e \in E^*(R,B)\cap F
} 1 \nonumber \\
&\ge
\sum\limits_{
e:e \in E^*(G^*[B])\cap F}(1-y_e)+ \sum\limits_{
e:e \in E^*(R,B)\cap F
}y_e 
\end{align}
Summing (\ref{eq:sup-dem}) and  (\ref{eq:triv}), we obtain

\begin{equation*}
2\cdot\sum\limits_{\substack{
e:e \in E^*(G^*[B])\\
e \in E^*(G^*[X])\cup F}}(1-y_e)+ 2\cdot\sum\limits_{\substack{
e:e \in E^*(R,B)\\
e \in E^*(G^*[X])\cup F}} y_e 
\ \leq\  \sum\limits_{v:v \in X}b'(v)+ |F|
\end{equation*}

Finally divide both sides by two and take their floors. By the integrality of ${\bf y}$, we have
\begin{equation*}
\sum\limits_{\substack{
e:e \in E^*(G^*[B])\\
e \in E^*(G^*[X])\cup F}}(1-y_e)+ \sum\limits_{\substack{
e:e \in E^*(R,B)\\
e \in E^*(G^*[X])\cup F }}y_e 
\ \leq\  
\Bigl\lfloor \frac12\Bigl(\sum\limits_{v:v \in X}b'(v)+ |F|\Bigr) \Bigr\rfloor
\end{equation*}
Thus ${\bf y}$ is also a solution to the integer program~{\em ({\tt IP-1b})}.
\end{proof}

Again, Theorem~\ref{thm:IP} implies that we can solve the MIAE problem by solving the integer program ({\tt IP-1b}). 
We will do this efficiently by solving its linear program relaxation.
To achieve this successfully, we must show that the LP relaxation is
integral and also that it can be solved in polynomial time despite the fact that there are an exponential number of constraints of type (\ref{new}).

\subsection{A TDI System}
The LP relaxation of ({\tt IP-1b}) is
 \begin{align}
 {\mathrm ({\tt IP-1b})}\qquad \max \quad\sum\limits_{e\in E^*(G^*)} -y_e   \\
\text{s.t.} \ \hspace{2cm}\  \displaystyle\sum\limits_{
  u:u \in N_{E^*}(v)\cap B}   
  y_{uv}  &= b'(v)  \qquad  v\in R(G^*) \\
  \displaystyle\sum\limits_{
  u:u \in N_{E^*}(v)\cap B}   
  (1-y_{uv}) +
  \displaystyle\sum\limits_{
  u:u\in N_{E^*}(v) \cap R}    
  y_{uv} &\leq b'(v)  \qquad v\in B(G^*) \\
\sum\limits_{
  \substack{
  e:e \in E^*(G^*[B])\\
  e \in E^*(G^*[X])\cup F}}
  (1-y_e)+ 
  \displaystyle\sum\limits_{
  \substack{
  e:e \in E^*(R,B)\\
  e \in E^*(G^*[X])\cup F}}
  y_e &\leq 
  \Bigl\lfloor \frac12\Bigl(
\sum\limits_{
  v:v \in X}
  b'(v)+ |F| 
  \Bigr) \Bigr\rfloor  
\ \begin{array}{cc}
     & \forall X\subseteq B\cup R \\
     & \forall F \subseteq \delta_{E^*}(X)
\end{array} \\  
    0 \ \le \ y_e &\le 1 \qquad e\in E^*(G^*) 
\end{align}

It will now be cleaner to impose a change of variables.
Specifically, let
$$
    x_e =
    \begin{cases}
    y_e &\mathrm{if}\ e \in E^*(R,B)\\
    1-y_e &\mathrm{if}\ e \in E^*(G^*[B])
    \end{cases}
$$\label{cov}
The LP relaxation then simplifies to
\begin{align}
{\mathrm ({\tt LP-1c})}\qquad \max \quad \sum\limits_{e\in E^*(G[B])}x_e &- \sum\limits_{e\in E^*(R,B)}x_e - |E^*(G[B])| \\
\text{s.t.} \ \hspace{2cm}\  \sum\limits_{
  e:e \in\delta(v)}   
  x_e  &= b'(v)  \qquad  v\in R \\
  \sum\limits_{
  e:e \in\delta(v)}   
  x_e  &\le b'(v)  \qquad  v\in B \\
\sum\limits_{e:e \in E^*(G^*[X])} x_e  +\sum\limits_{e:e \in F} x_e &\leq \Bigl\lfloor \frac12\Bigl(\sum\limits_{v:v \in X}b'(v)+ |F|\Bigr) \Bigr\rfloor \qquad   
\begin{array}{l}
\forall X\subseteq B\cup R\\
\forall F \subseteq \delta_{E^*}(X)
\end{array} \\
    0 \ \le\  x_e &\le 1 \qquad e\in E^*(G^*) 
\end{align}

One method to prove that a polytope has integer nodes involves the concept of total dual integrality (TDI).
\begin{defn}
    A polytope $\textbf{Ax} \leq \textbf{b}$ is \textbf{Total Dual Integral (TDI)} if $\textbf{A,b}$ have rational entries, and if for any $\textbf{c} \in \mathcal{Z}^n$ such that  $\max~ \textbf{cx}$ s.t $\textbf{Ax} \leq \textbf{b}$ has an optimal solution, then there is an integer optimal dual solution.
\end{defn}

A proof that a linear program is TDI allows us to use the following result to show that it is integral.

\begin{thm}\label{TDIint}\cite{EDMONDS1977185}
    If a linear system $\{\textbf{Ax}\leq \textbf{b}\}$ is TDI, and $\textbf{b}$ is integral, then $\{\textbf{Ax}\leq \textbf{b}\}$ is integral.
\end{thm}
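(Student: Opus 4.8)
The plan is to reduce the claim to the optimal‑value characterization of integral polyhedra, for which the TDI hypothesis is exactly what is needed. First I would invoke the standard fact: a nonempty rational polyhedron $P=\{\textbf{x}:\textbf{Ax}\le\textbf{b}\}$ is integral if and only if $\max\{\textbf{cx}:\textbf{x}\in P\}$ is an integer for every $\textbf{c}\in\mathbb{Z}^n$ for which this maximum is finite. The forward direction is trivial, since the maximum of an integral objective over a set of integral vectors is an integer. The reverse direction is the substantive one: supposing $P$ has a minimal nonempty face $F=\{\textbf{x}:\textbf{A}'\textbf{x}=\textbf{b}'\}$ containing no integral point, the theory of linear diophantine systems produces an integral vector $\textbf{c}_0$ on which $\textbf{c}_0\textbf{x}$ takes a fixed non‑integer value throughout $F$; combining $\textbf{c}_0$ with an integral vector in the relative interior of the normal cone of $F$ yields an integral objective whose maximum over all of $P$ is attained exactly on $F$ and is therefore non‑integral, contradicting the hypothesis.

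Granting the characterization, the theorem is immediate. Let $\textbf{c}\in\mathbb{Z}^n$ be any objective for which $\max\{\textbf{cx}:\textbf{Ax}\le\textbf{b}\}$ is finite. By linear programming strong duality this equals $\min\{\textbf{by}:\textbf{A}\T\textbf{y}=\textbf{c},\ \textbf{y}\ge\textbf{0}\}$, and the dual minimum is attained. Since $\{\textbf{Ax}\le\textbf{b}\}$ is TDI and $\textbf{c}$ is integral, this minimum is attained at some integral $\textbf{y}^*$, and because $\textbf{b}$ is integral, $\textbf{by}^*\in\mathbb{Z}$. Hence the common optimal value is an integer, and as $\textbf{c}$ was arbitrary, the characterization shows that $P$ is integral.

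I expect the only real obstacle to be the characterization lemma, specifically the construction of an integral objective that singles out a prescribed ``bad'' face and certifies a fractional optimal value; this is the classical argument of Edmonds and Giles, and I would cite \cite{EDMONDS1977185} for it rather than reproduce it here. The implication from TDI and integrality of $\textbf{b}$ to integrality of $P$, by contrast, is nothing more than the LP duality computation above together with the definition of total dual integrality.
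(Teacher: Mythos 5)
Your proof is correct and follows the standard Edmonds--Giles argument; note that the paper itself does not prove this statement but simply cites it as a classical result. The duality computation you give (TDI supplies an integral dual optimum, integrality of $\textbf{b}$ forces the common optimal value to be an integer, and the optimal-value characterization of integral polyhedra then closes the argument) is precisely the proof in the cited reference, so your approach matches the one the paper points to.
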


We will prove that ({\tt LP-1c}) is total dual integral (TDI) to show its integrality.
To do this, we turn our attention to simple $b$-matchings.

\begin{thm}\cite{Schrijver}\label{bTDI}
The following simple $\textbf{b'}$-matching polytope is TDI:
\begin{align}
({\tt \textbf{b'}-matching~LP})\qquad \sum\limits_{
  e:e \in\delta(v)}   
  x_e  &\le b'(v)  \qquad  v\in B\cup R \\
\sum\limits_{e:e \in E^*(G^*[X])} x_e  +\sum\limits_{e:e \in F} x_e &\leq \Bigl\lfloor \frac12\Bigl(\sum\limits_{v:v \in X}b'(v)+ |F|\Bigr) \Bigr\rfloor \qquad   
\begin{array}{l}
\forall X\subseteq B\cup R\\
\forall F \subseteq \delta_{E^*}(X)
\end{array} \\
    0 \ \le\  x_e &\le 1 \qquad e\in E^*(G^*)  
\end{align}
\end{thm}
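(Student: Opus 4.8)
Theorem~\ref{bTDI} is the classical polyhedral description of simple $b$-matchings together with its total dual integrality, due to Edmonds and Pulleyblank; it is recorded exactly as cited in Schrijver~\cite{Schrijver}, so in the paper the cleanest course is simply to invoke that reference. If one instead wants a self-contained argument, the plan is a dual uncrossing argument, as follows.

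First I would fix an integral objective vector $\mathbf{c}$ for which the $\mathbf{b'}$-matching LP attains an optimum and write down its dual: a multiplier $\pi_v\ge 0$ for each degree constraint $x(\delta(v))\le b'(v)$, a multiplier $w_e\ge 0$ for each bound $x_e\le 1$, and a multiplier $z_{X,F}\ge 0$ for each blossom constraint indexed by $X\subseteq B\cup R$ and $F\subseteq\delta_{E^*}(X)$. The dual constraint for an edge $e=uv$ reads $\pi_u+\pi_v+w_e+\sum\{z_{X,F}: e\in E^*(G^*[X])\cup F\}\ \ge\ c_e$, with dual objective $\sum_v b'(v)\,\pi_v+\sum_e w_e+\sum_{X,F}\lfloor\tfrac12(b'(X)+|F|)\rfloor\,z_{X,F}$. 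Among all optimal dual solutions I would select one that is minimal under a suitable potential (in the spirit of the Cunningham--Marsh proof, minimizing $\sum_{X,F} z_{X,F}\,|X|$). The first step is to show that for such a solution the family $\mathcal{F}=\{(X,F): z_{X,F}>0\}$ can be taken laminar in its $X$-parts: if $X_1,X_2$ cross, one shifts dual mass onto $(X_1\cup X_2,F')$ and $(X_1\cap X_2,F'')$, with $F',F''$ chosen from the symmetric-difference bookkeeping of $F_1$ and $F_2$, exploiting the submodular behaviour of the right-hand side $\lfloor\tfrac12(b'(X)+|F|)\rfloor$; this preserves dual feasibility and the objective value while strictly decreasing the potential, a contradiction.

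Once $\mathcal{F}$ is laminar, the second step is to round $(\pi,w,z)$ to an integral optimal dual. Here one processes the laminar family from its minimal sets outward: on each minimal tight odd set the induced subsystem is an ordinary matching dual whose constraint matrix, after the parity correction coming from $\lfloor\tfrac12(\cdot)\rfloor$, has enough unimodular structure that Edmonds' rounding argument yields an integral optimum; contracting each handled odd set and iterating up the laminar tree produces a fully integral optimal dual for the original system. This mirrors Edmonds' and Pulleyblank's proofs of integrality of the matching and $b$-matching polytopes, and the same bookkeeping applies here because the paper's change of variables has already put the system into simple $b$-matching form on $G^*$.

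The step I expect to be the main obstacle is the uncrossing, specifically the interaction of the floor and of the edge set $F$ with the set operations: one must verify that $g(X,F):=\lfloor\tfrac12(b'(X)+|F|)\rfloor$ satisfies the ``strong'' submodular inequality needed so that uncrossing two tight blossom constraints again produces tight blossom constraints, with a legitimate choice of $F'$ and $F''$, and leaves no residual fractional dual mass that resists re-integralization. Checking this parity-sensitive submodularity is the technical heart of the Edmonds--Pulleyblank analysis, and for the present paper it is cleanest to quote it via~\cite{Schrijver} rather than reproduce it in full.
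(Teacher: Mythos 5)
The paper does not prove Theorem~\ref{bTDI}; it is invoked directly as a known result from Schrijver~\cite{Schrijver}, and you correctly identify that citing it is the appropriate course. Your optional sketch of a self-contained argument follows the standard Cunningham--Marsh style dual uncrossing and rounding used in the literature to establish TDI of the simple $b$-matching system, so it is consistent with (though strictly more than) what the paper itself supplies.
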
 

This TDI system is very similar to our system. We merely require one more result.

\begin{thm}\label{TDIpres}\cite{COOK198331}
Let $\textbf{Ax}\leq \textbf{b}$ be TDI and let $\textbf{A'x}\leq \textbf{b'}$ arise from $\textbf{Ax}\leq \textbf{b}$ by adding $-a^Tx\leq-\beta$ for some inequality $a^Tx \leq\beta$ in $\textbf{Ax}\leq \textbf{b}$. Then also $\textbf{A'x}\leq \textbf{b'}$ is TDI.
\end{thm}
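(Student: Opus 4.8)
The plan is to establish this through the Hilbert-basis characterization of total dual integrality. Recall (Giles--Pulleyblank; see \cite{Schrijver}) that a rational system $\mathbf{A}\mathbf{x}\le\mathbf{b}$ is TDI if and only if, for every nonempty face $F$ of the polyhedron $P=\{\mathbf{x}:\mathbf{A}\mathbf{x}\le\mathbf{b}\}$, the set of constraint vectors active on $F$, namely $\{a_i : a_i^{\top}x=b_i \text{ for all } x\in F\}$, is a Hilbert basis of the cone it generates. I would take this equivalence as the engine and reduce the statement to a short fact about Hilbert bases.

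First I would describe the new polyhedron. Writing the reversed inequality as the $i_0$-th row, so $a=a_{i_0}$ and $\beta=b_{i_0}$, the feasible region of $\mathbf{A}'\mathbf{x}\le\mathbf{b}'$ is exactly $P'=\{x\in P : a^{\top}x=\beta\}$, the face $F_0$ of $P$ exposed by $a^{\top}x\le\beta$; in particular $P'$ is itself a face of $P$. Hence the faces of $P'$ are precisely the faces $F$ of $P$ contained in $F_0$ (a face of a face is a face), and on any such $F$ the constraints of $\mathbf{A}'\mathbf{x}\le\mathbf{b}'$ active on $F$ are the old active vectors $\{a_i : i\in I_F\}$ together with the single extra vector $-a$, since $a^{\top}x=\beta$ holds identically on $F$. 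Crucially, $a$ itself already lies in $\{a_i : i\in I_F\}$, because $F\subseteq F_0$ forces $i_0\in I_F$. So everything reduces to the claim: if $\{v_1,\dots,v_k\}$ is a Hilbert basis of its generated cone $C$, then $\{v_1,\dots,v_k,-v_1\}$ is a Hilbert basis of $C+\mathbb{R}v_1$. I would prove this directly: take an integral $w=(\lambda_1-\mu)v_1+\sum_{j\ge 2}\lambda_j v_j$ with all $\lambda_j,\mu\ge 0$; if $\lambda_1\ge\mu$ then $w\in C$ and the Hilbert-basis property gives a nonnegative integer combination of $\{v_1,\dots,v_k\}$, while if $\lambda_1<\mu$ we pick an integer $t\ge\mu-\lambda_1$, note $w+t v_1\in C$ is integral, write $w+t v_1=\sum_j n_j v_j$ with $n_j\in\mathbb{Z}_{\ge 0}$, and conclude $w=\sum_j n_j v_j+t(-v_1)$, a nonnegative integer combination of $\{v_1,\dots,v_k,-v_1\}$. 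Applying this with $v_1=a$ on every face $F$ of $P$ contained in $F_0$ shows the active-vector set of $\mathbf{A}'\mathbf{x}\le\mathbf{b}'$ on every face of $P'$ is a Hilbert basis, so the characterization yields that $\mathbf{A}'\mathbf{x}\le\mathbf{b}'$ is TDI.

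I expect the only real friction to be bookkeeping around the face correspondence -- checking that adding the lone inequality $-a^{\top}x\le-\beta$ produces no feasible point outside $F_0$, that the faces of $F_0$ are exactly the faces of $P$ sitting inside $F_0$, and that $i_0$ is active on each of them -- rather than anything deep; the Hilbert-basis lemma itself is elementary. For completeness I note a duality-only alternative: given integral $c$ with $\delta:=\max\{c^{\top}x:\mathbf{A}'\mathbf{x}\le\mathbf{b}'\}$ finite, one shows that for a large enough integer $N$ the shifted objective $c+Na$ satisfies $\max\{(c+Na)^{\top}x:\mathbf{A}\mathbf{x}\le\mathbf{b}\}=\delta+N\beta$ (finiteness holds because the recession cone of $P$ lies in $\{d:a^{\top}d\le 0\}$, and on rays with $a^{\top}d=0$ optimality over $F_0$ bounds $c^{\top}d$); then TDI of $\mathbf{A}\mathbf{x}\le\mathbf{b}$ supplies an integral optimal dual $y$ with $\mathbf{A}^{\top}y=c+Na$, and $(\hat y,\hat z):=(y,N)$ is an integral feasible dual for the new system with $\mathbf{A}^{\top}\hat y-a\hat z=c$ and objective $b^{\top}y-N\beta=\delta$, hence optimal. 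There the subtle step is exactly the claim that the large-$N$ maximum is attained on $F_0$, which needs care when $P$ is unbounded or non-pointed -- a complication the Hilbert-basis proof avoids, so that is the version I would write up.
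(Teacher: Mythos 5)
The paper states this theorem as a cited black box from Cook (1983) and gives no proof of its own, so there is no in-paper argument to compare your proposal against; I'll evaluate it on its own merits.

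Your Hilbert-basis route is a correct and standard way to prove this. The face bookkeeping is handled properly: since $P' = \{x \in P : a^{\top}x = \beta\}$ is precisely the face $F_0$ of $P$ exposed by the retained inequality, the faces of $P'$ are exactly the faces of $P$ contained in $F_0$, and on every such face $F$ the row $a_{i_0} = a$ is already tight, so the reduction to the augmentation lemma (appending $-v_1$ to a Hilbert basis that already contains $v_1$) is exactly what is needed. There is one technical slip in the second case of that lemma: you pick an arbitrary integer $t \ge \mu - \lambda_1$ and then assert that $w + tv_1$ is integral. That assertion implicitly uses $v_1$ integral; but the paper's definition of TDI (and the Giles--Pulleyblank characterization in Schrijver) only requires $A$ to be \emph{rational}, so $tv_1$ need not be integral for an arbitrary integer $t$. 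The repair is easy: choose $t$ to be both at least $\mu - \lambda_1$ and a multiple of the common denominator of the entries of $v_1$; then $t \in \mathbb{Z}_{\ge 0}$, $tv_1 \in \mathbb{Z}^n$, $w + tv_1 \in C$ is integral, and the decomposition $w = \sum_j n_j v_j + t(-v_1)$ goes through unchanged. With that adjustment the argument is complete. The duality-only alternative you sketch would also suffice here, since the paper's Definition restricts to polytopes and so the large-$N$ attainment concern you flagged does not arise; that version is likely closer to Cook's original argument, while the Hilbert-basis version sidesteps the shifting entirely.
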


\begin{cor}\label{cor:TDI}
    The linear system of (LP-1b) is TDI.
\end{cor}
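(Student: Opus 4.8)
The plan is to derive Corollary~\ref{cor:TDI} by exhibiting \texttt{(LP-1c)} as obtained from the simple $\mathbf{b'}$-matching system of Theorem~\ref{bTDI} by the operations sanctioned in Theorems~\ref{TDIint}, \ref{bTDI}, and \ref{TDIpres}. The starting point is the simple $\mathbf{b'}$-matching polytope on the availability graph $G^*$, which Theorem~\ref{bTDI} gives us as TDI. Its constraints are: the degree bounds $\sum_{e \in \delta(v)} x_e \le b'(v)$ for all $v \in B \cup R$; the blossom/parity inequalities indexed by $(X,F)$; and the box constraints $0 \le x_e \le 1$. Comparing with \texttt{(LP-1c)}, the only structural difference is that for each red node $v \in R$ the inequality $\sum_{e \in \delta(v)} x_e \le b'(v)$ has been \emph{tightened to an equality}. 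An equality $a^Tx = \beta$ is exactly the pair of inequalities $a^Tx \le \beta$ and $-a^Tx \le -\beta$, so \texttt{(LP-1c)} is precisely the system obtained from the $\mathbf{b'}$-matching LP by adjoining, for each $v \in R$, the reverse inequality $-\sum_{e \in \delta(v)} x_e \le -b'(v)$ of a constraint already present.

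The key steps, in order, are: (i) invoke Theorem~\ref{bTDI} to assert the $\mathbf{b'}$-matching system on $G^*$ is TDI; (ii) observe that each of the degree constraints on red vertices is one of the inequalities of that system, so Theorem~\ref{TDIpres} applies with $a^Tx \le \beta$ taken to be $\sum_{e\in\delta(v)}x_e \le b'(v)$; (iii) apply Theorem~\ref{TDIpres} once for each $v \in R$ in turn — TDI-ness is preserved at each step, and after finitely many applications we have added every reverse inequality $-\sum_{e\in\delta(v)}x_e \le -b'(v)$, $v\in R$, so the resulting system (which is \texttt{(LP-1c)} minus its objective) is TDI; (iv) note this final system is exactly the constraint set of \texttt{(LP-1c)} (a linear objective over the same polytope does not affect TDI-ness, which is a property of $\{\mathbf{Ax}\le\mathbf{b}\}$), completing the corollary. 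As a downstream remark one also gets, via Theorem~\ref{TDIint} and the observation that $\mathbf{b'}$ and the floors $\lfloor \tfrac12(\sum_{v\in X}b'(v)+|F|)\rfloor$ are all integral, that the polytope is integral — which is what the subsequent argument needs.

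The one point requiring care — and the only real obstacle — is step (iii): Theorem~\ref{TDIpres} is stated for adding a \emph{single} reverse inequality, so I must check that it iterates. This is immediate because the hypothesis of Theorem~\ref{TDIpres} only asks that $a^Tx \le \beta$ be an inequality \emph{of the current system} $\mathbf{Ax}\le\mathbf{b}$, and after adding the reverse of $\sum_{e\in\delta(v_1)}x_e\le b'(v_1)$ the inequality $\sum_{e\in\delta(v_2)}x_e\le b'(v_2)$ is still present in the enlarged system, so Theorem~\ref{TDIpres} applies again with $v_2$, and so on through all of $R$. A small bookkeeping remark worth making explicit is that the paper's \texttt{(LP-1c)} writes the blossom inequalities only over $E^*(G^*[X])\cup F$ with the union possibly overlapping; since these are literally the inequalities appearing in Theorem~\ref{bTDI}, no translation is needed. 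Finally I would remark that the change of variables $x_e = y_e$ or $x_e = 1-y_e$ introduced before \texttt{(LP-1c)} is an affine unimodular substitution preserving both integrality and TDI-ness, so the conclusion transfers back to the LP relaxation of \texttt{(IP-1b)} as claimed in the corollary's statement (which refers to "(LP-1b)").
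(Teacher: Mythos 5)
Your proof is correct and follows essentially the same route as the paper: start from the TDI-ness of the simple $\mathbf{b'}$-matching polytope (Theorem~\ref{bTDI}), observe that \texttt{(LP-1c)} arises by tightening each red-vertex degree inequality to an equality (i.e., iteratively adjoining the reverse inequality, which Theorem~\ref{TDIpres} permits), and then transfer the conclusion back to \texttt{(LP-1b)} via the change of variables. You are somewhat more careful than the paper in spelling out the iteration of Theorem~\ref{TDIpres} and in flagging that the substitution $x_e \mapsto 1-y_e$ preserves TDI-ness, but the underlying argument is the same.
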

\begin{proof}
The $\textbf{b'}$-matching LP is total dual integral
by \Cref{bTDI}. Therefore, the linear system of ({\tt LP-1c}) is total dual integral, as it is the result of applying \Cref{TDIpres} to the constraints of the $\textbf{b'}$-matching LP.
But, via the given change of variables, this implies the linear system of ({\tt LP-1b}) is total dual integral.
\end{proof}

\begin{cor}\label{cor:addint}
    The linear program {\em ({\tt LP-1b})} is integral.
\end{cor}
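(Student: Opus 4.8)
The plan is straightforward: invoke the machinery already assembled in the excerpt. First I would recall \Cref{cor:TDI}, which established that the linear system underlying ({\tt LP-1b}) is total dual integral. Next I would observe that the right-hand side vector $\textbf{b'}$ of ({\tt LP-1b}) is integral: each entry is either $\max(r(v)-b(v),0)$ or $|N_{E^*}(v)\cap B| + b(v) - r(v)$ for a node $v$, or a floor term $\lfloor \tfrac12(\sum_{v\in X} b'(v) + |F|)\rfloor$ for the new constraints (\ref{new}), or one of the trivial bounds $0$ and $1$ — all of which are integers because $r(v), b(v), |N_{E^*}(v)\cap B|, |F|$ are integers. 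Then I would apply \Cref{TDIint}, the Edmonds--Giles theorem, which states that a TDI system with integral right-hand side defines an integral polyhedron. The conclusion is immediate: the feasible region of ({\tt LP-1b}) has integral vertices, hence ({\tt LP-1b}) is integral.

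One minor subtlety to address carefully: the integrality of a polyhedron defined by a TDI system with integral RHS is a statement about the polyhedron in the $x$-variables of ({\tt LP-1c}); I should note explicitly that the change of variables $x_e = y_e$ or $x_e = 1-y_e$ in (\ref{cov}) is an affine bijection with integer coefficients and integer constant terms, so integrality is preserved when translating back to the $y$-variables of ({\tt LP-1b}). This is exactly the same observation already used in the proof of \Cref{cor:TDI}, so it can be stated in one sentence. No real obstacle arises here — the entire content of this corollary is to package \Cref{cor:TDI} and \Cref{TDIint} together, with the trivial verification that the constant data is integral.

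In short, the proof is a two-line deduction: \emph{By \Cref{cor:TDI}, the linear system of ({\tt LP-1b}) is TDI; its right-hand side is integral; hence by \Cref{TDIint} the polytope is integral, and therefore the linear program ({\tt LP-1b}) is integral.} I do not anticipate any hard step — the work was all done in establishing TDI via \Cref{bTDI} and \Cref{TDIpres}; this corollary merely harvests it.
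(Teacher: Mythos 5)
Your proof is correct and follows exactly the paper's approach: the paper's proof is the one-liner ``This follows immediately from Corollary~\ref{cor:TDI} and Theorem~\ref{TDIint}.'' Your additional remarks (integrality of the right-hand side, integrality being preserved under the affine change of variables) merely spell out details the paper leaves implicit.
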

\begin{proof}
    This follows immediately from Corollary~\ref{cor:TDI} and Theorem~\ref{TDIint}.
\end{proof}

\subsection{A Polynomial Time Algorithm}
So ({\tt LP-1b}) is integral. But because it has exponentially many constraints, we need to show it can be solved in polynomial time. To do so, we apply a separation oracle.

\begin{defn}\label{sep}
    {\sc Separation Problem}: Given a convex set $P \subseteq \mathbb{R}^n$ and a vector $y \in \mathbb{Q}^n$, either determine that $y \in P$ or find a vector $d \in \mathbb{Q}^n$ such that $d^Tx < d^Ty$ for all $x \in P$.
\end{defn}

\begin{thm}\label{seplin}\cite{grötschel_lovász_schrijver_1988}
Let $n\in \mathbb{N}$ and $c \in \mathbb{Q}^n$. Let $P \subseteq \mathbb{R}^n$ be a rational polytope, and let $x_0\in \mathbb{Q}^n$ be a point in the interior of $P$. Let $T \in \mathbb{N}$ such that $size(x_0) \leq \log T$ and $size(x) \leq \log T$ for all nodes $x$ of $P$. Given $n,c,x_0, T$ and a polynomial-time oracle for the {\sc Separation Problem} for $P$, a node $x^*$ of $P$ attaining $max\{c^Tx: x\in P\}$ can be found in time polynomial in $n$, $\log T$ and $size(c)$.
\end{thm}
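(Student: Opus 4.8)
The plan is to prove this by the \emph{ellipsoid method}, reducing the stated strong optimization problem — outputting an actual node of $P$ — to a polynomial number of feasibility tests, each of which the ellipsoid method settles using only calls to the separation oracle. First I would reduce optimization to feasibility: for a threshold $\gamma\in\mathbb Q$ let $P_\gamma = P\cap\{x:c^{\top}x\ge\gamma\}$, whose separation oracle follows from the one for $P$ plus a single check of the inequality $c^{\top}x\ge\gamma$. Since $c$ is integral and every node of $P$ has size at most $\log T$, the optimum $\delta^{*}=\max\{c^{\top}x:x\in P\}$ is a rational whose numerator and denominator are bounded by a polynomial in $\log T$ and $\mathrm{size}(c)$; hence binary search on $\gamma$, closed off by a continued-fraction rounding step on the final interval, pins down $\delta^{*}$ exactly using polynomially many feasibility tests.

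Second I would run the ellipsoid method on each $P_\gamma$. Initialise with a ball $E_0$ centred at the origin of radius $2^{\mathrm{poly}(\log T)}$; it contains $P_\gamma\subseteq P$ because all nodes of $P$ have size at most $\log T$. At iteration $i$, call the oracle at the centre $z_i$ of $E_i$: if $z_i\in P_\gamma$ we are done, otherwise we obtain $d$ with $d^{\top}x<d^{\top}z_i$ for all $x\in P_\gamma$, and we set $E_{i+1}$ to be the minimum-volume ellipsoid containing the half-ellipsoid $E_i\cap\{x:d^{\top}x\le d^{\top}z_i\}$. The volume shrinks by a factor at most $e^{-1/(2(n+1))}$ per step, so after $O(n^{2}\,\mathrm{poly}(\log T))$ iterations either some $z_i$ lands in $P_\gamma$, or $\mathrm{vol}(E_i)$ has dropped below the smallest possible volume of a full-dimensional rational polytope whose nodes have size at most $\log T$, which certifies $P_\gamma=\emptyset$. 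Here the hypothesis that $x_0$ lies in the interior of $P$ is what guarantees full-dimensionality, hence the volume lower bound underlying the emptiness certificate.

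Third, from the value $\delta^{*}$ I would recover an actual node. The optimal face $F=P\cap\{x:c^{\top}x=\delta^{*}\}$ is nonempty and inherits a separation oracle, and the ellipsoid method produces a point of $F$. I then reduce dimension iteratively: given a point of the current face I use the oracle — on slightly perturbed points, or by rounding coordinates with bounded denominators — to exhibit an inequality of $P$ tight at that point, append it as an equality, and recurse; after at most $n$ rounds the face is a single point, which is a node of $P$ attaining the maximum. Each round costs $\mathrm{poly}(n,\log T,\mathrm{size}(c))$ and there are $O(n)$ of them, giving the claimed running time.

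The main obstacle is the tension between the ellipsoid method's reliance on inexact arithmetic and the demand for an \emph{exact} node of $P$: one must propagate enough precision (governed by $\log T$) through the iterations so that every ``almost feasible'' point encountered can be rounded to a genuinely feasible one, and one must use the size bound on the nodes of $P$ simultaneously to supply the initial enclosing ball and to lower-bound $\mathrm{vol}(P_\gamma)$ so that emptiness is detected within polynomially many steps. The volume-shrinkage estimate and the reduction to feasibility are routine; it is the bookkeeping of bit-sizes and the rounding-to-a-node step that require the real care.
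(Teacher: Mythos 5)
This theorem is a black-box import: the paper cites it directly from Gr\"{o}tschel, Lov\'{a}sz, and Schrijver and gives no proof, so there is no in-paper argument to compare yours against. Your ellipsoid sketch is the standard route and is broadly correct in outline, but two steps conceal the real technical content of the GLS result. First, in the binary-search phase you invoke a volume lower bound on $P_\gamma$ to certify emptiness, but $P_\gamma$ for $\gamma$ just below $\delta^{*}$ can have volume that is not bounded below by anything expressible in $T$ and $\mathrm{size}(c)$ alone; the GLS treatment handles this by working with the \emph{weak} separation/optimization problems (allowing an $\epsilon$-slack), choosing $\epsilon$ small enough relative to the bit-size bound, and only then rounding to an exact rational optimum via continued fractions --- you cannot run the ``strong'' feasibility test directly on $P_\gamma$ as written. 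Second, your step of ``exhibiting a tight inequality and appending it as an equality'' to reduce to a node is exactly where most of the care lives: the separation oracle does not hand you facets, and perturbing points or eyeballing bounded denominators will not, by itself, produce one. The standard device is either a lexicographic perturbation of the objective $c$ (so the maximizer becomes unique and is a node) or simultaneous Diophantine approximation to round an interior near-optimal point onto the optimal face, iterated over decreasing dimension. Acknowledging these as the two load-bearing components --- weak-to-strong rounding and Diophantine/lexicographic vertex recovery --- would turn your outline into a faithful summary of the actual proof.
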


\begin{thm}\label{sepmatch}\cite{RP1981,LGR2008} For undirected graphs $G, \textbf{u}: E(G) \rightarrow \mathbb{N} \cup \{\infty\}$ and $\textbf{b}:V(G) \rightarrow \mathbb{N}$, the {\sc Separation Problem} for the $\textbf{b}$-matching polytope of $(G,\textbf{u})$ can be solved in $O(n^4)$ time.
\end{thm}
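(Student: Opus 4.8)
The plan is to follow the classical odd-cut strategy of Padberg and Rao~\cite{RP1981} and reduce the separation problem to a single minimum-weight odd-cut computation in an auxiliary graph. First I would recall the polyhedral description of the $\mathbf{b}$-matching polytope of $(G,\mathbf{u})$: besides the box constraints $0\le x_e\le u_e$ and the degree inequalities $x(\delta(v))\le b(v)$, its only facets are the blossom inequalities $x(E[X])+x(F)\le\lfloor\tfrac12(\sum_{v\in X}b(v)+\sum_{e\in F}u_e)\rfloor$ indexed by a vertex set $X\subseteq V(G)$ and an edge set $F\subseteq\delta(X)$. Given a query point $x^*$, one checks the polynomially many box and degree constraints directly; assuming these hold, it remains to decide whether some blossom inequality is violated and, if so, to exhibit it.

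The key step is that, for a fixed $X$, the best $F$ is forced. Using the identity $2x^*(E[X])=\sum_{v\in X}x^*(\delta(v))-x^*(\delta(X))$ to eliminate the interior edges and rearranging, the blossom inequality for $(X,F)$ is violated exactly when $\sum_{v\in X}\sigma(v)+\sum_{e\in\delta(X)\setminus F}x^*_e+\sum_{e\in F}(u_e-x^*_e)<1$, where $\sigma(v):=b(v)-x^*(\delta(v))\ge 0$ is the degree slack and where the only relevant pairs are those with $\sum_{v\in X}b(v)+\sum_{e\in F}u_e$ odd. For a fixed $X$ the left-hand side is minimized by putting $e$ into $F$ iff $u_e-x^*_e<x^*_e$, so up to the parity condition a violated inequality exists iff $\sum_{v\in X}\sigma(v)+\sum_{e\in\delta(X)}\min(x^*_e,\,u_e-x^*_e)<1$ for some $X$. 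Then I would form an auxiliary graph on $V(G)$ together with one new apex vertex joined to each $v$ by an edge of weight $\sigma(v)$, and let each original edge $e$ carry weight $\min(x^*_e,u_e-x^*_e)$; the quantity above is then exactly the weight of the cut separating $X$ from its complement (and the apex) in this graph. Thus separation is equivalent to computing a minimum-weight cut that is \emph{odd} with respect to the $\{0,1\}$-labelling of the vertices determined by the parities of $\{b(v)\}$ and $\{u_e\}$, and testing its value against $1$; the minimizing cut, together with the forced $F$, yields the violated inequality.

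Finally I would invoke the standard fact that a minimum-weight odd cut can be read off a Gomory--Hu tree of the weighted auxiliary graph: among the $n-1$ tree edges, the lightest one inducing an odd cut (one shore containing an odd number of the odd-labelled vertices) realises the minimum odd cut. Constructing the Gomory--Hu tree takes $n-1$ max-flow/min-cut computations on a graph with $O(n)$ vertices and $O(n^2)$ edges, which with a cubic max-flow routine gives $O(n^4)$ overall, matching the stated bound; the analysis of~\cite{LGR2008} shows this is attainable with a careful implementation and the appropriate handling of the uncapacitated ($u_e=\infty$) edges. The part demanding the most care --- and hence the main obstacle --- is the parity bookkeeping concealed in the floor: only the pairs $(X,F)$ with $\sum_{v\in X}b(v)+\sum_{e\in F}u_e$ odd induce facets, so one must (i) set the vertex labels so that ``odd cut'' in the auxiliary graph coincides with this requirement, and (ii) argue that if the unconstrained minimizer has the wrong parity then no blossom inequality is violated after all, which amounts to re-routing a single edge between $F$ and $\delta(X)\setminus F$ at the ``next-cheapest'' cost --- exactly what the odd-cut formulation encodes. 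Making (i) and (ii) rigorous is the crux of the Padberg--Rao reduction.
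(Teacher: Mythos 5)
This theorem appears in the paper as a cited background result from Padberg--Rao (1981) and Letchford--Reinelt--Theis (2008); the paper itself gives no proof, so there is no in-paper argument to compare against. Your reconstruction follows the classical Padberg--Rao odd-cut reduction, and the outline is essentially correct: check box and degree constraints directly, rewrite the blossom inequality via $2x^*(E[X])=\sum_{v\in X}x^*(\delta(v))-x^*(\delta(X))$ so that violation becomes $\sum_{v\in X}\sigma(v)+\sum_{e\in\delta(X)\setminus F}x^*_e+\sum_{e\in F}(u_e-x^*_e)<1$, observe that the optimal $F$ for a fixed $X$ picks $e$ iff $u_e-x^*_e<x^*_e$, reduce to a minimum odd cut in an apex-augmented auxiliary graph, and extract it from a Gomory--Hu tree in $O(n^4)$ with a cubic max-flow routine.

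The one place you should be more careful is exactly the part you flag as the crux. If the auxiliary graph lives on $V(G)\cup\{\text{apex}\}$ without any edge subdivision, then the parity target $\sum_{v\in X}b(v)+\sum_{e\in F}u_e$ is not a pure vertex-labelling condition: the term $\sum_{e\in F}u_e$ depends both on which edges land in $\delta(X)$ and on which of those are chosen for $F$, so the quantity ``number of odd-labelled vertices on one side'' does not by itself determine the parity you need. The original Padberg--Rao handling resolves this by subdividing every edge, which blows the vertex count up to $O(n+m)$ and loses the $O(n^4)$ bound; getting $O(n^4)$ without that blow-up (and with capacities, including $u_e=\infty$) is precisely the technical contribution of Letchford--Reinelt--Theis. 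So your items (i) and (ii) are not merely ``bookkeeping to be made rigorous'' --- they are the substance of the theorem in the stated generality, and as written your sketch does not close them. As a high-level reconstruction of the cited result it is faithful, but a complete proof would have to spell out the LGR-style construction that encodes edge parity without subdivision.
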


\begin{thm}\label{addn4}
    The {\sc Separation Problem} for {\em ({\tt LP-1b})} can be solved in $O(n^4)$ time.
\end{thm}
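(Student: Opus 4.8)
The plan is to reduce the separation problem for ({\tt LP-1b}) to the separation problem for a single $\mathbf{b}$-matching polytope, so that we may invoke Theorem~\ref{sepmatch}. The obstacle is that ({\tt LP-1b}) is stated in the $y$-variables, while the clean TDI/matching description lives in the $x$-variables (via the change of variables on page~\pageref{cov}): $x_e = y_e$ for $e\in E^*(R,B)$ and $x_e = 1-y_e$ for $e\in E^*(G^*[B])$. Since this is an affine, coordinate-wise invertible change of variables, a violated inequality in $x$-space pulls back to a violated inequality in $y$-space and vice versa, and the map is computable in linear time; hence it suffices to give a polynomial-time separation oracle for the $x$-system of ({\tt LP-1c}).

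First I would observe that the $x$-system of ({\tt LP-1c}) consists of exactly three families: (a) the blob/$F$-constraints $\sum_{e\in E^*(G^*[X])}x_e + \sum_{e\in F}x_e \le \lfloor\frac12(\sum_{v\in X}b'(v)+|F|)\rfloor$ together with the per-vertex degree inequalities $\sum_{e\in\delta(v)}x_e\le b'(v)$ for $v\in B$ and the box constraints $0\le x_e\le 1$ — these are precisely the constraints of the $\mathbf{b'}$-matching polytope of Theorem~\ref{bTDI} (with $\mathbf u\equiv 1$); and (b) the extra equalities $\sum_{e\in\delta(v)}x_e = b'(v)$ for $v\in R$, which are just the pairs of opposite inequalities $\sum_{e\in\delta(v)}x_e\le b'(v)$ and $-\sum_{e\in\delta(v)}x_e\le -b'(v)$ that Theorem~\ref{TDIpres} adjoined. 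Separating family (b) is trivial: for each $v\in R$ check in $O(n)$ time whether $\sum_{e\in\delta(v)}x_e = b'(v)$, and if not output the violated one of the two inequalities; over all $v\in R$ this is $O(n^2)$.

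For family (a), the key point is that we can feed the given $y$ (equivalently $x$) directly into the $\mathbf{b'}$-matching separation routine of Theorem~\sepmatch, which by that theorem runs in $O(n^4)$ time on the graph $G^*$ with the vertex bounds $b'$ and unit edge capacities. It returns either a certificate that all matching constraints hold, or a violated inequality — which, translated back through the change of variables, is a violated constraint of ({\tt LP-1b}). Running the $O(n^2)$ check for family (b) and then the $O(n^4)$ routine for family (a), and reporting the first violation found (or ``feasible'' if none), solves the separation problem for ({\tt LP-1b}) in $O(n^4)$ time overall. Combined with Theorem~\ref{seplin} (taking $x_0$ a suitable interior point and $T$ exponential in the input size, both of polynomial bit-length), this yields a polynomial-time algorithm to optimize over ({\tt LP-1b}); by Corollary~\ref{cor:addint} the optimum is integral, and by Theorem~\ref{thm:IP} it solves MIAE.
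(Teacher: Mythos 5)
Your proposal is correct and follows essentially the same route as the paper's proof: apply the change of variables $y\mapsto x$, invoke the $\textbf{b'}$-matching separation oracle of Theorem~\ref{sepmatch} on $G^*$ with unit edge capacities, and separately check the $|R|$ equality constraints. The paper's own argument is slightly terser (it doesn't spell out that a violated $x$-space inequality pulls back to a valid separating hyperplane in $y$-space), but the decomposition and the $O(n^4)$ bound are identical.
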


\begin{proof}
By Theorem \ref{seplin} it suffices to have a polynomial-time algorithm for the {\sc Separation Problem} for the vector ${\bf y}$ and the feasible set in ({\tt LP-1b}), which we will refer to as $P$. Such an algorithm exists for the convex set in ({\tt b’-matching LP}), which we will refer to as $Q$, by Theorem \ref{sepmatch}. If we define vector ${\bf x}\in \mathbb{R}^{E^*}$ by setting 
$$
    x_e =
    \begin{cases}
    y_e &\mathrm{if}\ e \in E^*(R,B)\\
    1-y_e &\mathrm{if}\ e \in E^*(G^*[B])
    \end{cases}$$ 
    and then change variables from ${\bf y}$ to ${\bf x}$, we can transform $P$ into a subset of $Q$. This means if ${\bf x} \not\in Q$, then ${\bf y} \not\in P$, and we can determine this in $O(n^4)$ time. We can also determine if ${\bf x} \in Q$ in $O(n^4)$ time, and if this is the case, we just need to check $|R|$ equality constraints to determine if ${\bf y} \in P$, which we can do in linear time.
\end{proof}

\begin{cor}\label{12IA}
    The {\sc $\frac12$ -Illusion Addition Problem} can be solved in polynomial time.
\end{cor}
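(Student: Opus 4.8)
The plan is to assemble the statement directly from the machinery already built in this section. Corollary~\ref{12IA} concerns the $\frac12$-Illusion Addition problem, which by definition is exactly the MIAE problem (recall the $p$-IA problem at $p=\frac12$ is precisely ``add at most $k$ edges so that no node is under majority illusion,'' and Observation~\ref{noRRedges}, Lemma~\ref{capRBedges} et cetera were all stated for MIAE). So it suffices to exhibit a polynomial-time algorithm for MIAE, and the preceding subsections have done essentially all the work.

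The key steps, in order, are as follows. First, by Theorem~\ref{thm:IP} the integer program ({\tt IP-1b}) exactly models MIAE, so an optimal integral solution to its feasible region yields an optimal edge set $Y=E'\setminus E$, and in particular we can read off the minimum number of added edges and compare it to $k$. Second, by Corollary~\ref{cor:addint} the linear program ({\tt LP-1b}) is integral, so an optimal \emph{vertex} solution of the LP relaxation is automatically an optimal solution of the integer program; hence solving the LP to a vertex optimum suffices. Third, although ({\tt LP-1b}) has exponentially many constraints of type (\ref{new}), Theorem~\ref{addn4} gives an $O(n^4)$ separation oracle for its feasible region, and Theorem~\ref{seplin} (the equivalence of separation and optimization for rational polytopes, with the size bounds supplied by the fact that all vertices are integral $0/1$-ish vectors of polynomial encoding length) then lets us optimize any integral objective — in particular $\sum_e -y_e$ — over ({\tt LP-1b}) in time polynomial in $n$. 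Putting these three facts together: run the ellipsoid method (or any LP algorithm using the separation oracle) to find a vertex optimum ${\bf y}^*$ of ({\tt LP-1b}); by integrality it is $0/1$; by Theorem~\ref{thm:IP} the corresponding edge set $Y^*$ is a minimum-cardinality set of additions eliminating majority illusion; answer ``yes'' iff $|Y^*|\le k$. Strictness of the blue majority guarantees feasibility, so the LP is never infeasible and the algorithm always returns a bona fide solution.

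I do not expect a genuine obstacle here, since every ingredient is already proved; the only care needed is bookkeeping. One should note that $p=\frac12$ is the regime in which the $b'$-function and the constraint structure align with the simple $b'$-matching polytope of Theorem~\ref{bTDI} — this is exactly why the TDI argument of Corollary~\ref{cor:TDI} goes through, and it is the reason the corollary is stated for $\frac12$ rather than general $p$. The mildest technical point is verifying the hypotheses of Theorem~\ref{seplin}: one must exhibit an interior point $x_0$ and a bound $T$ with $\mathrm{size}(x_0),\mathrm{size}(x)\le \log T$ for all vertices $x$; since ({\tt LP-1b}) is integral with entries in $\{0,1\}$ its vertices have encoding length $O(|E^*|)=O(n^2)$, and a suitable strictly-interior point (e.g.\ a small perturbation of a feasible integral point, or observing the polytope is full-dimensional after fixing the equality constraints) furnishes $x_0$; both are polynomial in the input size. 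With that, Theorem~\ref{seplin} applies and the running time is polynomial. Hence the $\frac12$-Illusion Addition problem — equivalently MIAE — is solvable in polynomial time.
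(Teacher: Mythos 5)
There is a genuine gap, and it is precisely the point the paper's own (very short) proof is at pains to address. You assert that the $\frac12$-Illusion Addition problem ``by definition is exactly the MIAE problem,'' and later you lean on this when you write that ``strictness of the blue majority guarantees feasibility, so the LP is never infeasible.'' But the two problems are \emph{not} the same: MIAE (as stated in Section~\ref{sec:intro}) carries the built-in promise that blue is a strict majority, whereas Definition~\ref{def:pMIAE} of $p$-IA makes no such promise even at $p=\frac12$. In the $\frac12$-IA problem the input may have $|B|\le |R|$, or more pointedly may contain a blue vertex $v$ with $r(v)\ge |B|$; for such a vertex no amount of edge addition can fix the deficit, since there simply are not enough blue vertices to serve as new neighbours. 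In that regime ({\tt LP-1b}) (equivalently ({\tt LP-1c})) is infeasible — the equality constraints for the red side and the degree caps for the blue side cannot all be met — so the claim that ``the LP is never infeasible'' is false, and your algorithm as written does not specify what to output.

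The repair is easy and is exactly what the paper does: run the same LP machinery, and if the LP is infeasible (or its optimum exceeds $k$), return NO; the infeasibility case corresponds precisely to instances where no set of added edges of any size eliminates the illusion. You should state explicitly that dropping the strict-majority promise is the \emph{only} difference between MIAE and $\frac12$-IA, and that the only new behaviour this introduces is possible LP infeasibility, which your algorithm detects and reports as a NO instance. Everything else in your argument — Theorem~\ref{thm:IP} to model the problem, Corollary~\ref{cor:addint} for integrality, Theorem~\ref{addn4} for the separation oracle, and Theorem~\ref{seplin} to optimize in polynomial time — matches the paper's approach and is fine once this case is handled.
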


\begin{proof}
    LP-1c will solve the $\frac12$-Illusion Addition problem just as it solves MIAE. When a strict majority of the vertices are blue, the two problems are equivalent. The removal of the strict majority requirement, which is the only diffence between MIAE and $\frac12$-IA, changes nothing except that it allows input graphs where LP - 1c has no solution, for example in cases where there is a blue vertex adjacent to more red vertices than there are blue vertices. In these cases, there is no solution of size less than or equal to $k$, meaning $\frac12$-IA is a NO instance.
\end{proof}

\section{Majority Illusion Removal Elimination (MIRE)}\label{sec:remove}

The Majority Illusion Removal Elimination (MIRE) Problem is the Majority Illusion Elimination (MIE) Problem restricted to removing edges


Given $G=(B\cup R, E)$ we define an auxiliary graph $G^*=(B\cup R, E^*)$, called the {\em removable graph}. Its edges are the edges in $G$ excluding those between two blue nodes. That is, $E^*=E(R,B) \cup E(G[R])$. Now let 
      $$
   b^*(v) =
    \begin{cases}
\max(r(v)-b(v),0) &\mathrm{if}\ v \in B \\
b(v) - r(v) &\mathrm{if}\ v \in R
    \end{cases}
    $$
Now consider the following integer program.
\begin{align}
{\mathrm ({\tt IP-2})}\qquad \max \quad \sum\limits_{e\in E^*(G^*)} -y_e  \label{badsub:obj}\\
\text{s.t.} \ \hspace{3cm}\  \displaystyle\sum\limits_{
  u:u \in N_{E^*}(v)\cap R}   
  y_{uv}  &= b^*(v)  \qquad  v\in B(G^*) \label{bad:subblue}\\
  \displaystyle\sum\limits_{
  u:u \in N_{E^*}(v)\cap R}   
  -y_{uv} +
  \displaystyle\sum\limits_{
  u:u\in N_{E^*}(v) \cap B}    
  y_{uv} &\leq b^*(v)  \qquad v\in R(G^*) \label{bad:subred}\\
    y_e &\in& \{0,1\} \qquad e\in E^*(G^*) \label{badsub:01}
\end{align}

\begin{thm}
    The integer program {\em ({\tt IP-2})} is the MIRE problem
\end{thm}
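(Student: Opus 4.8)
The plan is to mirror, essentially verbatim, the argument used for Theorem~\ref{thm:badIP} (the claim that {\tt IP-1a} is the MIAE problem), but with the roles of blue and red, and of addition and removal, interchanged. So the strategy is to show that an optimal $0/1$ solution ${\bf y}$ to ({\tt IP-2}) corresponds to a set of edges $Y\subseteq E^*$ whose removal from $G$ is an optimal solution to MIRE, and conversely.

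First I would invoke the preliminary observations to justify the feasible region. By Observation~\ref{noBBedges}, an optimal MIRE solution deletes no edge between two blue nodes, so the deleted set is contained in $E^*=E(R,B)\cup E(G[R])$; this makes the variable set (\ref{badsub:01}) the right one. Then, by Lemma~\ref{capBRedges}, an optimal MIRE solution deletes exactly $\max(r(v)-b(v),0)$ edges incident to each blue node $v$; since deleted edges at a blue node all go to red nodes (again by Observation~\ref{noBBedges}), this is exactly the equality constraint (\ref{bad:subblue}) with $b^*(v)=\max(r(v)-b(v),0)$ for $v\in B$.

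Next I would check the inequality (\ref{bad:subred}) encodes feasibility at red nodes. After deleting the edges indicated by ${\bf y}$, a red node $v$ must not be in deficit, i.e. the surviving red-degree must not exceed the surviving blue-degree. The edges at $v$ that can be deleted are those of $E^*$ incident to $v$: the red--red edges (deleting one lowers $v$'s red-count) and the red--blue edges (deleting one lowers $v$'s blue-count). Writing out ``surviving $r(v)\le$ surviving $b(v)$'' and rearranging gives, using $b^*(v)=b(v)-r(v)$ for $v\in R$, exactly (\ref{bad:subred}); this is the same one-line rearrangement as in the proof of Theorem~\ref{thm:badIP}. Finally, minimizing the number of deleted edges $\sum_{e\in E^*}y_e$ is the same as maximizing the objective $\sum_{e\in E^*}-y_e$, so optimal solutions correspond. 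Conversely, any feasible ${\bf y}$ yields a feasible MIRE edit: the equalities guarantee no blue node is left in deficit and the inequalities guarantee no red node is, while the red--red and blue--blue nodes already satisfied... wait, one must note that a red node could a priori have been in surplus and stay feasible, and that blue nodes not in deficit have $b^*(v)=0$ so no incident edge is deleted --- all of which the constraints already handle.

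I do not expect a genuine obstacle here; the only thing to be careful about is the bookkeeping of which deletions decrease a red-count versus a blue-count at a red node $v$ (red--red edges vs.\ red--blue edges), and confirming that, as in MIAE, we never want to delete ``too many'' incident edges at a blue node --- but that is precisely what Lemma~\ref{capBRedges} supplies, so the equality constraint is legitimate rather than merely an upper bound. Thus the proof is a symmetric transcription of the MIAE case and should be no longer than that one.
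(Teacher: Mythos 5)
Your proposal is correct and matches the paper's approach: the paper simply notes that ({\tt IP-2}) is symmetric to ({\tt IP-1a}) with the roles of $B$ and $R$ swapped, and your argument is exactly that symmetric transcription of the proof of \Cref{thm:badIP}, invoking \Cref{noBBedges} and \Cref{capBRedges} in place of \Cref{noRRedges} and \Cref{capRBedges}. The bookkeeping you verify for constraint (\ref{bad:subred}) at red nodes is correct and is the same rearrangement as in the MIAE case.
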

\begin{proof}
The proof is the same as for the MIAE case, as (IP-2) is symmetric to (IP-1) with the roles of $B$ and $R$ swapped. 
\end{proof}

As one might expect, our strategy for solving the MIRE problem will mirror the approach for the MIAE problem. We will modify ({\tt IP-2}) to create a linear program ({\tt LP-2}) which is symmetric to ({\tt LP-1b}).

Just like for ({\tt LP-1b}), we add a value to both sides of each inequality. In this case, it is $|N_{E^*}(v)\cap R|$. Once again we define function $b'(v)$ to account for this change:
$$
   b'(v) =
    \begin{cases}
\max(r(v)-b(v),0) &\mathrm{if}\ v \in B \\
|N_{E^*}(v) \cap R| + b(v) - r(v) &\mathrm{if}\ v \in R
    \end{cases}
    $$ 

Finally, we add an analogous set of new constraints to the integer program.
\begin{equation}
\sum\limits_{
  \substack{
  e:e \in E^*(G^*[R])\\
  e \in E^*(G^*[X])\cup F}}
  (1-y_e)+ 
  \displaystyle\sum\limits_{
  \substack{
  e:e \in E^*(R,B)\\
  e \in E^*(G^*[X])\cup F}}
  y_e \ \leq \ 
  \Bigl\lfloor \frac12\Bigl(
\sum\limits_{
  v:v \in X}
  b'(v)+ |F| 
  \Bigr) \Bigr\rfloor  
\ \begin{array}{cc}
     & \forall X\subseteq B\cup R \\
     & \forall F \subseteq \delta_{E^*}(X)
\end{array}
\end{equation}

The LP relaxation of ({\tt IP-2}) is
 \begin{align}
{\mathrm ({\tt LP-2})}\qquad \max \quad \sum\limits_{e\in E^*(G^*)} -y_e  \\
\text{s.t.} \ \hspace{2cm}\  \displaystyle\sum\limits_{
  u:u \in N_{E^*}(v)\cap R}   
  y_{uv}  &= b'(v)  \qquad  v\in B(G^*) \\
  \displaystyle\sum\limits_{
  u:u \in N_{E^*}(v)\cap R}   
  (1-y_{uv}) +
  \displaystyle\sum\limits_{
  u:u\in N_{E^*}(v) \cap B}    
  y_{uv} &\leq b'(v)  \qquad v\in R(G^*) \\
\sum\limits_{
  \substack{
  e:e \in E^*(G^*[R])\\
  e \in E^*(G^*[X])\cup F}}
  (1-y_e)+ 
  \displaystyle\sum\limits_{
  \substack{
  e:e \in E^*(R,B)\\
  e \in E^*(G^*[X])\cup F}}
  y_e &\leq 
  \Bigl\lfloor \frac12\Bigl(
\sum\limits_{
  v:v \in X}
  b'(v)+ |F| 
  \Bigr) \Bigr\rfloor  
\ \begin{array}{cc}
     & \forall X\subseteq B\cup R \\
     & \forall F \subseteq \delta_{E^*}(X)
\end{array} \\  
    0 \ \le \ y_e &\le 1 \qquad e\in E^*(G^*) 
\end{align}

\begin{cor}
    The linear program {\em ({\tt LP-2})} is integral.
\end{cor}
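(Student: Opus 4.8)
The plan is to mirror precisely the argument developed for \texttt{({\tt LP-1b})} in the MIAE section, since \texttt{({\tt LP-2})} is obtained from the $\textbf{b'}$-matching setup by exactly the same operations, only with the roles of $B$ and $R$ interchanged. First I would note that by the change of variables
$$
    x_e =
    \begin{cases}
    y_e &\mathrm{if}\ e \in E^*(R,B)\\
    1-y_e &\mathrm{if}\ e \in E^*(G^*[R])
    \end{cases}
$$
the linear system underlying \texttt{({\tt LP-2})} transforms into a system \texttt{({\tt LP-2c})} whose inequality constraints are exactly those of the simple $\textbf{b'}$-matching polytope of Theorem~\ref{bTDI} on the graph $G^*$ (with the $\leq$ degree constraints at every node and the blossom-type constraints), together with the $|B|$ extra equality constraints coming from the blue nodes, which arise by replacing the corresponding $\leq$ constraints with equalities --- equivalently, by adding their negations, i.e.\ inequalities of the form $-a^Tx \leq -\beta$ where $a^Tx\leq \beta$ is already in the system.

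Then I would invoke the same chain of results used in Corollary~\ref{cor:TDI}: the $\textbf{b'}$-matching LP is TDI by Theorem~\ref{bTDI}; adding, for each blue node, the reverse of an existing inequality preserves TDI-ness by Theorem~\ref{TDIpres}; so the system \texttt{({\tt LP-2c})} is TDI. Reversing the (affine, unimodular) change of variables shows the linear system of \texttt{({\tt LP-2})} is TDI. Since its right-hand side vector $\textbf{b'}$ is integral (each $b'(v)$ is an integer, and the floors $\lfloor\frac12(\sum_{v\in X}b'(v)+|F|)\rfloor$ are integers), Theorem~\ref{TDIint} gives that \texttt{({\tt LP-2})} is integral, which is the claim.

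I do not expect any genuine obstacle here: the entire argument is a transcription of the MIAE case with $B$ and $R$ swapped, and the statement ``the proof is the same as for the MIAE case'' for the preceding theorem already signals this. The only point requiring the tiniest care is bookkeeping in the change of variables --- checking that $1-y_e$ is substituted precisely on the edges inside $E^*(G^*[R])$ (the intra-red edges, which are the ones playing the role that the intra-blue edges played in MIAE) and that $y_e$ is kept on the red--blue edges, so that after substitution the blossom constraints and degree constraints line up exactly with Theorem~\ref{bTDI}. Once that is verified, Corollary~\ref{cor:TDI}'s proof applies verbatim.
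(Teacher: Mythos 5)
Your proposal is correct and follows exactly the route the paper takes (the paper's proof is literally "identical to the proof of Corollary~\ref{cor:addint}", which in turn relies on Corollary~\ref{cor:TDI} and Theorem~\ref{TDIint}). You fill in the change of variables on $E^*(G^*[R])$, the invocation of Theorem~\ref{bTDI} for the $\textbf{b'}$-matching polytope, and Theorem~\ref{TDIpres} for the equality constraints at the blue nodes, which is precisely the $B\leftrightarrow R$ transcription the authors intend.
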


\begin{proof}
    The proof is identical to the proof of \Cref{cor:addint}.
\end{proof}

\begin{cor}
    The {\sc Separation Problem} for linear program {\em ({\tt LP-2})} can be solved in $O(n^4)$ time.
\end{cor}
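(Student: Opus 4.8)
The plan is to mirror the argument given for the MIAE case in Theorem~\ref{addn4}, exploiting the fact that ({\tt LP-2}) was constructed to be symmetric to ({\tt LP-1b}) with the roles of $B$ and $R$ interchanged. By Theorem~\ref{seplin}, it suffices to exhibit a polynomial-time algorithm for the {\sc Separation Problem} for the feasible polytope $P$ of ({\tt LP-2}); the running time of the optimization then follows since the input sizes are polynomial in $n$.

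First I would record the analogue of the change of variables~\eqref{cov} appropriate to the removal setting: set $x_e = y_e$ for $e\in E^*(R,B)$ and $x_e = 1-y_e$ for $e\in E^*(G^*[R])$. Under this substitution, the inequality constraints on nodes $v\in R$ become $\sum_{e\in\delta(v)} x_e \le b'(v)$, the floor constraints become exactly the simple $\mathbf{b'}$-matching constraints of Theorem~\ref{bTDI}, and the box constraints $0\le y_e\le 1$ map to $0\le x_e\le 1$; this is precisely the polytope $Q$ of the $\mathbf{b'}$-matching LP. The only additional constraints of $P$ not present in $Q$ are the $|B|$ equality constraints $\sum_{e\in\delta(v)} x_e = b'(v)$ for $v\in B$.

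Next I would invoke Theorem~\ref{sepmatch} to separate over $Q$ in $O(n^4)$ time. Given the query point $\mathbf{y}$, form $\mathbf{x}$ as above: if $\mathbf{x}\notin Q$ the separating hyperplane returned for $Q$ pulls back (via the affine change of variables, which sends hyperplanes to hyperplanes) to a separating hyperplane for $P$, so we may report $\mathbf{y}\notin P$; if $\mathbf{x}\in Q$, it remains only to check the $|B|$ linear equalities for nodes in $B$, each in $O(n)$ time, declaring $\mathbf{y}\in P$ if all hold and otherwise returning the violated equality (as a pair of opposite inequalities) as the separating hyperplane. Since $P\subseteq Q$ after the change of variables, this is a correct and complete oracle, and the dominant cost is the $O(n^4)$ matching separation.

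The only real obstacle is bookkeeping: one must check that every constraint of ({\tt LP-2}) matches, term for term, a constraint of the $\mathbf{b'}$-matching LP of Theorem~\ref{bTDI} under the stated substitution, exactly as was verified for ({\tt LP-1b}) in Corollary~\ref{cor:TDI} and Theorem~\ref{addn4}. Since the construction of ({\tt LP-2}) and its $b'$ function were deliberately defined to be the $B\leftrightarrow R$ mirror of the MIAE construction, this verification is routine and identical in form to the MIAE case. Hence the proof reduces to a single sentence: the argument is identical to that of Theorem~\ref{addn4}, with $B$ and $R$ swapped.
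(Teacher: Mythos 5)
Your proposal is correct and takes the same approach as the paper: the paper's own proof is simply the one-line remark that the argument is identical to that of Theorem~\ref{addn4} with the roles of $B$ and $R$ exchanged. You have merely spelled out the routine change of variables and the invocation of Theorems~\ref{seplin} and~\ref{sepmatch} in full, which the paper leaves implicit.
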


\begin{proof}
    The proof is identical to the proof of \Cref{addn4}.
\end{proof}

\begin{cor}\label{12IR}
    The {\sc $\frac12$ -Illusion Removal Problem} can be solved in polynomial time.
\end{cor}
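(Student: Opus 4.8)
The plan is to mirror the proof of Corollary~\ref{12IA} verbatim, now using the MIRE machinery. First I would record the trivial observation that, whenever blue is a strict majority of $V$, the $\frac12$-Illusion Removal problem and MIRE are \emph{literally the same problem}: a solution of either is an edge subset $E'\subseteq E$ with $|E\setminus E'|\le k$ for which $b(v)\ge r(v)$ in $G'=(V,E')$ at every node $v$, an isolated node satisfying this vacuously. Hence everything already established for MIRE applies: the integer program ({\tt IP-2}), and therefore its relaxation ({\tt LP-2}), models the optimal number of edge removals; ({\tt LP-2}) is integral; and its {\sc Separation Problem} is solvable in $O(n^4)$ time.

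Second, I would check that dropping the only hypothesis that distinguishes $\frac12$-IR from MIRE—that $|B|>|V|/2$—breaks none of those arguments. The structural facts feeding the integer program, namely Observation~\ref{noBBedges} and Lemma~\ref{capBRedges} (and their $B\leftrightarrow R$ symmetric versions), are proved by purely local exchange arguments that never invoke the count of blue versus red nodes in $V$; likewise the equivalence ``({\tt IP-2}) is the MIRE problem'' only uses the defining inequality ``no node in deficit''. Consequently ({\tt IP-2}), ({\tt LP-2}), their integrality, and their polynomial-time separation all carry over to $\frac12$-IR without the majority assumption.

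Third, I would handle feasibility exactly as in Corollary~\ref{12IA}: any feasible $\frac12$-IR instance yields, after the canonicalization of Lemma~\ref{capBRedges}, a feasible point of ({\tt LP-2}), so a NO instance can only arise when ({\tt LP-2}) is infeasible. Here the removal case is in fact strictly cleaner than the addition case: deleting every edge incident to a red node (that is, all of $E^*$) always produces a feasible removal—each red node becomes isolated and each blue node retains only blue neighbours—so ({\tt LP-2}) is \emph{never} infeasible, in contrast to the addition LP, where a red node may simply lack enough blue non-neighbours. Either way, solve ({\tt LP-2}) using Theorem~\ref{seplin} together with the $O(n^4)$ separation oracle; by integrality the optimum is an integer, and the instance is a YES instance iff that optimum is at least $-k$.

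I expect no real obstacle: the difficult ingredients—integrality via the TDI argument and the polynomial-time separation—were already proved for MIRE. The only care needed is the bookkeeping of paragraphs two and three: being explicit that the modelling lemmas do not secretly use the strict-majority hypothesis, that an isolated node is not deemed to be under illusion (so ``remove all edges incident to red nodes'' is always a fallback solution), and that feasible removals pass, after trimming, to feasible LP points.
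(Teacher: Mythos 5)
Your argument is the same one the paper uses: identify $\frac12$-IR with MIRE when blue is a strict majority, note that all the MIRE machinery (Observation~\ref{noBBedges}, Lemma~\ref{capBRedges}, ({\tt IP-2})/({\tt LP-2}), integrality, separation) never invokes the global colour count, and conclude via the LP. You also correctly spell out what the paper's terse ``changes nothing'' leaves implicit — namely that, unlike the addition case, deleting all of $E^*$ is always a fallback feasible solution, so ({\tt LP-2}) is never infeasible — which is a useful clarification but not a different route.
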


\begin{proof}
    LP-2 will solve the $\frac12$-Illusion Removal problem just as it solves MIRE. When a strict majority of the vertices are blue, the two problems are equivalent. The removal of the strict majority requirement, which is the only difference between MIRE and $\frac12$-IR, changes nothing.
\end{proof}

\section{Majority Illusion Elimination (MIE)}\label{sec:both}

Finally, we arrive at the the general problem. Take $G=(B\cup R,E)$ and let $N_E(v)=\{u: (u,v)\in E\}$ be the neighbourhood of $v$. 
To solve the general problem, we will create new graphs $G^a= (R \cup B, E^a)$ and $G^s= (R \cup B, E^s)$.
 Specifically:
 
 $\bullet$ Create $G^a$ from $G$ by removing all edges of $G[R]$.
 
  $\bullet$ Create $G^r$ from $G$ by adding all edges of the complement of $G[B]$.

The remarkable fact now is that an optimal solution to the general problem on $G=(V,E)$ can be reduced to an optimal solution to the MIAE problem on $G^a$ plus an optimal solution to the MIRE problem on $G^s$.

\begin{thm}\label{thm:both}
    There is an optimal solution $E^*$ to the MIE problem on $G$ composed of the union of the optimal solution $E^a$ to the MIAE problem on $G^a$ and the optimal solution $E^r$ to the MIRE problem on $G^r$
\end{thm}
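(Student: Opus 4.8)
The plan is to show the two inequalities: first that combining optimal solutions for MIAE on $G^a$ and MIRE on $G^r$ yields a feasible solution to MIE on $G$ of the combined cost, and second that no MIE solution can do better. The key structural input is Lemma~\ref{FormofMIE}: there is an optimal MIE solution $E'$ that only \emph{adds} edges incident to blue nodes and only \emph{removes} edges incident to red nodes. Such a solution therefore decomposes cleanly: the additions form a set $A$ of edges among $B$-vertices and between $R$ and $B$, and the deletions form a set $D$ of edges among $R$-vertices and between $R$ and $B$. The heart of the argument is that these two pieces do not interfere — an added edge and a deleted edge can share at most a blue endpoint (for the added edge) and a red endpoint (for the deleted edge), so they are never the same pair, and more importantly the effect on each vertex's red/blue neighbour counts is controlled independently on the ``addition side'' and the ``removal side''.

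First I would set up the decomposition. Given the structured optimal MIE solution $E'$ from Lemma~\ref{FormofMIE}, let $A = E' \setminus E$ (edges added, all incident to blue nodes, none between two reds by Observation~\ref{noRRedges}) and $D = E \setminus E'$ (edges removed, all incident to red nodes, none between two blues by Observation~\ref{noBBedges}). Note $A \subseteq E^a$'s complement restricted appropriately and $D \subseteq E^r$. I would then argue that $A$ alone is a feasible MIAE solution on $G^a$: since $G^a$ is $G$ with $G[R]$ edges deleted, and those deleted edges are precisely red-red edges whose removal only helps every neighbourhood (reduces red counts), feasibility of $E'$ on $G$ implies feasibility of $G^a + A$. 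Symmetrically, $D$ as a removal set is feasible for MIRE on $G^r$, because $G^r$ adds all non-edges among $B$, which only helps. This shows $\mathrm{OPT}(\mathrm{MIE}, G) \ge \mathrm{OPT}(\mathrm{MIAE}, G^a) + \mathrm{OPT}(\mathrm{MIRE}, G^r)$ is not quite what we want — rather it gives $|A| + |D| \ge \mathrm{OPT}(\mathrm{MIAE},G^a) + \mathrm{OPT}(\mathrm{MIRE},G^r)$, hence $\mathrm{OPT}(\mathrm{MIE},G) \ge$ the right-hand side.

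For the reverse inequality, I would take $E^a$ an optimal MIAE solution on $G^a$ and $E^r$ an optimal MIRE solution on $G^r$, and form $E^* $ on $V$ by: start from $E$, add the edges of $E^a \setminus E(G^a)$, and remove the edges of $E(G^r) \setminus E^r$. The cost is $|E^a \setminus E(G^a)| + |E(G^r) \setminus E^r|$, which by construction of $G^a, G^r$ equals $\mathrm{OPT}(\mathrm{MIAE},G^a) + \mathrm{OPT}(\mathrm{MIRE},G^r)$ (the edits internal to $G^a$ vs $G$ and $G^r$ vs $G$ are exactly the red-red deletions and blue-blue additions, which carry no cost against the respective baselines and which by Lemmas~\ref{capRBedges}, \ref{capBRedges} do not appear in optimal solutions anyway). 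The remaining task is to verify feasibility of $E^*$ at every vertex. For a blue node $v$: its neighbourhood in $E^*$ is affected only by additions (incident blue-blue or red-blue additions from $E^a$) and by deletions of red-blue edges — but red-blue edges are not deleted, since $E^r$ only removes edges of $E(G^r)$ incident to red nodes that are red-red... wait, here I must be careful: $E^r$ removes edges incident to blue nodes too in general. This is the crux.

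The main obstacle, and where I would spend the most care, is exactly this potential interference at red-blue edges: the MIAE phase might add a red-blue edge $uv$ while the MIRE phase removes a (different) red-blue edge at the same blue vertex $u$, and I need the net effect to still leave $u$ non-deficit. The resolution should come from Lemma~\ref{FormofMIE}'s structure applied in reverse: I claim the optimal $E^a$ can be taken to only add edges making deficit blue nodes exactly balanced and only touching red nodes that are deficit, and symmetrically for $E^r$; then a blue node is either "handled by the addition side" or "handled by the removal side" but the two sets of affected blue nodes, while possibly overlapping, each independently guarantee non-deficit because each phase's solution is feasible on a graph ($G^a$ resp.\ $G^r$) that is a ``monotone relaxation'' of $G$ in the relevant direction. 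Concretely: $G^a \subseteq$ (the graph underlying any valid MIE target after deletions are applied) in the sense that deleting red-red edges from $G$ first and then running MIAE dominates; and applying the $G^r$-deletions on top only removes red-red and red-blue edges, which can only decrease red counts at blue nodes, preserving their feasibility from the $E^a$ phase. I would make this precise by tracking, for each blue vertex $v$, the inequality $r_{E^*}(v) \le r_{E^a}(v) \le b_{E^a}(v) \le b_{E^*}(v)$, and the mirror chain for red vertices, using that additions only add blue-incident edges and deletions only remove red-incident edges so the two never cancel a gain. Once that monotonicity bookkeeping is in place the feasibility of $E^*$ follows, completing both directions and hence the theorem.
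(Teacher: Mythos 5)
Your high-level plan is the same as the paper's (decompose via Lemma~\ref{FormofMIE}, prove the two inequalities), but there are two genuine problems.

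First, you misread Lemma~\ref{FormofMIE}. It states that in the structured optimal MIE solution there are \emph{no} edges of $E'\setminus E$ incident to any red node (so the additions lie entirely inside $B$, none between $R$ and $B$), and no edges of $E\setminus E'$ incident to any blue node (so the deletions lie entirely inside $R$). You write that $A$ ``form[s] a set of edges among $B$-vertices \emph{and between $R$ and $B$},'' and likewise for $D$. With that weaker decomposition, your claim that $D$ is feasible for MIRE on $G^r$ breaks: if $E'$ added a red--blue edge at a red node $u$, then $E(G^r)\setminus D$ would no longer dominate $E'$ at $u$, because $G^r$ supplies the missing blue--blue edges but not the missing red--blue edges. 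With the correct reading of the lemma ($A$ purely blue--blue, $D$ purely red--red) the inclusion $E(G^r)\setminus D\supseteq E'$ and $E(G^a)\cup A\subseteq E'$ both go through cleanly.

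Second, and more importantly, your reverse direction is left at the level of a plan. You identify the crux yourself (``$E^r$ removes edges incident to blue nodes too in general. This is the crux'') but then offer a ``monotone relaxation'' claim and a chain $r_{E^*}(v)\le r_{E^a}(v)\le b_{E^a}(v)\le b_{E^*}(v)$ without justifying the links, and you also appeal to Lemmas~\ref{capRBedges} and~\ref{capBRedges}, which are about degree counts, not about which colour classes the edits touch. The concern you raise never actually materializes, but for a reason you do not state: in $G^a$ every red node has zero red neighbours, so no red node is in deficit, and hence (by Lemma~\ref{capRBedges}) an optimal MIAE solution on $G^a$ adds \emph{only} blue--blue edges; symmetrically, in $G^r$ the blue nodes form a clique, so no blue node is ever in deficit, and an optimal MIRE solution on $G^r$ removes \emph{only} red--red edges. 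This is the paper's observation, and it makes the two solutions act on disjoint vertex pairs, so they literally cannot interfere---no monotonicity bookkeeping is needed. Your chain can be made to work (using Observations~\ref{noRRedges} and~\ref{noBBedges} to get $b_{E^a}(v)=b_{E^*}(v)$ and $r_{E^r}(u)=r_{E^*}(u)$), but as written the argument is incomplete: the decisive structural fact about $G^a$ and $G^r$ is missing.
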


\begin{proof}
     By \Cref{noBBedges} and \Cref{FormofMIE}, the edges removed from $G$ to create $G^a$ are a superset of those removed in an optimal solution $E^*$ to MIE on $G$. These removals are strictly beneficial, as all edges removed are between two red nodes. Therefore any optimal solution $E^a$ to $MIAE$ on $G^a$ will be a subset of the edges added in some optimal solution $E^*$ to MIE on $G$. 

     By \Cref{noRRedges} and \Cref{FormofMIE}, the edges added to $G$ to create $G^r$ are a superset of those added in an optimal solution $E^*$ to MIE on $G$. These additions are strictly beneficial, as all edges added are between two blue nodes.  Therefore any optimal solution $E^r$ to $MIRE$ of $G^r$ will be a subset of the edges removed in some optimal solution $E^*$ to MIE on $G'$.

     Because $|E^r|+ |E^a| \leq |E^*|$, all that remains is to show that $E^r$ together with $E^a$ eliminates illusion in $G$. 
     Observe that in $G^r$ the blue nodes from a clique.
    Consequently, no blue node is under illusion, and 
     only red nodes can be under illusion.
     Therefore the only edges removed in an optimal solution $E^r$ to the MIRE problem will be between two red nodes. That is, removing an edge between a red and blue node cannot help the red node and is not necessary for the blue node, as it is not under illusion.
     
     Similarly, in $G^a$ the red nodes from an independent set. Consequently, no red node is under illusion, and only blue nodes can be under illusion. Therefore the only edges added in an optimal solution $E^a$ to the MIAE problem will be between two blue nodes. That is, adding an edge between a red and blue node cannot help the blue node and is not necessary for the red node, as it is not under illusion.
          
     Furthermore, all the red nodes in $G^r$ are under exactly the same illusion as the red nodes in $G$; therefore $E^r$ eliminates all illusion on $R$. Symmetrically, all the blue nodes in $G^a$ are under exactly the same illusion as the blue nodes in $G$; therefore $E^a$ eliminates all illusion on $B$. Thus, crucially, $E^r$ and $E^a$ do not affect each other, because they do not add or remove any edges in $(R,B)$. Therefore together they eliminate all illusion in $G$. The theorem follows.
\end{proof}

Theorem~\ref{thm:both} implies we can solve the MIE problem in polynomial time. We reduce it to a MIAE probelm and a MIRE problem, which we can both solve in polynomial time by the methods developed in Sections~\ref{sec:AD} and ~\ref{sec:remove}.

However, we can exploit Theorem~\ref{thm:both}  to solve the FIA directly using a single linear program. This is because the reduction worked by essentially reducing the problem to to two independent $b$-matching problems within the same graph! 
In particular, take $G = (B \cup R, E)$ and define functions
\begin{align*}
b^a(v) &=
|N_{E^*}(v) \cap B| + b(v) - r(v) \qquad \forall v \in B\\
   b^r(v) &= |N_{E^*}(v) \cap R| + b(v) - r(v) \qquad \forall v \in R
\end{align*}
Our linear program is then
\begin{align}
{\mathrm ({\tt LP-3})}\qquad \max \quad \sum\limits_{e\in E^*(G^*)} -y_e  \\
\text{s.t.} \ \hspace{2cm}\  
  \displaystyle\sum\limits_{
  u:u \in N_{E^*}(v)\cap B}   
  (1-y_{uv}) &\leq b^a(v)  \qquad v\in B(G^*) \\\label{lp3c1}
\sum\limits_{
  \substack{
  e:e \in E^*(G^*[B])\\
  e \in E^*(G^*[X])\cup F}}
  (1-y_e) &\leq 
  \Bigl\lfloor \frac12\Bigl(
\sum\limits_{
  v:v \in X}
  b^a(v)+ |F| 
  \Bigr) \Bigr\rfloor  
\ \begin{array}{cc}
     & \forall X\subseteq B \\
     & \forall F \subseteq \delta_{E^*}(X)
\end{array} \\\label{lp3c2}
\displaystyle\sum\limits_{
  u:u \in N_{E^*}(v)\cap R}   
  (1-y_{uv})  &\leq b^r(v)  \qquad v\in R(G^*) \\\label{lp3c3}
\sum\limits_{
  \substack{
  e:e \in E^*(G^*[R])\\
  e \in E^*(G^*[X])\cup F}}
  (1-y_e) &\leq 
  \Bigl\lfloor \frac12\Bigl(
\sum\limits_{
  v:v \in X}
  b^r(v)+ |F| 
  \Bigr) \Bigr\rfloor  
\ \begin{array}{cc}
     & \forall X\subseteq  R \\
     & \forall F \subseteq \delta_{E^*}(X)
\end{array} \\ \label{lp3c4}
    0 \ \le \ y_e &\le 1 \qquad e\in E^*(G^*) 
\end{align}

\begin{cor}
    The linear program {\em {\tt (LP-3)}} is integral.
\end{cor}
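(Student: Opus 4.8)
The plan is to recognise the feasible region of (LP-3) as a Cartesian product of two polytopes that, after an affine change of variables, are exactly the simple $b$-matching polytopes used in Sections~\ref{sec:AD} and~\ref{sec:remove}; since a product of integral polytopes is integral, the corollary will then follow from Theorems~\ref{bTDI} and~\ref{TDIint}.

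First I would make the decoupling explicit. By the constructions of $G^a$ and $G^r$ together with Theorem~\ref{thm:both}, the edge set $E^*(G^*)$ over which (LP-3) is written consists of the non-edges of $G$ with both ends in $B$, together with the edges of $G$ with both ends in $R$ (any edge of $E^*$ between $B$ and $R$, if present at all, occurs in no constraint and contributes only a trivially integral box factor). In particular, for $v\in B$ the set $N_{E^*}(v)\cap B$ is the entire $E^*$-neighbourhood of $v$, so the blue degree constraints involve only variables indexed by $E^*(G^*[B])$; and for $X\subseteq B$ every edge of $\delta_{E^*}(X)$ has both ends in $B$, so $E^*(G^*[X])\cup F\subseteq E^*(G^*[B])$ and the blue odd-set constraints likewise involve only those variables. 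Symmetrically, the red degree and red odd-set constraints involve only variables indexed by $E^*(G^*[R])$. Hence, writing $E^*(G^*)$ as the disjoint union of $E^*(G^*[B])$ and $E^*(G^*[R])$, the feasible region of (LP-3) is the product $P_B\times P_R$, where $P_B$ is cut out by the blue constraints together with $0\le y_e\le1$, and $P_R$ by the red constraints together with $0\le y_e\le1$.

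Next I would apply the substitution $x_e=1-y_e$ of Section~\ref{sec:AD} on each block separately. On $P_B$ it turns the blue degree constraints into $\sum_{e\in\delta(v)}x_e\le b^a(v)$ for $v\in B$ and the blue odd-set constraints into $\sum_{e\in E^*(G^*[X])}x_e+\sum_{e\in F}x_e\le\lfloor\frac12(\sum_{v\in X}b^a(v)+|F|)\rfloor$ for all $X\subseteq B$ and $F\subseteq\delta_{E^*}(X)$, while preserving $0\le x_e\le1$; comparing with Theorem~\ref{bTDI}, this is precisely the description of the simple $b^a$-matching polytope of the graph $G^*[B]$. In the same way $P_R$ becomes the simple $b^r$-matching polytope of $G^*[R]$. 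One notes that $b^a$ and $b^r$ are non-negative integers: for $b^a$ this uses the standing strict-majority hypothesis (whence $r(v)<|B|$ for every $v$), while for $b^r$ it is immediate. By Theorem~\ref{bTDI} both polytopes are TDI, hence integral by Theorem~\ref{TDIint} since their right-hand sides are integral; and because $x_e=1-y_e$ is an affine bijection preserving the integer lattice, $P_B$ and $P_R$ are themselves integral polytopes.

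Finally, every vertex of $P_B\times P_R$ is a pair of vertices, one from each factor, and is therefore integral; hence (LP-3) is integral. I expect the only non-routine point to be the decoupling claim of the second paragraph — verifying that the odd-set families $\{\,E^*(G^*[X])\cup F : X\subseteq B,\ F\subseteq\delta_{E^*}(X)\,\}$ and its red analogue never reach across the $B/R$ cut — after which the statement reduces entirely to the $b$-matching machinery already developed for MIAE and MIRE.
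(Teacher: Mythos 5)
Your proposal is correct and takes essentially the same route as the paper: the paper's proof also observes that (LP-3) is two independent copies of the $b'$-matching LP and invokes the $b$-matching TDI theorem, while you simply make explicit the product decomposition, the change of variables $x_e=1-y_e$ on each block, and the non-negativity/integrality of $b^a$ and $b^r$. One small remark: you correctly observe that (LP-3) contains no equality constraints and hence Theorem~\ref{TDIpres} is not needed, whereas the paper's one-line proof cites \Cref{TDIpres} alongside \Cref{bTDI} (presumably carried over from \Cref{cor:TDI}, where it is needed for the red equality constraints); the citation you actually want, and use, is \Cref{TDIint}.
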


\begin{proof}
    This linear program is equivalent to two independent copies of the $b'$-matching LP, therefore its integrality follows directly from \Cref{bTDI} and \Cref{TDIpres}.
\end{proof}

\begin{cor}\label{subn4}
    The {\sc Separation Problem} for linear program {\em {\tt (LP-3)}} can be solved in $O(n^4)$ time.
\end{cor}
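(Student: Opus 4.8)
The plan is to reduce the separation problem for ({\tt LP-3}) to two independent instances of the separation problem for the $\textbf{b}'$-matching polytope, exactly as was done for ({\tt LP-1b}) in \Cref{addn4}. The key structural observation is that the constraint system of ({\tt LP-3}) decomposes: constraints (\ref{lp3c1}) and (\ref{lp3c2}) involve only the variables $y_e$ for $e \in E^*(G^*[B])$, while constraints (\ref{lp3c3}) and (\ref{lp3c4}) involve only the variables $y_e$ for $e \in E^*(G^*[R])$, and every $e \in E^*(G^*)$ lies in exactly one of these two edge sets (there are no remaining red--blue availability edges, since in $G^a$ the red nodes form an independent set and in $G^r$ the blue nodes form a clique, so the only editable edges are blue--blue and red--red). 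The bound constraints $0 \le y_e \le 1$ split along the same partition. Hence a point ${\bf y}$ is feasible for ({\tt LP-3}) if and only if its restriction to $E^*(G^*[B])$ is feasible for the ``blue half'' and its restriction to $E^*(G^*[R])$ is feasible for the ``red half,'' and these two halves can be checked independently.

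First I would formalize this decomposition, writing ${\bf y} = ({\bf y}^B, {\bf y}^R)$ with ${\bf y}^B$ indexed by $E^*(G^*[B])$ and ${\bf y}^R$ by $E^*(G^*[R])$. Second, for the blue half I would apply the change of variables $x_e = 1 - y_e$ for $e \in E^*(G^*[B])$; under this substitution constraints (\ref{lp3c1}) become $\sum_{e \in \delta(v)} x_e \le b^a(v)$ for $v \in B$, and constraints (\ref{lp3c2}) become precisely the odd-set (blossom) inequalities $\sum_{e \in E^*(G^*[X])} x_e + \sum_{e \in F} x_e \le \lfloor \tfrac12(\sum_{v \in X} b^a(v) + |F|)\rfloor$ for $X \subseteq B$, $F \subseteq \delta_{E^*}(X)$, while $0 \le y_e \le 1$ becomes $0 \le x_e \le 1$. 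This is exactly the $\textbf{b}'$-matching polytope of the graph $G^*[B]$ with degree bounds $b^a$ and unit edge capacities, so by \Cref{sepmatch} its separation problem is solvable in $O(n^4)$ time. Third, I would apply the symmetric argument to the red half with the change of variables $x_e = 1 - y_e$ for $e \in E^*(G^*[R])$ and degree bounds $b^r$, again invoking \Cref{sepmatch} for an $O(n^4)$ bound. Finally, given ${\bf y}$, we separate by running both oracles: if either returns a violated inequality we translate it back through the respective change of variables to get a violated inequality of ({\tt LP-3}); if both report feasibility then ${\bf y}$ is feasible for ({\tt LP-3}). The total running time is $O(n^4) + O(n^4) = O(n^4)$.

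I do not expect any serious obstacle here; the work is entirely bookkeeping. The one point that needs a word of care is verifying that the change of variables is an affine bijection that maps the feasible region of each half exactly onto the corresponding $\textbf{b}'$-matching polytope (not merely into it, as in \Cref{addn4} where equality constraints on red nodes had to be checked separately) --- but in ({\tt LP-3}) there are no such leftover equality constraints, since all the original equalities were folded into the inequality form when constructing $b^a$ and $b^r$, so the map is onto and the reduction is cleaner than in the MIAE case. The other routine check is that a separating hyperplane for the image polytope pulls back to a separating hyperplane for the preimage under an invertible affine map, which is immediate. Thus the corollary follows.
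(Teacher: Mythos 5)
Your proposal is correct and takes the same approach as the paper; the paper simply asserts that the result ``follows directly from'' the $\mathbf{b}$-matching separation result (Theorem~\ref{sepmatch}), and you have spelled out the decomposition into two independent $\mathbf{b}'$-matching instances that makes this immediate.

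One small remark: your observation that ({\tt LP-3}) has no leftover equality constraints --- so the reduction is cleaner than in Theorem~\ref{addn4}, where the red-node equality constraints had to be checked separately --- is exactly the reason the paper can get away with the one-line justification here, and your noting it explicitly is a genuine improvement in exposition over the paper's terse treatment.
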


\begin{proof}
    This follows directly from \Cref{sepmatch}.
\end{proof}

\begin{cor}\label{12I}
    The {\sc $\frac12$-Illusion Problem} can be solved in polynomial time.
\end{cor}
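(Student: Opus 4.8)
The plan is to mirror exactly the structure used for the $\frac12$-IA and $\frac12$-IR problems in Corollaries~\ref{12IA} and \ref{12IR}, now relying on the machinery built up for MIE. First I would observe that {\tt (LP-3)} was derived specifically to capture the MIE problem via Theorem~\ref{thm:both}: the variables $y_e$ on the availability/removable graph $E^*$ encode simultaneously which non-edges between blue nodes to add and which edges between red nodes to remove, and the constraints \eqref{lp3c1}–\eqref{lp3c4} are precisely the $b'$-matching constraints (after the change of variables $x_e = 1-y_e$) for the two independent $b$-matching instances — one on $G^*[B]$ with degree bounds $b^a$, one on $G^*[R]$ with degree bounds $b^r$. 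Since those two instances do not share edges, a feasible integral point of {\tt (LP-3)} is exactly a pair (set of blue-blue additions, set of red-red removals) that eliminates deficit at every blue node and every red node, and the objective $\sum_e -y_e$ minimizes the total number of such alterations. Hence solving {\tt (LP-3)} solves MIE; combined with its integrality and the $O(n^4)$ separation oracle, Theorem~\ref{seplin} gives a polynomial-time algorithm for MIE, and therefore for $\frac12$-I whenever blue is a strict majority.

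Second, I would address the only difference between MIE and the $\frac12$-Illusion problem, exactly as done in the proofs of Corollaries~\ref{12IA} and \ref{12IR}: the $\frac12$-I problem drops the hypothesis that blue is the strict majority. When blue \emph{is} the strict majority the two problems coincide by definition of illusion at $p=\tfrac12$, so {\tt (LP-3)} applies verbatim. When blue is not the strict majority, the only effect is that {\tt (LP-3)} may be infeasible — for instance if some red node $v$ has $r(v) - b(v)$ larger than the number of red neighbours available to disconnect, or some blue node has a deficit exceeding the number of available new blue neighbours, so that $b^r(v)$ or $b^a(v)$ cannot be met by any $\{0,1\}$-assignment. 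In that case no edge set of size at most $k$ eliminates illusion, so the instance is simply a NO instance, which the LP solver detects as infeasibility. (One should note in passing the subtlety flagged in the paper: when $|B| < |V|/2$ the definition of illusion flips direction, but $\frac12$ forces $|B| = |V|/2$ exactly in the borderline case, which is still handled correctly since "strictly less than $\tfrac12$ blue" is the condition tested.)

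The key steps in order are therefore: (i) restate that {\tt (LP-3)} models MIE — invoking Theorem~\ref{thm:both} and the decomposition into two $b'$-matching systems; (ii) cite the integrality corollary for {\tt (LP-3)} and Corollary~\ref{subn4} for the $O(n^4)$ separation oracle; (iii) apply Theorem~\ref{seplin} to conclude {\tt (LP-3)} is solvable in polynomial time, hence MIE is; (iv) note that $\frac12$-I equals MIE under the strict-majority hypothesis, and reduces to an infeasibility check (a NO answer) otherwise. There is essentially no hard part: all the substantive work — the TDI argument, the separation oracle, and the structural reduction of MIE to two matching problems — was already completed earlier in the paper, and this corollary is a bookkeeping consequence. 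The only point requiring a sentence of care is the feasibility/infeasibility dichotomy when the strict-majority assumption is dropped, which I would handle in one or two lines exactly as in Corollary~\ref{12IA}.
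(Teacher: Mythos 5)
Your outline is correct up to the final step, but the feasibility dichotomy you propose at the end is wrong, and this is exactly the non-trivial part of the corollary. You claim that when blue is not a strict majority, the only possible defect is that {\tt (LP-3)} becomes infeasible, in which case ``no edge set of size at most $k$ eliminates illusion'' and the instance is a NO. This is false. {\tt (LP-3)} is built on the decomposition of Theorem~\ref{thm:both}, which (via Lemma~\ref{FormofMIE}) restricts the solution to adding only blue--blue edges and removing only red--red edges. When blue is a strict majority this restriction is without loss of optimality, but when $|B|\le|R|$ it is not: a blue node $v$ with $r(v)\ge|B|$ cannot be brought out of deficit by adding blue neighbours alone, even if connected to every other blue node, so the corresponding constraint of {\tt (LP-3)} has a negative right-hand side and the LP is infeasible. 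Yet the $\tfrac12$-Illusion instance may still be a YES instance, because one is also allowed to delete red--blue edges incident to $v$, an operation {\tt (LP-3)} forbids. So infeasibility of {\tt (LP-3)} does not certify a NO.

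The paper's proof handles precisely this case. It first identifies the set $B' = \{v \in B : r(v) \ge |B|\}$ of blue nodes that cannot be saved by additions alone, observes that any solution must contain every blue--blue edge incident to $B'$, adds all those edges as a preprocessing step, and then solves the residual instance by optimally removing edges of $G[R\cup B']$ (handled by Corollary~\ref{12IR}) and optimally adding non-edges of $G[B\setminus B']$ (handled by Corollary~\ref{12IA}). Your write-up omits this preprocessing entirely, so as stated it would give wrong answers on inputs where some blue node is adjacent to at least $|B|$ red nodes. To repair it you would need to incorporate the $B'$-preprocessing (or argue some equivalent) before invoking {\tt (LP-3)} or the two matching subproblems.
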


\begin{proof}
    From the proof of \Cref{FormofMIE}, we know it is always preferable to correct a blue node in deficit by adding a blue neighbour than removing a red neighbour. However, without the strict majority restriction, this is not always possible. The fix is rather simple. For the set of blue nodes $B' = \{v: v \in B, N_R(v) \geq |B|\}$, we add edges between every node of $B'$ to all nodes of $B$, including $B'$, as we know there is a solution that contains all these edges. Now the  remaining edges of the solution to $\frac12$-Illusion consist of optimally removing edges of $G[R \cup B']$ and optimally adding non-edges of $G[B\setminus B']$, both of which we can do in polynomial time by \Cref{12IA} and \Cref{12IR}.
\end{proof}

\section{Hardness}\label{sec:hardness}
In this section we present variants of the MIE, MIAE and MIRE problems. 

\begin{defn} \label{def:pMIAE}
    Let \(p \in [0,1], p \in \mathcal{Q}\). We define the \(p\)-IA problem as follows: Given a graph \(G = (V = R \cup B, E)\) and integer $k$, the objective is to determine if the addition of at most $k$ edges can ensure each node has at least a \(p\)-fraction of blue nodes in its neighbourhood.

\end{defn}

\begin{defn} \label{def:pMIRE}
    Let \(p \in [0,1], p \in \mathcal{Q}\). We define the \(p\)-IR problem as follows: Given a graph \(G = (V = R \cup B, E)\) and integer $k$, the objective is to determine if the subtraction of at most $k$ edges can ensure each node has at least a \(p\)-fraction of blue nodes in its neighbourhood.

\end{defn}

\begin{defn} \label{def:pMIE}
    Let \(p \in [0,1], p \in \mathcal{Q}\). We define the \(p\)-I problem as follows: Given a graph \(G = (V = R \cup B, E)\) and integer $k$, the objective is to determine if the addition/subtraction of at most $k$ edges can ensure each node has at least a \(p\)-fraction of blue nodes in its neighbourhood.

\end{defn}

For \(p = \frac{1}{2}\), if we add the requirement that blue be the strict majority, these extended problems correspond to the original MIAE, MIRE and MIE problems for which we have provided polynomial-time algorithms in the previous sections of this paper. Our algorithms can be extended to handle the case when blue is not the true majority, showing that the problems are solvable in polynomial time  when \(p = \frac{1}{2}\). 
Regardless, to show NP-hardness for other values of $p$, we provide a polynomial time reduction from the following NP-complete, satisfiability problem. 

\begin{defn}
    XSAT for $l$-CNF$_+^l$ is the problem of determining whether a boolean formula in conjunctive normal form composed of $n$ variables appearing exactly $l$ times as positive literals within clauses of size exactly $l$ has an assignment which evaluates to TRUE such that each clause has exactly one variable assigned TRUE.
\end{defn}

\begin{thm}\cite{de-2555}[Theorem 29]
    XSAT for $l$-CNF$_+^l$ is NP-complete for any $l \geq 3$
\end{thm}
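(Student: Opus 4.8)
Because the statement is quoted verbatim from \cite{de-2555}, strictly speaking nothing needs to be reproved and one may simply defer to that reference; for completeness, here is how I would argue it directly. \textbf{Membership in NP} is immediate: a proposed assignment is a polynomial-size certificate, and checking that each clause contains exactly one true literal takes linear time. For \textbf{hardness} the plan is to start from a known NP-complete flavour of exact satisfiability and then enforce the three structural restrictions in turn — positivity (no negated literals), uniform clause width exactly $l$, and uniform variable frequency exactly $l$ — each by a structure-preserving gadget reduction.

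Concretely: \textbf{Step 1}, take \textsc{Positive 1-in-3-SAT} (monotone exact $3$-satisfiability), NP-complete by Schaefer's dichotomy; this already gives positivity and clause width $3$. \textbf{Step 2}, equalize variable occurrences: a variable $x$ appearing $t$ times is split into fresh copies $x^{(1)},\dots,x^{(t)}$, one per occurrence, and these are tied together by a cyclic chain of positive ``equality'' gadgets that, under the exact-one semantics, force $x^{(1)}=\dots=x^{(t)}$; the gadgets are assembled from width-$l$ clauses and auxiliary variables chosen so that every variable — original, copy, and auxiliary — lands in exactly $l$ clauses. \textbf{Step 3}, for $l>3$, pad each surviving width-$3$ clause up to width $l$ with fresh ``dummy'' variables that companion clauses pin to FALSE, again keeping every occurrence count at exactly $l$; for $l=3$ this step is empty. \textbf{Step 4}, verify the composition: the final formula is a positive $l$-CNF with each variable in exactly $l$ clauses, has size polynomial in the input, and admits an exact-one satisfying assignment iff the original instance does.

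The \textbf{main obstacle} is Step 2: designing positive, width-exactly-$l$ equality gadgets that (i) genuinely propagate a single truth value along the chain under the ``exactly one true literal per clause'' rule and (ii) leave \emph{every} variable with occurrence count equal to $l$ on the nose rather than merely bounded by $l$. The delicate point is that the ``$=l$ occurrences'' and ``$=l$ width'' requirements must be satisfied simultaneously; the usual remedy is to over-provision auxiliary variables inside the gadget and then absorb the leftover slack with one final layer of padding clauses, but confirming that all the counts close up exactly — rather than just approximately — is where the genuine bookkeeping lies.
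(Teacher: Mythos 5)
The paper itself offers no proof of this theorem; it is stated as a citation of Theorem~29 of \cite{de-2555}, and your opening observation that one simply defers to that reference is precisely what the paper does. Your supplementary four-step reduction sketch is plausible in outline, but — as you yourself flag in the final paragraph — the Step~2 equality-gadget bookkeeping (hitting occurrence count \emph{exactly} $l$ while keeping clause width exactly $l$ and preserving the exact-one semantics) is left unresolved, so that sketch should not be treated as a self-contained replacement for the citation.
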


\begin{thm}\label{thm:MIEp<12}
    The $p$-IA problem and $p$-I problem are NP-hard for $p\in \mathbb{Q}, 0 < p < \frac12, p \not=\frac13$.
\end{thm}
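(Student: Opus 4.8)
The plan is to give a polynomial-time many-one reduction from XSAT for $l$-CNF$_+^l$, which is NP-complete for every $l\ge 3$. Given the target fraction, write $p=a/b$ with $\gcd(a,b)=1$; since $0<p<\tfrac12$ we have $b\ge 2a+1$, and we choose the clause size $l\ge 3$ of the XSAT instance as a function of $a$ and $b$ (rescaling the representation of $p$ if the reduced denominator is too small to make $l$ large enough). A convenient structural fact about instances of XSAT for $l$-CNF$_+^l$ is that the number of variables equals the number of clauses (both equal the total number of literal occurrences divided by $l$), call it $m$, and that in any exact-one satisfying assignment the number of true variables is exactly $m/l$; in particular if $l\nmid m$ the instance is a trivial no-instance, so we may assume $l\mid m$.

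Given a formula $\varphi$ with variables $x_1,\dots,x_n$ and clauses $C_1,\dots,C_m$ we build a graph $G_\varphi$ and a bound $k$. For each clause $C_j$ we install a \emph{clause gadget}: a blue node $c_j$ equipped with carefully chosen numbers of red and blue ``padding'' neighbours so that $c_j$ currently sees strictly fewer than a $p$-fraction of blue neighbours but a single additional blue neighbour brings it to the $p$-threshold. The $l$ \emph{candidate} blue non-neighbours of $c_j$ are the ``true-tokens'' of the $l$ variables occurring in $C_j$. For each variable $x_i$ we install a \emph{variable gadget} whose job is to make a single binary true/false choice at unit cost and, in the true mode, to make the token of $x_i$ a legitimate blue neighbour for exactly the $l$ clause nodes $c_j$ with $x_i\in C_j$ while forbidding this in the false mode; the gadget is built symmetrically so that the two modes cost the same number of edges and so that choosing a consistent mode for every variable is forced. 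Finally we set $k$ to (one unit per variable for the mode choice) $+$ (one unit per clause for the new blue neighbour of $c_j$) $+$ (the total internal cost of all gadgets); this is attainable precisely when every clause node is repaired by the token of a true variable in it and each clause is repaired \emph{exactly once} --- the ``exactly once'' mirroring XSAT's exact-one condition and also being exactly what keeps the solution within budget.

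The forward direction is immediate: an exact-one model yields, mode-by-mode and token-by-token, an edge set of size exactly $k$ that repairs every node. For the reverse direction, given any edge set of size at most $k$ that eliminates $p$-illusion, one reads the chosen modes off the variable gadgets; a counting argument using the tightness of $k$ together with the ``never attach more blue neighbours to a deficient node than necessary'' principle (in the spirit of \Cref{capRBedges}) forces the modes to be consistent and every clause node to have been repaired by exactly one token, hence produces an exact-one model of $\varphi$. To transfer the hardness to the $p$-I problem, we additionally tune every deficient gadget node so that \emph{no single deletion} can repair it (so a deletion-based repair costs at least two edits while an addition-based one costs one), and we build $G_\varphi$ with no edge between two red nodes; then every red node has an all-blue neighbourhood and is never in $p$-deficit, so a blue--blue deletion strictly hurts both endpoints and a red--blue deletion helps only its blue endpoint but never as cheaply as an addition there. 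Consequently every budget-tight $p$-I solution on $G_\varphi$ uses additions only, the $p$-I and $p$-IA optima coincide on $G_\varphi$, and the same reduction and the same $k$ establish NP-hardness of $p$-I.

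The main obstacle is to discharge two requirements at once: (i) designing the variable gadget so that it genuinely forces an all-or-nothing, globally consistent choice using only ``every node sees $\ge p$ blue''-constraints plus a tight budget --- this is delicate because adding edges is monotone, so a gadget cannot directly upper-bound how many edges touch a node, and the exact-one behaviour must be squeezed out of the budget count; and (ii) making all of the integer/fraction bookkeeping go through for an arbitrary rational $p=a/b$, i.e.\ choosing the numbers of red and blue padding neighbours in each gadget so that every ``$\ge p$'' inequality and every floor $\lfloor \tfrac12(\cdots)\rfloor$ is tight in exactly the intended way. Point (ii) is where $p=\tfrac13$ must be excluded: one of the gadget parameters is forced to be a positive multiple of a quantity that vanishes precisely at $p=\tfrac13$ (namely something proportional to $\lvert b-3a\rvert$), consistent with the fact, established elsewhere, that $p=\tfrac13$ is polynomial-time solvable. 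The two sub-ranges $0<p<\tfrac13$ and $\tfrac13<p<\tfrac12$ are handled by the same scheme with slightly different gadget parameters.
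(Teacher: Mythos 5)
You follow the same overall strategy as the paper (a polynomial-time reduction from XSAT for $l$-CNF$_+^l$, with a graph whose clause and variable nodes are padded so that the $p$-fraction constraints encode exact-one satisfaction), so the plan is sound; but you explicitly defer the entire construction --- ``designing the variable gadget\dots is delicate'' and ``making all of the integer/fraction bookkeeping go through'' --- and that is precisely the content of the actual proof, which is not filled in. The paper does not use a separate variable gadget with a binary ``unit-cost mode choice'' at all. Instead the variable node $v_x$ is placed \emph{exactly at} the $p$-threshold ($b(v_x)=\tfrac{a}{l}r(v_x)$, where $p=\tfrac{a}{a+l}$), so that attaching $v_x$ to the $l$ (red) clause nodes containing $x$ forces $a$ compensating edges to dummy blue nodes; the budget is $\tfrac{n}{l}(l+a)$, $l+a$ per true variable, not ``one unit per variable plus one unit per clause.'' The reverse direction is not a soft counting heuristic but a specific chain of equalities (\Cref{eq:MIE-e1}--\Cref{eq:MIE-e7}) exploiting the floors $\lfloor\tfrac{a}{l}x_r\rfloor$ and $\lfloor\tfrac{l}{a}y_b\rfloor$, and deletions are ruled out by showing the non-negative slack terms must all vanish --- not by banning red--red edges (the paper's graph has them; $w_c\in R_C$ is adjacent to $R_D$ and $R_E$).

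Your explanation for why $p=\tfrac13$ is excluded is also off. There is no gadget parameter proportional to $\lvert b-3a\rvert$ that vanishes; the reason is simply arithmetic about the source problem. Writing $p=\tfrac{a}{b}$ in lowest terms with $0<p<\tfrac12$ as $p=\tfrac{a}{a+l}$ with $l=b-a$, $\gcd(a,l)=1$, and $l>a$, one needs $l\ge 3$ because XSAT for $l$-CNF$_+^l$ is NP-complete only for $l\ge 3$. Since $b>2a$ gives $l=b-a>a$, the only failure is $a=1,\ b=3$, i.e.\ $p=\tfrac13$ (forcing $l=2$). Also note the clause nodes in the paper are \emph{red}, initially just below threshold and non-adjacent to precisely the variables of their clause, so that the exact-one condition is enforced by the red clause node gaining exactly one blue neighbour; making them blue as you propose is a genuinely different design whose correctness would need its own verification. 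As written, the proposal states the reduction's shape but proves neither direction.
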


\begin{proof}

      First, we write $p$ as an irreductible fraction $p = \frac{a}{b}$ where $a, b \in \mathbb{N}$ are coprime and $p < \frac{1}{2}$. This implies $2a < b$. Therefore, we can write $p = \frac{a}{b} = \frac{a}{a + l}$ for $l \geq 3$, $l > a$, and $\gcd(a, l) = 1$. Note that $\frac{1}{3}$ cannot be written this way because of the condition $l \geq 3$.

    Consider an instance of XSAT for $l$-CNF$_+^l$ with $n$ variables and $n$ clauses, where $n \geq 2l$. We build an auxiliary graph $G = (B \cup R, E)$ of polynomial size such that there is a solution to this XSAT instance for $l$-CNF$_+^l$ if and only if the $p$-I problem associated with $G$ can be solved by adding or removing at most $\frac{n}{l} \cdot (l + a)$ edges. Moreover if there is a solution to this XSAT instance, there is a solution to the $p$-I problem that add exactly $\frac{n}{l} \cdot (l + a)$ edges and remove no edges. Consequently, this also works as a reduction to the $p$-IA problem. Note that $\frac{n}{l}$ must be an integer, as otherwise the XSAT problem trivially has no solution. 
    
    We design the graph $G$ to have nodes of six types: dummy blue nodes $B_D$, where $|B_D| = a \cdot n^2 + l - (2a \cdot n + 2)$; equalizing blue nodes $B_E$, where $|B_E| = 2a \cdot n - (n - 1)$; dummy red nodes $R_D$, where $|R_D| = l \cdot n^2 - 2l \cdot n + (n - l)$; equalizing red nodes $R_E$, where $|R_E| = 2l \cdot n - (n - l)$; blue variable nodes $B_V$, where $|B_V| = n$ (we have a blue variable node $v_x$ for each variable $x$ of the XSAT instance); and red clause nodes $R_C$, where $|R_C| = n$ (we have a red clause node $w_c$ for each clause $c$ of the XSAT instance).

    We now describe the graph $G$ by detailing the neighborhood of each node. We remind the reader of the notation $b(v)$ and $r(v)$, which denote the number of blue and red neighbors of node $v$, respectively.

    \begin{itemize}
    \item {\tt Variable nodes}: The neighbourhood of $v_x$ for all $v_x \in B_V$ is $B_E \cup R_E \cup \{w_c: w_c \in R_C, x \not\in c \text{ in the XSAT problem} \} \cup (B_V -v_x)$. Note that $b(v_x)=\frac{a}{l}\cdot r(v_x)$ for all $v_x \in B_V$, and therefore \emph{exactly a $p$-fraction} of the neighbourhood of each node of $B_V$ is blue.
    \item {\tt Clause nodes}: The neighbourhood of $w_c$ for all $w_c \in R_C$ is $B_E \cup R_E \cup B_D \cup R_D \cup \{v_x: v_x \in B_V, x \not\in C \text{ in the XSAT problem} \}$. Note that   $b(w_c) = \frac{a}{l}\cdot r(w_c)-1$ for all $w_c \in R_C$, so each $w_c \in R_c$ can add one more blue neighbour without violating the $p$-fraction constraint. 
    
    \item {\tt Dummy blue nodes}: The neighbourhood of $v$ for all $v \in B_D$ is $B_E\cup R_C \cup B_D - v$. There are  $an^2+l-n+2$ blue nodes and n red nodes in the neighbourhood of each $v \in B_D$, so there are at least $p$-fraction of blue nodes in the neighbourhood.
    \item {\tt Dummy red nodes}: The neighbourhood of $v$ for all $v \in R_D$ is $B_E \cup R_C$. There are $2an-(n-1)$ blue nodes and $n$ red nodes in the neighbourhood of each $v\in R_D$, so there are at least $p$-fraction of blue nodes in the neighbourhood.
    \item {\tt Equalizing blue nodes}: The neighbourhood of $v$ for all $v \in B_E$ is $R_C \cup R_D \cup B_V \cup B_D \cup B_E - v$. There are $an^2+l-2$ blue nodes and $l\cdot n^2+2n-2l\cdot n-l$ red nodes in the neighbourhood of each $v \in B_E$, so there are at least $p$-fraction of blue nodes in the neighbourhood.
    \item {\tt Equalizing red nodes}: The neighbourhood of $v$ for all $v \in R_E$ is $R_C \cup B_V$. There are at least $p$-fraction of blue nodes in the neighbourhood.
\end{itemize}

\begin{figure}[!ht]
    \centering
\tikzstyle{node}=[ draw=black,
 shape=circle, minimum size=10pt]
 \tikzstyle{edge}=[thick, -]


\begin{tikzpicture}[scale=0.5,x=3cm,y=3cm]
\node [style=node,fill=blue!30] (x1) at (6, 0) {$v_{x_6}$};
\node [style=node,fill=blue!30] (x2) at (5, 0) {$v_{x_5}$};
\node [style=node,fill=blue!30] (x3) at (4, 0) {$v_{x_4}$};
\node [style=node,fill=blue!30] (x4) at (3, 0) {$v_{x_3}$};
\node [style=node,fill=blue!30] (x5) at (2, 0) {$v_{x_2}$};
\node [style=node,fill=blue!30] (x6) at (1, 0) {$v_{x_1}$};

\node [style=node,fill=red!30] (c1) at (1, 3) {$w_{c_1}$};
\node [style=node,fill=red!30] (c2) at (2, 3) {$w_{c_2}$};
\node [style=node,fill=red!30] (c3) at (3, 3) {$w_{c_3}$};
\node [style=node,fill=red!30] (c4) at (4, 3) {$w_{c_4}$};
\node [style=node,fill=red!30] (c5) at (5, 3) {$w_{c_5}$};
\node [style=node,fill=red!30] (c6) at (6, 3) {$w_{c_6}$};

\node [draw=black,thick,dotted, shape=ellipse, minimum height=3cm, minimum width=9cm, label=above: $R_C$] (rc) at (3.5, 3) {};
\node [draw=black,thick,dotted, shape=ellipse, minimum height=3cm, minimum width=9cm, label=below: $B_V$] (bv) at (3.5, 0) {};
\node [draw=black, shape=circle, minimum size=3cm,fill=red!30, label=above: $R_D$] (rd) at (8, 4) {};
\node [draw=black, shape=circle, minimum size=3cm,fill=blue!30, label=above: $B_D$] (bd) at (-1, 4) {};
\node [draw=black, shape=circle, minimum size=2cm,fill=red!30, label=below: $R_E$] (re) at (8, 1) {};
\node [draw=black, shape=circle, minimum size=2cm,fill=blue!30, label=below: $B_E$] (be) at (-1, 1) {};
\node [style=node] (v1) at (-1, 3.6) {};
	\node [style=node] (v2) at (-1.4, 3.9) {};
	\node [style=node] (v3) at (-1.2, 4.3) {};
	\node [style=node] (v4) at (-0.8, 4.3) {};
	\node [style=node] (v5) at (-0.6, 3.9) {};
    \draw [style=edge] (v1) -- (v2);
    \draw [style=edge] (v1) -- (v3);
    \draw [style=edge] (v1) -- (v4);
    \draw [style=edge] (v1) -- (v5);
    \draw [style=edge] (v2) -- (v3);
    \draw [style=edge] (v2) -- (v4);
    \draw [style=edge] (v2) -- (v5);
    \draw [style=edge] (v3) -- (v4);
    \draw [style=edge] (v3) -- (v5);
    \draw [style=edge] (v4) -- (v5);
    \node [style=node] (u1) at (-1, 0.6) {};
	\node [style=node] (u2) at (-1.4, 0.9) {};
	\node [style=node] (u3) at (-1.2, 1.3) {};
	\node [style=node] (u4) at (-0.8, 1.3) {};
	\node [style=node] (u5) at (-0.6, 0.9) {};
    \draw [style=edge] (u1) -- (u2);
    \draw [style=edge] (u1) -- (u3);
    \draw [style=edge] (u1) -- (u4);
    \draw [style=edge] (u1) -- (u5);
    \draw [style=edge] (u2) -- (u3);
    \draw [style=edge] (u2) -- (u4);
    \draw [style=edge] (u2) -- (u5);
    \draw [style=edge] (u3) -- (u4);
    \draw [style=edge] (u3) -- (u5);
    \draw [style=edge] (u4) -- (u5);
\node [style=node] (v1) at (8, 3.6) {};
	\node [style=node] (v2) at (8.4, 3.9) {};
	\node [style=node] (v3) at (8.2, 4.3) {};
	\node [style=node] (v4) at (7.8, 4.3) {};
	\node [style=node] (v5) at (7.6, 3.9) {};
 \node [style=node] (u1) at (8, 0.6) {};
	\node [style=node] (u2) at (8.4, 0.9) {};
	\node [style=node] (u3) at (8.2, 1.3) {};
	\node [style=node] (u4) at (7.8, 1.3) {};
	\node [style=node] (u5) at (7.6, 0.9) {};

\draw [style=edge] (x1) -- (c4);
\draw [style=edge] (x1) -- (c5);
\draw [style=edge] (x1) -- (c6);
\draw [style=edge] (x2) -- (c3);
\draw [style=edge] (x2) -- (c5);
\draw [style=edge] (x2) -- (c6);
\draw [style=edge] (x3) -- (c2);
\draw [style=edge] (x3) -- (c4);
\draw [style=edge] (x3) -- (c6);
\draw [style=edge] (x4) -- (c1);
\draw [style=edge] (x4) -- (c4);
\draw [style=edge] (x4) -- (c5);
\draw [style=edge] (x5) -- (c1);
\draw [style=edge] (x5) -- (c2);
\draw [style=edge] (x5) -- (c3);
\draw [style=edge] (x6) -- (c1);
\draw [style=edge] (x6) -- (c2);
\draw [style=edge] (x6) -- (c3);
\draw [style=edge] (x1) edge [bend left] (x2);
\draw [style=edge] (x1) edge [bend left] (x3);
\draw [style=edge] (x1) edge [bend left] (x4);
\draw [style=edge] (x1) edge [bend left] (x5);
\draw [style=edge] (x1) edge [bend left] (x6);
\draw [style=edge] (x2) edge [bend left] (x3);
\draw [style=edge] (x2) edge [bend left] (x4);
\draw [style=edge] (x2) edge [bend left] (x5);
\draw [style=edge] (x2) edge [bend left] (x6);
\draw [style=edge] (x3) edge [bend left] (x4);
\draw [style=edge] (x3) edge [bend left] (x5);
\draw [style=edge] (x3) edge [bend left] (x6);
\draw [style=edge] (x4) edge [bend left] (x5);
\draw [style=edge] (x4) edge [bend left] (x6);
\draw [style=edge] (x5) edge [bend left] (x6);
\draw [style=edge, ultra thick] (rd) -- (rc);
\draw [style=edge, ultra thick] (bd) -- (rc);
\draw [style=edge, ultra thick] (be) edge [bend right] (rd);
\draw [style=edge, ultra thick] (be) -- (bd);
\draw [style=edge, ultra thick] (re) -- (rc);
\draw [style=edge, ultra thick] (re) -- (bv);
\draw [style=edge, ultra thick] (be) -- (rc);
\draw [style=edge, ultra thick] (be) -- (bv);
\end{tikzpicture}
    \caption{An example of a network $G$ which reduces the $3-XSAT^3_+$ problem $(x_1\lor x_2 \lor x_3) \land (x_1 \lor x_2 \lor x_4) \land (x_1 \lor x_3 \lor x_4) \land (x_2\lor x_5 \lor x_6) \land (x_3 \lor x_5 \lor x_6) \land (x_4 \lor x_5 \lor x_6)$ to $\frac14$-MIE and $\frac14$-MIAE}
    \label{fig:hardnessMIE+MIAE}
\end{figure}

Now that we've constructed $G$, consider the $p$-I problem on $G$. Assume there is a solution to the instance of XSAT for $l$-CNF$_+^l$. There is therefore a set of $\frac{n}{l}$ variables set to TRUE in the solution, each belonging to $l$ clauses. 
For each of the true variables $x$, add an edge between $b_x$ and $r_C$ for each clause $C$ containing $x$. The result of this addition of $n$ edges is that all red nodes are no longer under illusion in $G$. However, there are now $\frac{n}{l}$ blue nodes corresponding to the TRUE variables which each have $l$ new red neighbours. They each require $a$ new blue neighbours to maintain the blue to red neighborhood ratio $\frac{a}{l}$. 
We therefore add $a$ edges to each of them, connecting them to some of the nodes of $B_D$, for a total of $\frac{na}{l}$ new edges. This fully eliminates illusion using a total of $\frac{n}{l}(a+l)$ edges.

Conversely, assume there is a solution to the $p$-I problem on $G$ which uses $\frac{n}{l}(a+l)$ edges.
Let $x_r,y_r$ denote the number of red neighbours removed in the solution from $x \in R_C$, $y \in B_V$ respectively. Let  $x_b,y_b$ denote the number of blue neighbours added in the solution from $x \in R_C$, $y \in B_V$ respectively. We know that each $x \in R_C$ begins under illusion, which the solution can eliminate by adding a blue neighbour or removing $ \lceil \frac{l}{a}\rceil$ red neighbours. Thus $$x_b + \left\lfloor  \frac{a}{l}\cdot x_r \right\rfloor =1$$ for each $x \in R_C$. As there are $n$ clauses, and the clause nodes form an independent set, we have

 \begin{equation}\label{eq:MIE-e1}
     n=\sum_{x \in R_C}\left( x_b + \left\lfloor \frac{a}{l}\cdot x_r\right\rfloor \right)
 \end{equation} 

 Next, if a clause node adds a blue neighbour, it must be a variable node, as each clause node is adjacent to all other blue nodes. The variable nodes have $b(b_x)=\frac{a}{l}r(b_x)$ for all $b_x \in B_V$, therefore adding new red neighbours increases illusion, which can be removed by removing a different red neighbour, or adding more blue neighbours. This gives us the following equation:

 \begin{equation}\label{eq:MIE-e2}
     \sum_{x \in R_C} x_b = \sum_{y \in B_V} \left(y_r + \left\lfloor \frac{l}{a}\cdot y_b \right\rfloor\right)
 \end{equation} 

Summing Equations \ref{eq:MIE-e1} and \ref{eq:MIE-e2} gives us

 \begin{equation}\label{eq:MIE-e3}
     n=\sum_{x \in R_C} \left\lfloor \frac{a}{l}\cdot x_r\right\rfloor + \sum_{y \in B_V} \left(y_r + \left\lfloor \frac{l}{a}\cdot y_b \right\rfloor\right)
 \end{equation} 
 
 By initial assumption, there exists a solution which uses $\frac{n}{l}(a+l)$ edges. This means 

\begin{equation}\label{eq:MIE-e4}
    \sum_{x \in R_C} (x_r + x_b) + \sum_{y \in B_V} (y_r + y_b) \leq \frac{n}{l}(a+l)
\end{equation}

If we subtract Equations \ref{eq:MIE-e1} and $\frac{a}{l}\cdot$\ref{eq:MIE-e3} from Inequality \ref{eq:MIE-e4}, this gives us

\begin{equation}\label{eq:MIE-e5}
    \sum_{x \in R_C}\left( x_r - \frac{l+a}{l}\cdot\left\lfloor \frac{a}{l}\cdot x_r\right\rfloor \right)) + \sum_{y \in B_V} \left(\frac{l-a}{l}\cdot y_r + y_b -\frac{a}{l}\cdot \left\lfloor \frac{l}{a}\cdot y_b \right\rfloor \right)\leq 0
\end{equation}

Recall that 
$$x_b + \left\lfloor \frac{a}{l}\cdot x_r \right\rfloor =1$$ 
for all $x \in R_C$, meaning $x_r \in \{0,\lceil \frac{l}{a}\rceil\}$. This means the terms $\sum_{x \in R_C}\left( x_r - \frac{l+a}{l}\cdot\lfloor \frac{a}{l}\cdot x_r\rfloor \right))$, $\sum_{y \in B_V}\frac{l-a}{l}\cdot y_r$ and $\sum_{y \in B_V} \left( y_b -\frac{a}{l}\cdot \lfloor \frac{l}{a}\cdot y_b \rfloor \right)$ are all non-negative. They must all be equal to 0 to satisfy Inequality \ref{eq:MIE-e4}. 
 This requires $x_r =0$ for all $x \in R_C$, $y_r =0$ for all $y \in B_V$, and finally $y_b =\frac{a}{l}\cdot \lfloor \frac{l}{a}\cdot y_b \rfloor$ for all $y \in B_V$. Note that this final equation implies each $y_b$ is a multiple of $a$ because $gcd(a,l)=1$. This simplifies Equations \ref{eq:MIE-e1} and \ref{eq:MIE-e3} to:

 \begin{equation}\label{eq:MIE-e6}
     n=\sum_{x \in R_C} x_b 
 \end{equation}

\begin{equation}\label{eq:MIE-e7}
     \frac{n}{l}=\sum_{y \in B_V} \frac{y_b}{a} 
 \end{equation} 
 
Together these form all $\frac{n(l+a)}{l}$ edges of the solution. This and the fact that each $y_b$ is a multiple of $a$ implies there are $\frac{n}{l}$ variable nodes which the solution connects to $l$ clauses each, so that each clause connects to exactly one of its variables. These correspond to the variables which satisfy the original X-SAT instance, as desired. 
\end{proof}

\begin{thm}
    The $p$-IR problem and $p$-I problem is NP-hard for $p\in \mathbb{Q}, \frac12 < p < 1, p \not=\frac23$.
\end{thm}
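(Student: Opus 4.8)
The plan is to obtain NP-hardness of $p$-IR and $p$-I for rational $p\in(\tfrac12,1)$ with $p\neq\tfrac23$ by mirroring the reduction behind \Cref{thm:MIEp<12} under the simultaneous exchange of the two colours and of edge addition with edge removal. Write $p=\tfrac ab$ in lowest terms; since $\tfrac12<p<1$ the fraction $1-p=\tfrac{b-a}{b}$ is again in lowest terms and lies in $(0,\tfrac12)$, so, exactly as in the proof of \Cref{thm:MIEp<12} applied to $1-p$, we may write $1-p=\tfrac{m}{m+l}$ with $l\ge 3$, $l>m\ge 1$, $\gcd(m,l)=1$, and hence $p=\tfrac{l}{l+m}$. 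The excluded value $p=\tfrac23$ is precisely the one with $1-p=\tfrac13$, which is the value excluded in \Cref{thm:MIEp<12}; every other rational $p\in(\tfrac12,1)$ is reachable in this form. We then reduce from XSAT for $l$-CNF$_+^l$, NP-complete for this $l\ge 3$, taking an instance with $n$ variables, $n$ clauses, $n\ge 2l$ and $l\mid n$.

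From such an instance I would build the auxiliary graph $G=(B\cup R,E)$ as the colour- and operation-mirrored analogue of the one in \Cref{thm:MIEp<12}: it has red variable nodes $R_V$ (one $r_x$ per variable $x$), blue clause nodes $B_C$ (one $w_c$ per clause $c$), and four padding classes $R_D,B_D,R_E,B_E$ whose cardinalities are the colour-swapped analogues of those used there. The one genuinely new ingredient is that the variable--clause bipartite part is complemented so that the intended edits become deletions: $r_xw_c\in E$ iff $x$ occurs in $c$, and in addition each $r_x$ is joined to a private set of $m$ red vertices of $R_D$. The padding is tuned so that: each variable node $r_x$ sits \emph{exactly} at the threshold, $b(r_x)=\tfrac lm\,r(r_x)$; each clause node $w_c$ satisfies $b(w_c)=\tfrac lm\bigl(r(w_c)-1\bigr)$, so it is one step below threshold and can be repaired either by deleting exactly one red neighbour or by adding $\lceil l/m\rceil$ blue ones; and every vertex of $R_D\cup B_D\cup R_E\cup B_E$ lies strictly above threshold and remains so under every deletion and addition used in the argument. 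As in \Cref{thm:MIEp<12}, the $p$-IR, $p$-IA and $p$-I problems constrain only the per-node blue fraction, so no global colour-balance hypothesis has to be maintained.

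For the forward direction, given an XSAT assignment with its $\tfrac nl$ true variables, I delete, for each true $x$, the $l$ edges $r_xw_c$ with $x\in c$; by the exactly-one property every clause node loses precisely one red neighbour and is fixed, while each true $r_x$ drops below threshold and is repaired by deleting $m$ of its private edges to $R_D$. This is a set of $\tfrac nl(l+m)$ deletions and no additions, so the constructed instance admits a solution of the target size for $p$-IR, hence for $p$-I. Conversely, suppose a $p$-I solution uses at most $\tfrac nl(l+m)$ edits; for each clause node $x$ let $x_r,x_b$ be the numbers of red neighbours deleted and blue neighbours added, and for each variable node $y$ let $y_r,y_b$ be the analogous counts. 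Because the clause nodes form an independent set and the variable nodes start exactly at threshold, the floor/ceiling relations forced at these vertices yield the exact analogues of Equations~\eqref{eq:MIE-e1}--\eqref{eq:MIE-e3}; subtracting them from the budget inequality~\eqref{eq:MIE-e4} produces an inequality of the shape of~\eqref{eq:MIE-e5} all of whose summands are nonnegative, hence identically zero. Using $\gcd(m,l)=1$, this forces every addition to vanish and the deletions to realise exactly $\tfrac nl$ variable nodes, each covering its $l$ clauses with exactly one chosen variable per clause, i.e.\ an XSAT solution, via the analogues of \eqref{eq:MIE-e6}--\eqref{eq:MIE-e7}. Since a minimum $p$-I solution is thereby already a $p$-IR solution, both $p$-IR and $p$-I are NP-hard.

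The step I expect to be the main obstacle is, just as in \Cref{thm:MIEp<12}, the exact accounting: choosing the four padding cardinalities so that all three threshold conditions above hold \emph{simultaneously} and robustly (which, as in the original, is where $\gcd(m,l)=1$ and $n\ge 2l$ get used, together with a small amount of divisibility padding around $w_c$), and verifying in the $p$-I variant that ``free-direction'' moves — deleting a padding red neighbour at a clause node, or adding a padding blue neighbour at a variable node — are provably unused in an optimal solution, so that the counting identities close exactly as in the original. Everything else is a mechanical translation of the proof of \Cref{thm:MIEp<12} with the two colours interchanged and addition interchanged with removal.
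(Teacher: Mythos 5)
Your high-level plan matches the paper's: write $p=\frac{l}{l+m}$ with $l\ge3$, $l>m\ge1$, $\gcd(l,m)=1$; reduce from XSAT for $l$-CNF$_+^l$; build a graph whose red variable nodes sit exactly at the $p$-threshold and whose blue clause nodes sit one notch below; encode variable selection as $l+m$ deletions per chosen variable; then rerun the counting argument. But there is a concrete flaw in the ``mechanical colour/operation translation.'' You state that the clause nodes form an independent set, carrying this over from \Cref{thm:MIEp<12}. There the clause nodes are \emph{red}, so independence is harmless: adding a red--red edge never helps (\Cref{noRRedges}). Here the clause nodes are \emph{blue}, and an independent blue set invites the shortcut of repairing all clause nodes by adding blue--blue edges among themselves, bypassing the variable gadget entirely. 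For instance, take $p=\tfrac35$ (so $l=3$, $m=2$): each clause node needs $\lceil l/m\rceil=2$ extra blue neighbours, so an $n$-cycle of new blue--blue edges through the $n$ clause nodes is a valid $p$-I solution of cost $n$, strictly below the budget $\frac{n}{l}(l+m)=\frac{5n}{3}$, and it works whether or not the XSAT instance is satisfiable --- so the $p$-I direction of your reduction collapses.

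The paper blocks exactly this escape by putting $B_C - b_c$ into each clause node's neighbourhood, making $B_C$ a clique; under the colour swap the clique/independent-set roles of the two gadget classes must therefore also be swapped, not copied verbatim. The $p$-IR direction alone, where no additions are permitted, might survive your construction, but the theorem requires $p$-I as well, so this is a substantive gap. Your other departure --- giving each $r_x$ its own private set of $m$ red dummies rather than the paper's shared set $R_E$ of size $a$ adjacent to all of $R_V$ --- is cosmetic by comparison, though you would still need to give those private dummies blue neighbours of their own so that they are not under illusion in the original graph.
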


\begin{proof}
    First, $p = \frac{l}{b} > \frac12$ for $l,b \in \mathbb{N} \implies 2l > b$, therefore we can write $p = \frac{l}{b} = \frac{l}{a+l}$ for $l \geq 3, l> a, gcd(l,a) =1$.
    
    Consider an instance of XSAT for $l$-CNF$_+^l$ with $n$ variables and $n$ clauses. We build an auxiliary graph $G = (B\cup R, E)$  such that there is a solution to this XSAT for $l$-CNF$_+^l$ if and only if the $p$-I problem can be solved by removing $\frac{n}{l}\cdot (l+a)$ edges and adding no edges. Consequently, this also works as a reduction to $p$-IR. We design the graph $G$ to have nodes of seven types: blue variable nodes $B_V$, where $|B_V|=n$, and red clause nodes $R_C$, where $|R_C| = n$, as well as Dummy blue nodes $B_D^C$ for each clause $C$, where $|B_D^C| = ln-(n-1)$, Equalizing blue nodes $B_E$, where $|B_E| = l-1$, Dummy red nodes $R_D^C$ for each clause $C$, where $|R_D^C| = an-(l-1)$, Equalizing red nodes $R_E$, where $|R_E| = a$, and extra blue nodes $B'$, where $|B'| = n^4$. $G$ will have the following properties:

    \begin{itemize}
    \item {\tt Variable nodes}: The neighbourhood of $r_x$ for all $r_x \in R_V$ is $ R_E \cup \{b_c: b_c \in B_C, x \in C \text{ in the XSAT problem} \}$. Note that $r(r_x)=\frac{a}{l}b(r_x)$ for all $r_x \in R_V$.
    \item {\tt Clause nodes}: The neighbourhood of $b_c$ for all $b_c \in B_C$ is $ B_D^c \cup R_D^c \cup \{r_x: r_x \in R_V, x \in C \text{ in the XSAT problem} \} \cup B_C - b_c$. Note that   $r(b_c) = \frac{a}{l}\cdot b(b_c)+1$ for all $b_c \in B_C$.
    \item {\tt Dummy blue nodes}: The neighbourhood of $b_d$ for all $b_d \in B_D^c$ is $B' \cup \{b_c\}$. Note no node of $B_D$ is under illusion.
    \item {\tt Dummy red nodes}: The neighbourhood of $r_d$ for all $r_d \in R_D^c$ is $B_E \cup R_E \cup \{b_c\}$. Note  $r(r_d)=\frac{a}{l}b(r_d)$ for all $r_d \in R_D$.
    \item {\tt Equalizing blue nodes}: The neighbourhood of $b_e$ for all $b_e \in B_E$ is $R_D \cup B'$. Note no node of $B_E$ is under illusion.
    \item {\tt Equalizing red nodes}: The neighbourhood of $r_e$ for all $r_e \in R_E$ is $B' \cup R_V \cup \bigcup_{c\text{ is a clause}}R_D^c $. Note no node of $R_E$ is under illusion.
     \item {\tt Extra blue nodes}: The neighbourhood of $b'$ for all $b' \in B'$ is $B' \cup R_E \cup B_E$. Note no node of $B'$ is under illusion.
\end{itemize}

With $G$ constructed, the reader can see that its construction is analogous to that in \Cref{thm:MIEp<12}, and a nearly identical argument can be applied to complete the proof.
\end{proof}

\begin{thm}
    The $p$-IA problem is NP-hard for  $p\in \mathbb{Q}, \frac12 < p < 1, p \not=\frac23$.
\end{thm}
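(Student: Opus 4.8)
The plan is to replay the reduction behind Theorem~\ref{thm:MIEp<12}, but to engineer a graph whose cheapest illusion-eliminating modification consists purely of edge \emph{additions}, so that the very same instance certifies hardness of $p$-IA and not merely of $p$-I. Write $p=\frac{l}{a+l}$ in lowest terms; since $\frac12<p<1$ this forces $l>a\ge 1$ and $\gcd(a,l)=1$, and $l$ is just the numerator of $p$, which is $\ge 3$ unless $p=\frac23$ (the excluded value). I would take a blue \emph{variable node} $v_x$ for each variable, a red \emph{clause node} $w_c$ for each clause sitting one blue neighbour short of the $p$-threshold, and blue/red \emph{dummy} and \emph{equalizing} nodes whose sizes are chosen exactly as in the proof of Theorem~\ref{thm:MIEp<12} but with the roles of $a$ and $l$ interchanged, so that: every $v_x$ sits at the threshold with a dummy blue pool $B_D$ as its only blue non-neighbours; every $w_c$ has only the variable nodes of its own clause as blue non-neighbours; and all dummy and equalizing nodes remain strictly above threshold under any relevant modification, so that no edge touching them is ever useful.

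The forward direction should be routine: from a satisfying assignment, for each true variable $x$ add the edges $v_xw_c$ over the clauses $c\ni x$ — every clause then gains its single missing blue neighbour, since it contains exactly one true variable — and then add the prescribed number of compensating $v_x$--$B_D$ edges to restore $v_x$ to the threshold; this is a pure-addition certificate of the target size. For the converse I would rerun the accounting of \eqref{eq:MIE-e1}--\eqref{eq:MIE-e7} almost verbatim: adding red--red edges and deleting blue--blue edges are never helpful; letting $j_x$ count the clause nodes newly joined to $v_x$, every clause node gets exactly one new blue neighbour so $\sum_x j_x=n$, each $v_x$ then needs at least $\lceil\frac{l}{a}j_x\rceil$ new blue neighbours, and no edges are deleted; summing these quantities and matching the budget forces a divisibility condition on each $j_x$, and the variables attaining the extreme value of $j_x$ will form a satisfying XSAT assignment.

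The step I expect to fight with is precisely that divisibility. For $p<\frac12$ the compensation ratio is $\frac{a}{l}<1$, so $\lceil\frac{a}{l}j\rceil>\frac{a}{l}j$ for \emph{every} $0<j<l$ and a tight budget automatically pins $j_x\in\{0,l\}$; for $p>\frac12$ the ratio $\frac{l}{a}$ exceeds $1$ and $\lceil\frac{l}{a}j\rceil=\frac{l}{a}j$ whenever $a\mid j$, so the ``clean'' values of $j_x$ are multiples of $a$ rather than of the clause arity. When $a\ge 3$ I would reduce instead from XSAT for $a$-CNF$_+^a$ (NP-complete exactly because $a\ge 3$): each variable then lies in $a$ clauses, $0\le j_x\le a$, and $\gcd(a,l)=1$ forces $j_x\in\{0,a\}$, closing the count with budget $K=\frac{n}{a}(a+l)$. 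The remaining values $a\in\{1,2\}$ — that is, $p=\frac{l}{l+1}$ and $p=\frac{l}{l+2}$ with $l$ odd — are the genuinely delicate ones, because the rounding gap is too feeble there (a bare clause-to-variable matching already meets the budget without solving XSAT); for these I would graft onto each variable gadget a constant-size \emph{penalty block} — a handful of nodes, under illusion, whose repair is feasible if and only if $v_x$ receives either none or the full quota of compensating blue edges — thereby re-imposing $j_x\in\{0,l\}$ and letting the argument proceed with a reduction from XSAT for $l$-CNF$_+^l$ and budget $K=\frac{n}{l}(l+a)$. Verifying that such penalty blocks cannot be shortcut, and that no deletion or red--blue addition ever undercuts the budget, is the technical heart; the rest is bookkeeping in the spirit of the proof of Theorem~\ref{thm:MIEp<12}.
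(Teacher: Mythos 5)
Your construction reverses the colour of the clause nodes relative to the paper: you keep the red clause nodes of the $p<\tfrac12$ reduction, so an edge from a variable node to a clause node is a red--blue addition that helps the clause but hurts the variable, and the compensation ratio $\tfrac{l}{a}>1$ then makes the clean values of $j_x$ multiples of $a$. You correctly notice that this forces a reduction from XSAT for $a$-CNF$_+^a$ rather than $l$-CNF$_+^l$, which only works when $a\geq 3$. That leaves a genuine gap for $a\in\{1,2\}$, i.e.\ for the infinite families $p=\tfrac{l}{l+1}$ and $p=\tfrac{l}{l+2}$ ($l$ odd) -- and your proposed ``penalty block'' repair is not worked out. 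I don't see how to build, in an addition-only model, a constant-size gadget whose feasibility genuinely depends on whether $v_x$ received $0$ or exactly $l$ compensating edges: additions made at $v_x$ do not change the neighbourhood of any separate gadget node, so there is no mechanism for the gadget to ``observe'' $v_x$'s compensation count, and any attempt to encode it via shared neighbours risks giving the adversary a cheaper repair that bypasses the block. You flag this as the technical heart, but it is precisely the step that is missing.

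The paper avoids this entirely by making the clause nodes \emph{blue}. A variable-to-clause edge is then a harmless blue--blue addition (it helps both endpoints), and the forcing comes instead from $a$ red equalizing nodes $R_E$, each needing $\tfrac{n}{l}$ new blue neighbours whose only available targets are the variable nodes $B_V$. Each such $R_E$--$B_V$ edge pushes the variable below threshold, demanding $\bigl\lceil \tfrac{l}{a} j_x\bigr\rceil$ compensating blue edges that the variable can only place among $B_D$, $B_E$, and its $l$ clause nodes. Because each variable appears in exactly $l$ clauses, a tight budget forces all $\bigl\lceil \tfrac{l}{a} j_x\bigr\rceil$ compensations to land on distinct clause nodes, which automatically caps $\tfrac{l}{a} j_x \leq l$, i.e.\ $j_x \leq a$; combined with the rounding/divisibility forcing $a\mid j_x$, this gives $j_x\in\{0,a\}$ uniformly, with no case split on $a$ and with the XSAT arity $l$ matching the numerator of $p$ throughout. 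So the two constructions really are different; yours has a clean core for $a\geq 3$ but does not cover all the $p$ in the statement, while the paper's choice of colours makes the bound $j_x\leq a$ fall out of the graph structure for every $a\geq 1$.
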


\begin{proof}
    First, $p = \frac{l}{b} > \frac12$ for $l,b \in \mathbb{N} \implies 2l > b$, therefore we can write $p = \frac{l}{b} = \frac{l}{a+l}$ for $l \geq 3, l> a, gcd(l,a) =1$.
    
    Consider an instance of XSAT for $l$-CNF$_+^l$ with $n$ variables and $n$ clauses, where $n \geq 2l$. We build an auxiliary graph $G = (B\cup R, E)$  such that there is a solution to this XSAT for $l$-CNF$_+^l$ if and only if the $p$-IA problem can be solved by adding $\frac{n}{l}\cdot (l+a)$ edges. We design the graph $G$ to have nodes of eight types: Dummy blue nodes $B_D$, where $|B_D| = ln^2-(l-1)n$, Dummy red nodes $R_D$, where $|R_D| = an^2-a(n-2)+\frac{n}{l}$, Equalizing red nodes $R_E$, where $|R_E| = a$, Equalizing blue nodes $B_E$, where $|B_E| = l-1$, blue variable nodes $B_V$, where $|B_V|=n$, blue clause nodes $B_C$, where $|B_C| = n$, extra red nodes $R'$, where $|R'| = a(n-2)$, and extra blue nodes $B'$, where $|B'| = (l-2)n-(l-1)$. $G$ will have the following properties:

    \begin{itemize}
    \item {\tt Variable nodes}: The neighbourhood of $b_x$ for all $b_x \in B_V$ is $B'\cup R' \cup \{b_c: b_c \in B_C, x \not\in C \text{ in the XSAT problem} \} \cup (B_V -b_x)$. Note that $b(b_x)= l(n-2) = \frac{l}{a}a(n-2) = \frac{l}{a}r(b_x)$ for all $b_x \in B_V$.
    \item {\tt Clause nodes}: The neighbourhood of $b_c$ for all $b_c \in R_C$ is $B'\cup R' \cup R_E \cup B_E \cup \{b_x: b_x \in B_V, x \not\in C \text{ in the XSAT problem} \} \cup B_C- b_c$. Note that   $b(b_c)= l(n-1)-1 = \frac{l}{a}a(n-1)-1 = \frac{l}{a}r(b_c)-1$ for all $b_c \in B_C$.
    \item {\tt Dummy blue nodes}: The neighbourhood of $b_d$ for all $b_d \in B_D$ is $R_E \cup B_E\cup R' \cup B' \cup B_D - b_d$. Note that no node of $B_D$ is under illusion.
    \item {\tt Dummy red nodes}: The neighbourhood of $r_d$ for all $r_d \in R_D$ is $R_E \cup B'$. Note no node of $R_D$ is under illusion.
    \item {\tt Equalizing blue nodes}: The neighbourhood of $b_e$ for all $b_e \in B_E$ is $B_D \cup B_C \cup B'$. Note no node of $B_E$ is under illusion.
    \item {\tt Equalizing red nodes}: The neighbourhood of $r_e$ for all $r_e \in R_E$ is $R' \cup B' \cup R_D \cup B_D \cup B_C \cup B_E$. Note each node $r_e$ requires $\frac{n}{l}$ new blue neighbours to correct its illusion.
    \item {\tt Extra blue nodes}: The neighbourhood of $b'$ for all $b' \in B'$ is $B_V \cup R_D \cup R_E \cup B_D \cup B_C \cup B_E \cup B' - b'$. Note no node of $B'$ is under illusion.
    \item {\tt Extra red nodes}: The neighbourhood of $r'$ for all $r' \in R'$ is $B_V \cup R_E \cup B_D \cup B_C \cup B_E$. Note no node of $R'$ is under illusion.
\end{itemize}
With $G$ constructed, the reader can see that any solution to $p$-IA on $G$ must add $\frac{an}{l}$ edges to correct the deficit of the red nodes. Given this, the proof that a solution using a total of $\frac{n(l+a)}{l}$ edges exists if and only if there is a solution to the X-SAT problem follows directly from the proof of  \Cref{thm:MIEp<12}.
\end{proof}

\begin{thm}
    The $p$-IR problem is NP-hard for  $p\in \mathbb{Q}, 0 < p< \frac12, p \not=\frac13$.
\end{thm}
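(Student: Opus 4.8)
\noindent
The plan is to give a polynomial reduction from XSAT for $l$-CNF$_+^l$ ($l \geq 3$), obtained from the construction in the proof of Theorem~\ref{thm:MIEp<12} by interchanging the roles of edge addition and edge removal. As there, write $p$ as an irreducible fraction; since $0 < p < \tfrac12$ and $p \neq \tfrac13$ we may write $p = \tfrac{a}{a+l}$ with $a,l \in \mathbb{N}$ coprime, $l > a$ and $l \geq 3$ (the value $\tfrac13$ is the unique excluded fraction below $\tfrac12$, precisely because writing it this way would force $l = 2$). Fix an XSAT instance with $n$ variables and $n$ clauses, $n \geq 2l$ and $n/l \in \mathbb{N}$ (otherwise the instance is trivially a NO instance).

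I would build a polynomial-size graph $G = (B \cup R, E)$ with $n$ variable nodes and $n$ clause nodes together with the usual dummy and equalizing classes, but now coloured and sized so that the deficient vertices are repaired by deletions rather than additions. The key structural point, dual to Theorem~\ref{thm:MIEp<12}, is that a vertex with slightly too few blue neighbours can only be helped by deleting a red neighbour, and deleting an edge between a blue vertex and a red vertex strictly hurts the red endpoint. Accordingly I would make the clause nodes blue and just barely under illusion, their cheap repair being the deletion of a prescribed (after calibration, $\lceil \tfrac{l}{a}\rceil$-sized, or single) set of red neighbours, and make the variable nodes red and sitting exactly at the $p$-threshold, so that every clause-to-variable deletion pushes a variable node into deficit, forcing $\tfrac{a}{l}$-proportionally many further deletions of red equalizing neighbours. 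Every dummy and equalizing class would be sized so that its members keep at least a $p$-fraction of blue neighbours throughout, so that the only nontrivial interaction is again between clause nodes and variable nodes, and deleting the clause--variable edge for variable $x$ in clause $c$ plays exactly the role that adding that edge played in Theorem~\ref{thm:MIEp<12}.

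Correctness then mirrors Theorem~\ref{thm:MIEp<12} essentially line for line: a satisfying XSAT assignment, with its $n/l$ TRUE variables each appearing in $l$ clauses, yields a deletion set of the target size $K$ fixed by the construction (delete the $l$ clause edges at each TRUE variable node and then repair each with the appropriate number of red deletions), and conversely one runs the same floor-function counting as in Equations~\ref{eq:MIE-e1}--\ref{eq:MIE-e7}, using $\gcd(a,l)=1$ to force every per-variable deletion total to be $0$ or a multiple of $a$ and hence, under the budget $K$, to force an exact-one-per-clause pattern of TRUE variables. Membership of $p$-IR in NP is immediate, so $p$-IR is NP-complete. The one genuinely new step, and the step I expect to be the main obstacle, is the gadget design itself: under edge addition the deficit of a chosen variable node appeared automatically because it gained $l$ red clause neighbours, whereas under edge removal one must instead place the variable nodes exactly on the $p$-boundary with a red side padded so that their repair costs precisely the right $\tfrac{a}{l}$-proportional amount, and one must choose all the dummy and equalizing class sizes so that no such node is ever collateral damage and so that the budget $K$ comes out as a clean function of $n$, $a$ and $l$. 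Once these sizes are pinned down, the remainder of the argument is a routine transcription of the proof of Theorem~\ref{thm:MIEp<12}.
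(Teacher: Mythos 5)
Your proposal identifies the right high-level strategy --- reduce from XSAT for $l$-CNF$_+^l$, build clause and variable gadgets so that a satisfying assignment corresponds to a deletion set of prescribed budget $\frac{n}{l}(a+l)$, and then rerun the floor-function counting argument from Theorem~\ref{thm:MIEp<12} --- but it is not a proof. The entire technical content of such a reduction is the explicit gadget: the node classes, their sizes, and their adjacencies, chosen so that (i) exactly the clause nodes start in deficit, (ii) the only cheap repairs interact with variable nodes, (iii) no dummy or equalizing node is ever pushed under the $p$-threshold as collateral damage, and (iv) the budget comes out to exactly $\frac{n}{l}(a+l)$. You explicitly defer all of this, calling the gadget design "the step I expect to be the main obstacle." Without those numbers none of the four conditions can be checked, and in particular the converse direction (that any feasible deletion set of the target size forces a one-true-variable-per-clause pattern) remains unestablished. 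That is a genuine gap, not a routine transcription.

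Where you do commit to structure, you also diverge from the paper's construction in a way that is not merely cosmetic. You propose to make clause nodes blue and variable nodes red, so that deleting a clause--variable edge repairs the blue clause while stripping a blue neighbour from the red variable, pushing it below the $p$-threshold --- a direct colour-and-operation-flipped dual of the addition reduction. The paper instead keeps both clause nodes and variable nodes red ($R_C$ and $R_V$): each clause node's deficit is created by a single extra dummy red neighbour rather than by undersupply of blue, variable nodes sit exactly at the threshold, and removing a clause--variable edge helps both red endpoints; the budget pressure that forces the XSAT structure is carried by a separate class of equalizing blue nodes $B_E$, each of which must shed $\frac{n}{l}$ red neighbours, together with padded classes $B_D$, $R_D$, $R_E$, $B'$, $R'$. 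These two designs produce different cascades and different constraints on the auxiliary class sizes, so you cannot assume the remainder transcribes line for line from Theorem~\ref{thm:MIEp<12}: the calibration you flagged as the open obstacle is exactly where the two approaches would have to be reconciled, and it is left entirely undone.
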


\begin{proof}
    First, $p = \frac{a}{b} < \frac12$ for $a,b \in \mathbb{N} \implies 2a <  b$, therefore we can write $p = \frac{a}{b} = \frac{a}{a+l}$ for $l \geq 3, l> a, gcd(l,a) =1$.
    
    Consider an instance of XSAT for $l$-CNF$_+^l$ with $n$ variables and $n$ clauses. We build an auxiliary graph $G = (B\cup R, E)$  such that there is a solution to this XSAT for $l$-CNF$_+^l$ if and only if the $p$-IR problem can be solved by removing $\frac{n}{l}\cdot (l+a)$ edges. We design the graph $G$ to have nodes of eight types: Dummy blue nodes $B_D$, where $|B_D| = a$ Dummy red nodes $R_D$, where $|R_D| = 1$, Equalizing red nodes $R_E$, where $|R_E| = l$, Equalizing blue nodes $B_E$, where $|B_E| = a$, red variable nodes $R_V$, where $|R_V|=n$, red clause nodes $R_C$, where $|R_C| = n$, extra red nodes $R'$, where $|R'| = \frac{n}{l}$, and extra blue nodes $B'$, where $|B'| = n$. $G$ will have the following properties:

    \begin{itemize}
    \item {\tt Variable nodes}: The neighbourhood of $r_x$ for all $r_x \in R_V$ is $B_E \cup \{r_c: r_c \in R_C, x \in C \text{ in the XSAT problem} \}$. Note that $b(r_x)= a = \frac{a}{l}a = \frac{a}{l}r(r_x)$ for all $r_x \in R_V$.
    \item {\tt Clause nodes}: The neighbourhood of $r_c$ for all $r_c \in R_C$ is $B_D \cup \{r_x: r_x \in R_V, x \in C \text{ in the XSAT problem} \}$. Note that $r(r_c)= l+1 = \frac{l}{a}a+1 = \frac{l}{a}b(r_c)$ for all $r_c \in R_C$.
    \item {\tt Dummy blue nodes}: The neighbourhood of $b_d$ for all $b_d \in B_D$ is $R_C \cup B'$. Note that no node of $B_D$ is under illusion.
    \item {\tt Dummy red nodes}: The neighbourhood of $r_d$ for all $r_d \in R_D$ is $R_C \cup B'$. Note no node of $R_D$ is under illusion.
    \item {\tt Equalizing blue nodes}: The neighbourhood of $b_e$ for all $b_e \in B_E$ is $R' \cup R_V \cup B'$. Note each node $b_e$ requires $\frac{n}{l}$ fewer red neighbours to correct its illusion.
    \item {\tt Equalizing red nodes}: The neighbourhood of $r_e$ for all $r_e \in R_E$ is $R' \cup B'$.  Note no node of $R_E$ is under illusion.
    \item {\tt Extra blue nodes}: The neighbourhood of $b'$ for all $b' \in B'$ is $R_E \cup B_D \cup R_D \cup B_E \cup B' - b'$. Note no node of $B'$ is under illusion.
    \item {\tt Extra red nodes}: The neighbourhood of $r'$ for all $r' \in R'$ is $B_E \cup R_E$. Note $b(r')= a = \frac{a}{l}l = \frac{a}{l}r(r')$.
\end{itemize}
With $G$ constructed, the reader can see that any solution to $p$-IR on $G$ must add $\frac{an}{l}$ edges to correct the deficit of the red nodes. Given this, the proof that a solution using a total of $\frac{n(l+a)}{l}$ edges exists if and only if there is a solution to the X-SAT problem follows directly from the proof of  \Cref{thm:MIEp<12}.
\end{proof}

\section{Conclusion}\label{sec:conc}

We've settled the \textsc{$q$-Illusion Elimination} problems studied in \cite{Grandi2023} for the special $q=0$ case. While the previous paper showed hardness for $q=(0,1)$ and a trivial solution for $q=1$, we present here the surprising result that fully determining and eliminating illusion can be solved in polynomial time. The three problems, MIAE, MIRE, and MIE, have simple linear programs with integer solutions. 

We also define a natural extension to all three problems. For any $p \in [0,1]$, the $p$-IA, $p$-IR, and $p$-I problems ask if it is possible to alter a network to ensure the neighbourhood of every node has a $p$-fraction of blue nodes by adding edges, removing edges, or both respectively. These problems reduce to MIAE, MIRE and MIE when $p=\frac12$, and just as in those problems, we ask that the total number of blue nodes be such that the problems are solvable regardless of the initial network given. We show that these problems are hard for $p \in (0,\frac13)\cup (\frac13,\frac12)\cup (\frac12,\frac23)\cup (\frac23,1)$. 

In terms of future work, our $p$-IA, $p$-IR, and $p$-I problems have two $p$-values for which there is no polynomial time algorithm nor a hardness result: $p = \frac13$ and $p=\frac23$. Interestingly, if we add edge weights to the linear programs used to solve MIAE, MIRE and MIE, we can create linear programs to solve the $p$-IA, $p$-IR, and $p$-I problems. All values of $p$ except for $p = 0,\frac12, 1$ result in a program that is no longer totally dual integral. However, $p=\frac13,\frac23$ are the only other values for which the program is half integral. Therefore, it is entirely possible that for $p=\frac13,\frac23$, these problems are solvable in polynomial time, which would be an interesting result.

\bibliography{main}
\end{document}